\documentclass{amsart}
\usepackage{amsthm}
\usepackage{amssymb}
\usepackage{amsaddr}
\usepackage{thmtools}
\usepackage[capitalise]{cleveref}
\usepackage{relsize}

\AtBeginDocument{%
  }

\newcommand{\set}[1]{{\left\{ {#1} \right\} }}
\newcommand{\m}[1]{{\uppercase {\bf{#1}}}}
\newcommand{\con}[1]{{\sf Con\:\m{#1}}}

\newcommand{\tn} [1]{{\mathbf{#1}}}
\newcommand{\typ}{{\rm typ}}
\newcommand{\typset}[1]{\typ\set{#1}}
\newcommand{\pol}[1]{{\rm Pol\:\m #1}}
\newcommand{\poln}[2]{{\rm Pol_{#1}\:\m #2}}
\newcommand{\po}[1]{{\mathbf{#1}}}
\newcommand{\ar}[1]{\text{ar(}#1\text{)}}

\newtheorem{theorem}{Theorem}
\newtheorem{lemma}[theorem]{Lemma}
\newtheorem{definition}[theorem]{Definition}
\newtheorem{corollary}[theorem]{Corollary}
\newtheorem{observation}[theorem]{Observation}
\newtheorem{fact}[theorem]{Fact}
\crefname{fact}{Fact}{Facts}
\newtheorem{open}{Open Problem}
\newtheorem{example}[theorem]{Example}

\newcommand{\ptime}{\textsf{P}}
\newcommand{\nptime}{\mathsf{NP}}

\newcommand{\ov}[1]{\overline{#1}}

\newcommand{\ppoly}{\mathsf{P}/poly}
\newcommand{\palg}[1]{nu\mathsf{P}_{#1}}

\newcommand{\progb}[3]{\left(#1\right)\!\left[#2,#3\right]}
\newcommand{\prog}[2]{\left(#1\right)\!\left[#2\right]}

\newcommand{\MODG}[1]{\textsc{MOD}_{#1}}
\newcommand{\ANDG}[1]{\textsc{AND}_{#1}}
\newcommand{\ORG}[1]{\textsc{OR}_{#1}}
\newcommand{\PROGG}[1]{\textsc{PROG}_{#1}}
\newcommand{\BABPG}[1]{\textsc{GABP}_{#1}}
\newcommand{\BOUNDG}[1]{\textsc{BAR}_{#1}}
\newcommand{\circuittwo}[2]{#1 \circ #2}
\newcommand{\circuit}[3]{#1 \circ #2 \circ #3}
\newcommand{\circuitfour}[4]{#1 \circ #2 \circ #3 \circ #4}
\newcommand{\BOUNDA}[1]{\textsc{BAR}_{#1}}
\newcommand{\MON}{\textsc{MONOTONE}}
\newcommand{\MMON}{\textsc{MULTIMONOTONE}}
\newcommand*{\defeq}{\stackrel{\mathsmaller{\mathsf{def}}}{=}}
\newcommand{\nudet}[1]{\mathsf{nuDet}_{#1}}
\newcommand{\nuitdet}{\mathsf{nuItDet}}

\RequirePackage[
  datamodel=acmdatamodel,
  style=acmnumeric,
  ]{biblatex}

\begin{document}

\title{Complexity Classes Arising from Circuits over Finite Algebraic Structures}

\author{Piotr Kawałek}
\email{piotr.kawalek@tuwien.ac.at}
\address{Institute of Discrete Mathematics and Geometry,\\ TU Wien,\\ Vienna, Austria}
\thanks{Piotr Kawałek: This research was funded in whole or in part by National Science Centre, Poland \#2021/41/N/ST6/03907. For the purpose of Open Access, the author has applied a CC-BY public copyright licence to
 any Author Accepted Manuscript (AAM) version arising from this submission.
Funded by the European Union (ERC, POCOCOP, 101071674). Views
 and opinions expressed are however those of the author(s) only and do not necessarily reflect those
 of the European Union or the European Research Council Executive Agency. Neither the European
 Union nor the granting authority can be held responsible for them.}
%\orcid{0000-0003-3592-1697}
%\affiliation{%
 % \institution{Institute of Discrete Mathematics and Geometry, TU Wien}
 % \city{Vienna}
 % \country{Austria}
%}

\author{Jacek Krzaczkowski}
\email{krzacz@umcs.pl}
\address{Department of Computer Science and Mathematics,\\ Maria Curie-Sk\l{}odowska University,\\ Lublin, Poland}
\thanks{Jacek Krzaczkowski: This research was funded in whole or in part by National Science Centre, Poland \#2022/45/B/ST6/02229. For the purpose of Open Access, the author has applied a CC-BY public copyright licence to any Author Accepted Manuscript (AAM) version arising from this submission.}

%\orcidlink{0000-0003-2861-1156}
%\affiliation{\institution{Department of Computer Science and Mathematics,\\ Maria Curie-Sk\l{}odowska University}
%\city{Lublin}
%\country{Poland}}

\begin{abstract}
Most classical results in circuit complexity theory concern circuits over the Boolean domain. Besides their simplicity and the ease of comparing different languages, the actual architecture of computers is also an important motivating factor. On the other hand, by restricting attention to Boolean circuits, we lose sight of the much richer landscape of circuits over larger domains. Our goal is to bridge these two worlds: to use deep algebraic tools to obtain results in computational complexity theory, including circuit complexity, and to apply results from computational complexity to gain a better understanding of the structure of finite algebras.

In this paper, we propose a unifying algebraic framework which we believe will help achieve this goal. Our work is inspired by branching programs and nonuniform deterministic automata introduced by Barrington, as well as by their generalization proposed by Idziak et al. We begin our investigation by studying the languages recognized by natural classes of algebraic structures. In particular, we characterize language classes recognized by circuits over simple algebras and over algebras from congruence modular varieties.
\end{abstract}

\keywords{Circuit Complexity, Universal Algebra, algebraic branching programs}

\maketitle

\section{Introduction}
It is the fact that algebraic insights can lead to breakthroughs in computational complexity theory. Let us only mention a few examples: polynomials over finite fields and their usefulness in establishing $\mathsf{IP} = \mathsf{PSPACE}$ \cite{lund1992algebraic, shamir1992ip}; modelling program verification with  a Gr\"obner basis problems \cite{sankaranarayanan2004non, rodriguez2004automatic}; the group-theoretic proof of Barrington that languages from $NC^1$ can be recognized by bounded-width branching programs \cite{Barrington86}; currently explored machine learning algorithms heavily based on linear algebraic concepts \cite{lecun2015deep, vaswani2017attention}. One could enumerate so many more brilliant applications of algebra. There seem to be only a thin boundary between the world of Turing Machines and the world of Algebra.

Perhaps the most demonstrative example of that in the past decade was the P vs. NP-complete dichotomy theorem for  finite Constraint Satisfaction Problems (CSPs) \cite{bulatov2017dichotomy, zhuk2020proof}, which model a large class of natural algorithmic  problems. In the theory of CSP the key idea was to study not only the combinatorial nature of the constraints which define the problem but rather the interaction between the combinatorics and the symmetries of the constraints. These symmetries, called polymorphisms, are multiary functions which are closed under compositions and preserve the constraints. They form a structure called a termal clone. The field of Universal Algebra, which studies termal clones for about a century now, is what enabled this big CSP research project.

The other direction also holds and  the computational complexity notions can help to understand some of the algebraic concepts deeper. A prime example of that are Nonuniform Deterministic Finite Automata studied by Barrington, Straubing and Th\'erien \cite{barrington1990non}, where the expressive power of polynomial-size expressions over finite groups are linked to circuit complexity classes. Here DFA appear because, due to associativity, one may interpret these nonuniform expressions as acting on the state set of the finite automaton.
For instance, one can solve arbitrary problems in nonuniform $NC^1$ (including all the problems with very effective parallel algorithms) with polynomial-size expressions over any nonsolvable group. On the contrary, solvable groups seem to have much smaller computational power as expressions over such groups can only solve problems that are also solvable by bounded-depth circuits (in the class $CC^0$). Nilpotent groups provably have very small expressive power, as they essentially can only evaluate bounded-degree polynomials. Similar research was done for monoids, where expressions over solvable monoid seem to have more power than the ones of solvable groups \cite{barrington1988finite} (they solve problems in the class $ACC^0$).

Later it was spotted that the insights from \cite{barrington1990non} can be used to obtain polynomial or quasipolynomial time algorithms for the problem of solving equations in solvable groups/monoids \cite{barrington2000equation, GoldmannR02}. The problems of solving equations  was then considered in a much more general context of finite algebras \cite{IdziakK22}, which means finite universes with arbitrary finite sets of operations. Note that the class of these problems is very reach as it generalizes class of all finite CSPs (see \cite{IdziakK22}). The research was especially successful for the algebras from  so-called congruence modular varieties. This algebras generalize most of the structures studied in classical algebra, such as groups, rings, Boolean Algebra, Heyting Algebras, Lie Algebras, lattices, latin matrices and many more.
Recently some assumptions about circuit lower bounds have proven their usefulness in some variants of satisfiability problems for groups \cite{IdziakKKW22-icalp} as well as general algebras \cite{IdziakKK25} to provide randomized polynomial time algorithms.

In this paper we want to continue to study the connection between expressive power of circuits over finite algebras and classical Boolean circuit complexity classes. To define this correspondence correctly, we define how this multivalued circuits will be used to recognize languages. Say we have some finite universe $A$ and some set of operations $f_1, \ldots, f_k$ over $A$ that we will use as gates while building our circuit. This is our algebraic structure $\m A$.  Variables as well as constants from $A$ can appear on the input level of such circuits. This connects our consideration to so called polynomial clone of $\m A$, that is composition closed set of operations that contains constants from the universe. Let us say we have a language $L \subseteq \Sigma^*$ we want to recognize. Then we need a way to project the alphabet $\Sigma$ to the universe of our algebra, we need a way to interpret the result (we need the accepting set $S \subseteq A$) and because there are many possible input lengths we need not only one circuit, but a sequence of circuits over $\m A$ (that is why this model of computation is called nonuniform).

\begin{definition}\label{def-nudfa}
Non-uniform Deterministic Finite Automaton (NuDFA) over a finite algebra $\m A$ consists of:
\begin{itemize}
    \item a sequence $\set{\po t}_{i=1}^{\infty}$ of circuits over $\m A$ such that circuit $\po t_n$ is $n$-ary.
     \item function $\iota:\Sigma\mapsto A$,
    \item set $S\subseteq A$,
\end{itemize}
\end{definition}
\noindent A word $b_1\ldots b_n\in\Sigma^n$ gets accepted by such a NuDFA if \\
 $\po t_n(\iota(b_1),\ldots,\iota(b_n))\in S$.  A triple $(\po t, \iota, S)$ where $\po t$ is an $n$-ary circuit is called an $n$-ary program (in correspondence to straight-line programs). We will write $\po t[\iota](\overline{b})$ to denote $\po t(\iota(b_1),\ldots,\iota(b_n))$ and $\po t[\iota, S](\overline{b})$  as a characteristic  function of language recognized by program $(t,\iota,S)$. We say that NuDFA $(\set{\po t}_{i=1}^{\infty},\iota, S)$ is of polynomial size if there is polynomial $w$ such that size of $t_i$ is $O(w(i))$ for every $i$. The set of languages over binary alphabet recognized by polynomial size NuDFA's over $\m A$  we denote as $nu\mathsf{P}_{\m A}$. Note that the class $nu\mathsf{P}_{\m A}$ is closed under taking complement (as we can take the complement of the accepting set $S$) and bitwise complement (as we can take $\iota'(b) = \iota(\neg b)$). 

Note that in \cite{IdziakKK25} the word NuDFA was used only in reference to Barrington, Straubing and Th\'eriens' work, and in fact one should think of NuDFA over algebraic structures as of nonuniform circuit family rather than nonuniform DFA. Also, unlike \cite{IdziakKK25}, we fix $\iota, S$ for the circuit family. It is much simpler, more natural, and it allows us to capture certain complexity classes which would not appear otherwise. In particular monotone circuits can appear in our theorems only due to this change, and they better correspond to the structure of considered algebras.

Our notation $nu\mathsf{P}_{\m A}$ refers to nonuniform models of computation such as $\mathsf{P}/poly$, known as nonuniform $\mathsf{P}$, which in our notation is equal to $\palg{\m A}$ for $\m A = (\{0,1\}, \land, \lor, \neg)$.  $\mathsf{P}/poly$ is defined to be class of languages recognized by a family of polynomial size Boolean circuits using standard $\land, \lor, \neg$ operations. It is the highest possible class $\palg{\m A}$ can reach, as the $\land, \lor, \neg$  can simulate all the other operations on finite set. It was widely hoped in 70s and 80s that studying the class $\mathsf{P}/poly$ was the right way to approach a $\mathsf{P}\text{ vs.\! }\mathsf{NP}$ problem, as if one can prove $\mathsf{NP} \not\subseteq \mathsf{P}/poly$ it implies $\mathsf{P} \neq \mathsf{NP}$. As a natural proof barrier appeared on the scene \cite{razborov1994natural} and the progress slowed down, enthusiasm was a bit toned down, but still $\mathsf{P}/poly$ is considered one of the most important classes in the complexity zoo.

However lower bounds for $\mathsf{P}/poly$ seem to be far out of reach for the current techniques, in some restricted settings full success has been achieved.

\begin{example}\label{ex-lattice}
Let $\m A = (\{0,1\}, \land, \lor)$. Languages recognized by NuDFA's over $\m A$ fulfill one of the two following properties:
\begin{enumerate}
    \item\label{item-ex-first} If $(a_1,\ldots, a_n)\in L$, then for every $(b_1,\ldots, b_n)$ such that $a_i\le b_i$ holds, also $(b_1,\ldots, b_n)\in L$ holds.
    \item If $(a_1,\ldots, a_n)\in L$, then for every $(b_1,\ldots, b_n)$ such that $a_i\ge b_i$ holds, also $(b_1,\ldots, b_n)\in L$ holds.
\end{enumerate}
$nu\mathsf{P}_{(2,\wedge,\vee)}$ is equal to set of languages which characteristic function or its negation can be described by the sequence of  polynomial size circuits build of gates $\wedge$ and $\vee$.
\end{example}
\noindent Note that the study of monotone (in the sense of \cref{{item-ex-first}}) polynomial-size circuits has a big history with applications of some of the techniques in Proof Complexity \cite{krajivcek1994lower,  razborov1995unprovability,  bonet1995lower, pudlak1997lower}. Importantly, there are $\nptime$-complete languages that are monotone. With the right encoding $\mathsf{CLIQUE}$ is such a problem because adding an edge can produce only bigger cliques. By the classical result of Razborov $\mathsf{CLIQUE}$ requires superpolynomial-size monotone circuits \cite{Razborov1985}, which was later improved to show that the required size is actually exponential \cite{alon1987monotone}.  Finally it turns out that some monotone problems in $\ptime$ including perfect matching problem require superpolynomial-size circuits \cite{Razborov1985, tardos1988gap}, and the lower bound for matching was recently claimed to be lifted to exponential \cite{cavalar2025monotone}. So the negation really does a great job at making some of our algorithmic problems easier. All these results also apply to $\palg{\m A}$ from \cref{ex-lattice}.

From Post's theory \cite{Post1941} of algebraic clones on the two element domain one can see that there are only few possible polynomial clones, i.e. Boolean clone, lattice clone, clone generated by $\mathsf{xor}$, semilattice clones, unary clones. In contrast, there are uncountably many of polynomial clones on 3 element domain \cite{AgostonDH86}. Fortunately, we will only deal with countably many of them in this paper, as we consider algebras with finitely many generating operations.

In this vastness of structures, we can expect to find variety of complexity classes arising from them. In this paper we explore this space, on the one hand we study algebras from the congruence modular varieties, for which the tools of Universal Algebra such as Tame Congruence Theory and Commutator Theory work particularly well. We effectively propose a representation theory for them, which allows us to achieve the characterizations of $\palg{\m A}$.  On the other hand, we also study simple algebras for which we achieve almost full description with some interesting open cases (see \cref{sec-simple}). For finite algebras there is no theory of representation similar to the theory of groups or to Krohn-Rhodes Theory for monoids. Our research suggests that such representations might be possible to achieve, perhaps for bigger classes of structures than expected before. 

We will also see that many natural classes studied in computer science appear as $\palg{\m A}$ or as a limit of $\palg{\m A}$ for natural classes of structures. This provides an algebraic common view on the well-known complexity classes and provides yet another perspective on some of the open problems in the computational complexity theory.

\section{Algebraic Closure Properties}
In order to use algebraic tools in our research, the properties of the languages recognized by the automata must be related to the algebraic properties of the algebras used in the automata definition.  We will not go deeply into Universal Algebra here, but it is useful, also for the future proofs, to establish some basic facts here.

Subalgebra is a subset of an algebra closed under operations of this algebra together with the inherited operations. Set of languages recognized by programs over subalgebra is contained in the set of languages recognized by programs over the whole algebra.

\begin{fact}\label{fact-subalgebra}
Let $\m A$ be a finite algebra and let $\m B$ be its subalgebra.

Then, $nu\mathsf{P}_{\m B}\subseteq nu\mathsf{P}_{\m A}$.
\end{fact}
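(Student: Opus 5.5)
The plan is to take an arbitrary language $L \in nu\mathsf{P}_{\m B}$, fix a polynomial-size NuDFA $(\set{\po t}_{i=1}^{\infty},\iota,S)$ over $\m B$ recognizing it, and reuse this very NuDFA over $\m A$ with only cosmetic changes. Concretely, every circuit $\po t_n$ over $\m B$ is built from gates labelled by basic operations of $\m B$ and inputs that are variables or constants from $B$. Since $\m B$ is a subalgebra, its operations are the restrictions to $B$ of the corresponding operations of $\m A$. So I would define $\po t_n'$ as the same directed acyclic graph, with each gate reinterpreted as the corresponding operation of $\m A$ and each constant $b\in B$ read as the same element of $A$. The size of $\po t_n'$ equals the size of $\po t_n$, so the family stays polynomial.

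Next I take $\iota'=\iota$, viewed as a map $\Sigma\to B\subseteq A$, and $S'=S\subseteq B\subseteq A$. The key claim is that for every $\overline{x}\in B^n$ we have $\po t_n'(\overline{x})=\po t_n(\overline{x})$. I would prove this by induction over the gates in topological order. Inputs and constants trivially agree and lie in $B$. At a gate computing $f$, the inductive hypothesis says the arguments lie in $B$ and agree, so $f^{\m A}$ applied to them equals $f^{\m B}$ applied to them. This value is in $B$ because $B$ is closed under $f^{\m A}$.

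Since $\iota'$ takes values in $B$, for every word $\overline{b}$ we get $\po t_n'[\iota'](\overline{b})=\po t_n[\iota](\overline{b})$. This value lies in $S'$ exactly when it lies in $S$, so both NuDFAs accept the same words, and $L\in nu\mathsf{P}_{\m A}$.

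The only point needing care is the bookkeeping about signatures: gates of $\m B$ must correspond to operations of $\m A$. This holds by the definition of subalgebra, which is a subset closed under the operations of $\m A$ carrying the inherited operations. There is no genuine obstacle; the induction in the previous paragraph is the whole content.
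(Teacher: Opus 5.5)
Your proposal is correct and is the argument the paper has in mind. The paper gives no separate proof; it only remarks that programs over a subalgebra are programs over the whole algebra. Your reasoning is that same idea made explicit: reuse the same circuits, $\iota$ and $S$. All intermediate values stay in $B$, where the operations of $\m A$ and $\m B$ coincide, so acceptance is unchanged and the size is the same.
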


Normal subgroups and quotient groups are well known notions of group theory. Congruences and quotient algebras are their analogues in universal algebra. Formally, congruences are equivalence relations preserved by operation of the algebra. They can be described as all possible kernels for homomorphisms of $\m A$. As in the case of normal subgroups congruences form a lattice. Languages recognized by the quotient algebra are contained in the set of languages recognized by NuDFA's over the algebra.

\begin{fact}
Let $\m A$ be a finite algebra and let $\alpha$ be its congruence.

Then, $nu\mathsf{P}_{\m A/\alpha}\subseteq nu\mathsf{P}_{\m A}$.
\end{fact}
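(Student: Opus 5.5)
The plan is to simulate a NuDFA over $\m A/\alpha$ by a NuDFA over $\m A$ of the same size, lifting the circuit gate by gate. Let $L\in nu\mathsf{P}_{\m A/\alpha}$ be recognized by $(\set{\po t_n}_{n=1}^{\infty},\iota,S)$, where each $\po t_n$ is a polynomial-size circuit over $\m A/\alpha$, $\iota:\Sigma\to A/\alpha$ and $S\subseteq A/\alpha$. Let $\pi:\m A\to\m A/\alpha$, $a\mapsto a/\alpha$, be the natural projection; it is a surjective homomorphism because $\alpha$ is a congruence.

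First choose a section: for every class $c\in A/\alpha$ fix a representative $s(c)\in A$ with $\pi(s(c))=c$. Build $\po t'_n$ from $\po t_n$ by keeping the same wiring. A gate computing the basic operation $f^{\m A/\alpha}$ becomes a gate computing $f^{\m A}$, and a constant input $c$ becomes the constant $s(c)$; variable inputs stay as they are. Then $\po t'_n$ is a circuit over $\m A$ with the same size as $\po t_n$, so the polynomial bound is preserved. Set $\iota'=s\circ\iota$ and $S'=\pi^{-1}(S)=\bigcup_{c\in S}c$.

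Next show by induction on the circuit structure that $\pi$ commutes with evaluation: for all $a_1,\ldots,a_n\in A$,
\[
\pi\bigl(\po t'_n(a_1,\ldots,a_n)\bigr)=\po t_n\bigl(\pi(a_1),\ldots,\pi(a_n)\bigr).
\]
The input gates are immediate, since $\pi(s(c))=c$. For an operation gate, the step is exactly the identity $\pi(f^{\m A}(x_1,\ldots,x_k))=f^{\m A/\alpha}(\pi(x_1),\ldots,\pi(x_k))$, which is the homomorphism property. Applying this with $a_i=\iota'(b_i)$, so that $\pi(a_i)=\iota(b_i)$, gives $\po t'_n[\iota'](\overline b)\in S'$ if and only if $\po t_n[\iota](\overline b)\in S$. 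Hence the new NuDFA recognizes $L$, and $L\in nu\mathsf{P}_{\m A}$.

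No step is a real obstacle. The only points needing care are that the constants of the quotient must be replaced by chosen representatives, and that the accepting set must be pulled back as a union of whole classes; the induction above handles both.
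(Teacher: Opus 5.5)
Your proof is correct. You lift each constant and each value of $\iota$ to a chosen representative, pull the accepting set back to $\pi^{-1}(S)$, and use the homomorphism property of the natural projection to get $\pi(\po t'_n(\overline{a}))=\po t_n(\pi(\overline{a}))$; this yields a NuDFA over $\m A$ of the same size recognizing the same language. The paper states this as a Fact without giving a proof, and your construction is the standard justification it relies on.
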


A little bit more complicated is situation in case of product of two algebras. Note that computing value of a circuit over a product $\m A\times \m B$ of two algebras $\m A$ and $\m B$ is like computing values for both algebras $\m A$ and $\m B$ separately. In case of programs the additional complication is that accepting set in NuDFA's over $\m A\times \m B$ can be not the product of  subsets of $A$ and $B$. 

\begin{fact}\label{fact-product}
Let $\m A$ and $\m B$ be finite algebras.

Then, 
\begin{itemize}
    \item $\palg{\m A}\subseteq \palg{\m A\times \m B}$.
    \item $\palg{\m B}\subseteq \palg{\m A\times \m B}$.
    \item if $L\in \palg{\m A\times \m B}$, then  there exist $L_1^{\m A},\ldots, L_k^{\m A}\in \palg{\m A}$ and $L_1^{\m B},\ldots, L_k^{\m B}\in \palg{\m B}$ such that $L=\bigcup_{i=1}^k L_i^{\m A}\cap L_i^{\m B}$.
    \item if $L_1\in\palg{\m A}$ and $L_2\in\palg{\m B}$\
    \begin{itemize}
        \item $L_1\cap L_2\in \palg{\m A\times\m B}$
        \item $L_1\cup L_2\in \palg{\m A\times\m B}$
    \end{itemize}   
\end{itemize}
\end{fact}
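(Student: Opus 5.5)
The plan is to prove each item directly from the definition of a NuDFA, using the coordinatewise nature of circuits over $\m A\times\m B$. Fix a circuit $\po t$ over $\m A\times \m B$. Each gate computes a basic operation of the product, and each constant input is a pair $(a,b)$. Replacing every constant by its first (respectively second) coordinate and every gate by the corresponding operation of $\m A$ (respectively $\m B$) gives circuits $\po t^{\m A}$ and $\po t^{\m B}$ of the same size. An induction on the depth of the circuit then shows
\[
\po t(\,(x_1,y_1),\ldots,(x_n,y_n)\,)=\bigl(\po t^{\m A}(x_1,\ldots,x_n),\ \po t^{\m B}(y_1,\ldots,y_n)\bigr).
\]
Conversely, any pair of circuits $\po s^{\m A}$ over $\m A$ and $\po s^{\m B}$ over $\m B$ can be merged into one circuit over the product. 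This is the only technical point. The two circuits may have different shapes, so to merge them I pad each one with the other's gates. In the gate positions coming from $\po s^{\m B}$, the $\m A$-coordinate computes some irrelevant value, and vice versa. The merged circuit is the disjoint union of the two circuits with inputs shared. Its output gate must produce the pair of both outputs, and this step is the main obstacle.

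To handle it, I will not insist on a single output. Instead I reduce to the case where $\po s^{\m A}$ and $\po s^{\m B}$ have the same underlying shape. For this, run $\po s^{\m A}$ in the first coordinate while the second coordinate evaluates the same term in $\m B$ with arbitrary constants, which is just a padding. The cleanest way to realise this is to build the product circuit as $\po s^{\m A}$ with every constant $a$ replaced by $(a,b_0)$ for a fixed $b_0\in B$. Its first coordinate is then exactly $\po s^{\m A}$, and its second coordinate is some circuit over $\m B$ that we simply ignore.

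This already gives the first two items. Given a NuDFA $(\set{\po s_n},\iota,S)$ over $\m A$, take the product circuits just described, the map $\iota'(c)=(\iota(c),b_0)$, and the accepting set $S'=S\times B$. The second item is symmetric.

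For the third item, take a NuDFA $(\set{\po t_n},\iota,S)$ over $\m A\times \m B$, and write $\iota=(\iota_A,\iota_B)$. By the decomposition above, a word $\ov b$ is accepted iff $(\po t_n^{\m A}[\iota_A](\ov b),\po t_n^{\m B}[\iota_B](\ov b))\in S$. Since $S$ is finite, this holds iff for some $(a,b)\in S$ we have $\po t_n^{\m A}[\iota_A](\ov b)=a$ and $\po t_n^{\m B}[\iota_B](\ov b)=b$. Setting $L_{(a,b)}^{\m A}$ to be the language of $(\set{\po t_n^{\m A}},\iota_A,\set a)$, and similarly for $\m B$, gives $L=\bigcup_{(a,b)\in S}L^{\m A}_{(a,b)}\cap L^{\m B}_{(a,b)}$ with $k=|S|$. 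The sizes stay polynomial.

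For the fourth item, take NuDFAs $(\set{\po s_n},\iota_1,S_1)$ over $\m A$ and $(\set{\po r_n},\iota_2,S_2)$ over $\m B$. I need one product circuit whose coordinates are $\po s_n$ and $\po r_n$. To get it, pick a basic operation $f$ of arity $m\ge1$; if all operations are nullary the claim is trivial. Build the circuit $f(\po s_n',\po r_n',\ldots,\po r_n')$, where $\po s_n'$ is the padded version of $\po s_n$ and $\po r_n'$ the padded version of $\po r_n$. This does not directly give the pair. So instead I use the simplest route: take the disjoint union of the padded $\po s_n'$ (second coordinate constant-padded) and the padded $\po r_n'$, and read the first coordinate from one output and the second from the other.

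If only single-output circuits are allowed, I would fall back on the observation that the circuits of $\m A$ can be evaluated as terms in the first coordinate of the \emph{same} circuit $\po u$. The construction is as follows. Form the term $\po u$ whose syntax tree is $\po s_n$'s tree, and at each constant leaf substitute, in the $\m B$-coordinate, the needed subcomputation of $\po r_n$ via a chosen basic operation. This is where I expect the real difficulty. In the worst case one needs an extra argument, for example finiteness, to ensure that the product realises the pair $(\po s_n,\po r_n)$ with polynomial overhead.

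Once the pair is available, the rest is immediate. Use $\iota(c)=(\iota_1(c),\iota_2(c))$, and accept with $S_1\times S_2$ for the intersection and with $(S_1\times B)\cup(A\times S_2)$ for the union.
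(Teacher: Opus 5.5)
Your arguments for the first three items are correct. They rest on the coordinatewise decomposition $\po t=(\po t^{\m A},\po t^{\m B})$, which is also the only justification the paper gives for this fact; it has no written proof beyond that remark.

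The fourth item is where the proposal fails, and no further argument can rescue it. You need a single $n$-ary circuit over $\m A\times\m B$ whose two coordinates are $\po s_n$ and $\po r_n$. But every circuit over the product evaluates one and the same term in both coordinates; only the constants may differ.

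Reading the two coordinates off two different outputs is not available, because a program has a single output tested against a single accepting set. Your fallback, inserting a subcomputation of $\po r_n$ ``in the $\m B$-coordinate only'' at a constant leaf of $\po s_n$, is also impossible. Any non-constant subterm placed at that leaf is evaluated in the $\m A$-coordinate as well, and this changes the value of $\po s_n$.

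In fact the fourth item is false as stated. Let $\m A=\m B=(\mathbb{Z}_2,+)$, $L_1=\set{w : w_1=1}$ and $L_2=\set{w : w_2=1}$. Both lie in $\palg{(\mathbb{Z}_2,+)}$: use the circuit $x_1$, respectively $x_2$ for $n\ge 2$ and the constant $0$ for $n=1$, with $\iota(b)=b$ and $S=\set{1}$. Every $n$-ary polynomial of $(\mathbb{Z}_2,+)^2$ has the form $\sum_{i\in I}x_i+d$ with $I\subseteq\set{1,\ldots,n}$. Under any $\iota$ its value is $|I|\,\iota(0)+d+t\,(\iota(1)-\iota(0))$ with $t=\sum_{i\in I}w_i \bmod 2$. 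Hence any program over the product accepts $0$, $2^{n-1}$ or $2^n$ words of length $n$. For $n\ge 2$, however, $L_1\cap L_2$ contains $2^{n-2}$ words of length $n$ and $L_1\cup L_2$ contains $3\cdot 2^{n-2}$. So neither lies in $\palg{\m A\times\m B}$.

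The case you set aside as trivial also fails. Over a two-element set with only nullary operations, every circuit is a variable or a constant. So languages in $\palg{\m A\times\m B}$ depend on at most one input bit per length, while $L_1\cap L_2$ depends on two.

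The fourth item therefore needs an extra hypothesis: that pairs of polynomials of $\m A$ and $\m B$ are realised, with polynomial overhead, by single polynomials of $\m A\times\m B$. Neither your proposal nor the paper's remark supplies such a hypothesis.
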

The above fact could suggest that the definition of $\palg{\m A}$ does not work well with the products. However, language classes that we obtain behave well under taking products, despite the fact that the definition of a program seems to be not designed to work well for them. 

%In view of \Cref{fact-subalgebra,fact-product} we can focus on subdirectly irreducible algebras i.e. algebras which cannot be presented as a subdirect product of non trivial algebras. 
 
\section{Results}\label{sec-intr-languages}
We will use the same name for the class of languages and the family of circuits that it defines. The meaning will be always clear from the context. 
It turns out that for many natural classes of finite algebras programs over them recognize natural classes of languages considered in circuit complexity. The widest class, we consider in this paper, is $\ppoly$, the class of languages recognized by boolean circuits of polynomial size. $\mathsf{NC}^2$ is a class of languages recognized by circuits of depth $O(log^2 n)$ and with polynomial number of gates, i.e.\! $2$-ary $\land, \lor$ and unary $\neg$. The last popular class we consider is $\mathsf{CC}^0$, the class of languages recognized by polynomial size and constant depth circuits consisting of $\mathsf{MOD}_m$ gates with unbounded fan-in. 

We call a language $\MON$ if it is recognized by polynomial size monotone circuits. All such languages fulfil \cref{item-ex-first} in \cref{ex-lattice}. If we fix an algebra such that all its operations preserve fixed total order, accepting sets $S$ being an upperset with respect to the same order and monotone  mapping $\iota$ then polynomial size NuDFA's of the form $(\set{\po t_i}_{i=1}^{\infty},\iota, S)$  recognize languages from $\MON$. $\BOUNDA{c}$ is a class of boolean circuits with number of inputs bounded by $c$. We define $\MMON$ as a class of languages defined by the circuits of the form  $\circuittwo{\BOUNDA{c}}{\MON}$. Note that the class $\MMON$ is closed under taking bitwise complements. It follows from the fact that monotone circuit with negated input is equivalent by De Morgan's laws to negation of another monotone circuit but with non-negated inputs.

Note that one can show that the language $CLIQUE$ requires superpolynomial size $\MMON$ circuits. It is because we can view these circuits as having only bounded number of negations close to the output (hidden in $\BOUNDA{c}$). Hence we can apply some classical results 
\cite{amano2005superpolynomial} to obtain separation. As we later characterize some of our classes $\palg{\m A}$ as having $\MMON$ circuits, this explicit $CLIQUE$  lower bound transfers to some of our algebras $\m A$.

It turns out that for every finite algebra $\m A$ from congruence modular variety $\palg{\m A}$ is equal $\ppoly$ or is contained in some, perhaps smaller, class of languages.

\begin{theorem}\label{thm-dichotomy}
Let $\m A$ be a finite algebra from a congruence modular variety.

Then, $\palg{\m A}=\ppoly$ or $\palg{\m A}\subseteq \circuittwo{\mathsf{NC}^2}{\mathsf{\MON}} $
\end{theorem}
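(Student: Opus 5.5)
Our plan is to split according to the structure of $\m A$ given by tame congruence theory and commutator theory. For a finite algebra in a congruence modular variety, the types of prime quotients in $\con A$ lie in $\typset{2,3,4}$: type $\tn 1$ is excluded, and type $\tn 5$ also cannot occur in a modular variety. The dichotomy we aim for is the following. Either some section of $\m A$ (a subalgebra of a quotient, or more precisely the polynomial structure induced on a minimal set) gives us, through polynomials, a full Boolean basis, which yields $\ppoly$. Or $\m A$ has no such section, and then the computation splits into an ``affine/nilpotent'' part and a ``lattice'' part.

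\emph{Step 1 (the $\ppoly$ case).} Suppose $\typset{\m A}$ contains $\tn 3$, or more generally that there is a prime quotient $\alpha\prec\beta$ of type $\tn 3$. Then a minimal set $U$ with a two-element trace $\{0,1\}$ carries induced polynomials that realize the Boolean clone, namely $\land$ and $\neg$. We fix $\iota$ mapping into $\{0,1\}$ and a unary idempotent polynomial $e$ with $e(A)=U$. Every Boolean gate can then be simulated by a polynomial of bounded size, so composing gives polynomial-size circuits. The accepting set is taken to be the $\alpha$-block of $1$ intersected with the image (or a suitable preimage), and this yields all of $\ppoly$. We also have to deal with the situation in which type $\tn 3$ only arises indirectly, for example from a nonnilpotent solvable interaction such as $[\beta,\alpha]$ structure. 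That case is harder. We would try to show, along the lines of the equation-solving literature \cite{IdziakK22}, that in congruence modular varieties a ``non-supernilpotent'' or otherwise rich configuration still yields either a type $\tn 3$ section or an $\mathsf{AND}$-like polynomial of bounded arity. Iterating such a polynomial with depth $O(\log n)$ and combining it with negation recovers $\mathsf{NC}^1$, but not necessarily $\ppoly$. So here we would state the first alternative as ``$\m A$ has a prime quotient of type $\tn 3$''.

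\emph{Step 2 (the upper bound).} Assume $\typset{\m A}\subseteq\{\tn 2,\tn 4\}$. By the modular theory (Hobby--McKenzie), $\m A$ then has a congruence $\theta$ such that $\m A/\theta$ lies in a congruence distributive setting whose prime quotients all have type $\tn 4$, i.e.\ it is lattice-like with a compatible order, while the interval $[0,\theta]$ is solvable and hence abelian-by-abelian along a chain. We then simulate a circuit over $\m A$ in two layers. The top layer $\m A/\theta$ uses operations that are monotone with respect to a compatible (semi)lattice order, so its part of the computation is a $\MON$ circuit on bits that describe the $\theta$-classes of intermediate values. The bottom layer consists of the solvable, affine components. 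Since these are modules over finite rings, their evaluation along a polynomial-size circuit amounts to computing iterated sums and products in finite modules, controlled by the $\theta$-class data. We would show that such evaluation lies in $\mathsf{NC}^2$: for instance, a polynomial-size circuit over an affine algebra computes an affine map, the coefficients can be obtained by iterated matrix products over a finite ring, and iterated matrix product is in $\mathsf{NC}^2$. Finally we package everything as an $\mathsf{NC}^2$ circuit that takes as inputs the outputs of monotone subcircuits, which is exactly $\circuittwo{\mathsf{NC}^2}{\MON}$.

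\emph{Main obstacle.} The hard part is the interaction between the two layers. Affine operations on $\theta$-blocks can depend on which block one is in, and that information is produced by the monotone layer. So the order of composition is not obviously ``monotone first, then $\mathsf{NC}^2$''. The first thing we would try is to prove that, modulo the solvable radical, the lattice part never reads affine data. This should follow from the commutator fact $[\,1,\theta\,]$-centrality together with the absence of type $\tn 3$. From this we would obtain that the $\m A/\theta$-values of all gates are computed by a monotone circuit with no feedback from below. Once that is established, the affine contributions are linear in the inputs, with coefficients determined by the monotone outputs, and they can be collected at the top in $\mathsf{NC}^2$.
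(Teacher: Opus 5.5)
There is a genuine gap in Step 2. You put the boundary between the two alternatives at $\tn 3\in\typset{\m A}$, and then argue that when $\typset{\m A}\subseteq\{\tn 2,\tn 4\}$ the quotient $\m A/\theta$ contributes a monotone circuit in the input bits. That claim is false.

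\emph{Counterexample to the monotonicity claim.} Take the simple modular lattice $M_3$ with atoms $u,v,w$. Here your $\theta$ is the identity congruence, so all of $M_3$ is your ``top layer''. Set $\iota(0)=u$ and $\iota(1)=v$. The polynomials $x\wedge u$ and $((x\wedge v)\vee w)\wedge u$ send $\iota(b)$ into the sublattice $\{0,u\}$ as $\neg b$ and $b$ respectively. So every De Morgan circuit is simulated with accepting set $\{u\}$, and $\palg{M_3}=\ppoly$; this is the argument of \cref{lm-subdirect-type-4}. In general, the compatible order gives monotonicity in the input bits only factorwise, and only when $\iota(0),\iota(1)$ are comparable in each subdirectly irreducible factor. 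This holds for every $\iota$ exactly when those factors are totally ordered. Even then you get $\MMON$ rather than $\MON$, because directions may be mixed and indicators of single values are not upsets.

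\emph{Your $\theta$ need not exist.} Suppose a join irreducible congruence of type $\tn 4$ lies below one of type $\tn 2$. Then the type-$\tn 2$ quotient transposes up above any candidate $\theta$, so no such $\theta$ exists. Moreover, \cref{hard-combination-types} shows this configuration again gives $\ppoly$: $\po d(0_U,x,1_U)$ on a type-$\tn 2$ trace supplies the negated literal, and a unary polynomial carries it into the type-$\tn 4$ minimal set.

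\emph{What the paper does instead.} The paper puts exactly these two configurations (items 5 and 6 of \cref{thm-main}) on the $\ppoly$ side. Only after excluding them does \cref{last-lem} produce $\theta$, via Valeriote's transfer principle. Then \cref{lm-totally-ordered} handles $\m A/\theta$, and iterated \cref{solvable-induction} gives ABP layers fed by the indicators of quotient programs at every gate. The result is $\nudet{p_1}\circ\cdots\circ\nudet{p_h}\circ\MMON$.

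\emph{Smaller points.} Your ``main obstacle'' is not the real one. Since $\theta$ is a congruence, the values modulo $\theta$ never depend on data inside $\theta$-blocks, and $\theta$ need not be central in any case. Also, type $\tn 3$ cannot arise ``indirectly'' from solvable structure, since solvable algebras in a modular variety have only type $\tn 2$.

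\emph{The conclusion of Step 2 is nonetheless true.} This holds for a reason unrelated to the algebra. Let $T_j$ be the monotone threshold ``at least $j$ inputs are $1$''. If $f$ has a De Morgan circuit $C$ of size $s$, replace each $\neg x_i$ by the monotone threshold ``at least $k$ of the variables other than $x_i$ are $1$'', giving a monotone circuit $C'_k$. Then $f^{(k)}=(C'_k\wedge T_k)\vee T_{k+1}$ has the following properties:
\begin{itemize}
\item it agrees with $f$ on inputs of weight $k$, is $0$ below that weight and $1$ above it;
\item it has monotone circuits of size $\mathrm{poly}(s,n)$;
\item it satisfies $f=\bigvee_{k=0}^{n}\bigl(f^{(k)}\wedge\neg T_{k+1}\bigr)$.
\end{itemize}
This is Berkowitz's slice-function argument. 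It gives $\ppoly\subseteq\mathsf{AC}^0\circ\MON$, hence $\mathsf{NC}^2\circ\MON=\ppoly$. So the second alternative holds for every finite algebra. Your proof can be repaired by replacing Step 2 with this observation, but not by the decomposition you sketch. The algebraic case analysis, yours or the paper's, only carries content for the sharper bounds in \cref{thm-main}, such as $\mathsf{CC}^0$, $\mathsf{NC}^2$ or $\MMON$.
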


In fact, we have much more precise result than \cref{thm-dichotomy} but we need some preparation to state the main theorem of the paper. To characterize languages recognized by different algebras from congruence modular variety we will use two important theories of universal algebra:  modular commutator theory (see \cite{fm}) and  tame congruence theory (\cite{hm}). Commutator theory generalize notion of commutator to universal algebraic realm, what enables generalization of such group-theoretical notions as abelianity, nilpotency and solvability for arbitrary algebraic structures. Tame congruence theory describes a local behaviour of finite algebras. Such a behaviour is connected with covering pairs of congruences $\alpha\prec\beta$ and so called $(\alpha,\beta)$-minimal sets. It turns out that such a minimal set can behave in one of five ways:
\begin{enumerate}
\item[{$\tn 1$}.]  a finite set with a (unary) group action on it,
\item[{$\tn 2$}.]  a finite vector space over a finite field,
\item[{$\tn 3$}.]  a two-element Boolean algebra,
\item[{$\tn 4$}.]  a two-element lattice,
\item[{$\tn 5$}.]  a two-element semilattice.
\end{enumerate}
Different prime quotients of the same algebra can behave in different ways. The typeset of an algebra $\m A$ i.e the set of prime quotients' types which can be found in $\m A$ is denoted by $\typset{\m A}$. Note that typesets of algebras from congruence modular varieties are contained in a set $\set{\tn 2, \tn 3, \tn 4}$.  Moreover in case of congruence modular varieties minimals sets of types $\tn 3$ and $\tn 4$ consist of exactly $2$ elements. In case of other types and other varieties the situation is more involved and will be clarified later. 

In a congruence modular variety, whenever all prime quotients of $\m A$ are of type $\tn 2$ the algebra $\m A$ is solvable and also all prime quotients of solvable algebras have only $\tn 2$ quotients \cite{hm, fm}. In a congruence modular variety all solvable algebras are Malcev, i.e.\! they have a ternary term $\po d$ satisfying $\po d(x,y,y) = \po d(y,y,x) = x$. 

We say that congruence $\alpha$ of $\m A$ is join irreducible if it is a join irreducible element of congruence lattice. It is well known that every join irreducible congruence $\alpha$ has its unique subcover $\alpha^{-}$. We define type of join irreducible congruence $\alpha$ as a type of quotient $(\alpha^-, \alpha)$. 

Subdirect product of algebras $\m A_1$,\ldots, $\m A_k$ is a subalgebra of their direct product such that projection on $i$-th coordinate is equal to $A_i$ for all $i$. An algebra is subdirectly irreducible if cannot be presented as a subdirect product of at least two nontrivial algebras. Every finite algebra $\m A$ is isomorphic to a subdirect product of a finite number of subdirectly irreducible finite algebras (which are homomorphic images of $\m A$ or in other words which are quotient algebras of $\m A$).

We say that that an algebra $\m A$ is totally ordered if there is total order preserved by all polynomial operations of $\m A$.

Now we can state the theorem. 

\begin{theorem}\label{thm-main}
Let $\m A$ be a finite %subdirectly irreducible 
algebra from congruence modular variety then:
\begin{enumerate}
    \item\label{item-main-1} If $\m A$ is nilpotent then $\palg{\m A}\subseteq \mathsf{CC}^0$.
    \item If $\typset{\m A} =\tn 2$, then $\palg{\m A}\subseteq \mathsf{NC}^2$.
    \item If $\tn 3\in\typset{\m A}$ then $\palg{\m A}=\ppoly$.
    \item  If $\m A$ is a subdirect product of totally ordered algebras then $\palg{\m A}=\MMON$.
    \item\label{cong-alp}  If there is a congruence $\alpha$ of $\m A$ such that $\typset{\m A/\alpha}=\set{\tn 4}$ and $\m A/\alpha$ is a subdirectly irreducible algebra which is not totally ordered then $\palg{\m A}=\ppoly$.
    \item\label{item-main-6} If $\m A$ has join irreducible congruences $\alpha<\beta$ such that type of $\alpha$ is $\tn 4$ and type of $\beta$ is $\tn 2$, then  $\palg{\m A} = \ppoly$.
    \item If none of Items \ref{item-main-1}-\ref{item-main-6} holds, then 
    $\palg{\m A} \subseteq \circuittwo{\mathsf{NC}^2}{\MON}$
\end{enumerate}
\end{theorem}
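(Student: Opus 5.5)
The plan is to prove the seven items separately, but all through one scheme. Upper bounds come from simulating circuits over $\m A$ (or over its subdirectly irreducible quotients) by Boolean circuits of the target class. Lower bounds come from finding, via tame congruence theory, a small piece of $\m A$ inside which a complete basis can be interpreted. Throughout I use \cref{fact-subalgebra,fact-product} and the quotient fact to reduce to subdirectly irreducible factors. The product fact says a language over a subdirect product is a bounded union of intersections of languages over the factors, and the factors are quotients of $\m A$. This bounded Boolean combination is harmless for $\mathsf{CC}^0$, $\mathsf{NC}^2$ and $\ppoly$. For $\MMON$ it is exactly absorbed by the outer $\BOUNDA{c}$ layer.

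\textbf{Upper bounds.} For item (1), a finite nilpotent algebra in a congruence modular variety has polynomials of bounded ``essential degree'' (supernilpotent collapse is not available in general). So I would follow the known route. Such an algebra is polynomially equivalent to a nilpotent loop-like structure built by iterated central extensions of abelian groups (the wreath-product representation of Freese--McKenzie). A circuit value is then computed coordinatewise by iterated sums modulo prime powers of products of lower-level coordinates. Each level is a bounded-depth $\MODG{m}$ layer, and the nilpotency class bounds the depth, giving $\mathsf{CC}^0$.

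For item (2), $\m A$ is solvable, hence Malcev. I would represent it as an iterated extension of affine (module) algebras along a maximal chain of congruences. Evaluating a polynomial-size circuit then amounts to computing, level by level, values in modules over finite rings. These are polynomial-size arithmetic circuits over bounded fields and rings, whose coefficients depend on lower levels. Such circuits can be balanced and evaluated in $\mathsf{NC}^2$ by Valiant--Skyum--Berkowitz--Rackoff-type depth reduction. The step I expect to be the main obstacle is controlling the nonlinear interaction between levels so that depth reduction still applies.

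For item (4), each totally ordered factor has monotone operations and monotone $\iota$ up to order reversal. Accepting sets are Boolean combinations of upsets, and each upset gives a $\MON$ circuit. Hence the language is in $\circuittwo{\BOUNDA{c}}{\MON}$. For the converse I would realize $\wedge,\vee$ on a two-element chain inside some ordered factor; this factor has type $\tn 4$ and is not trivial.

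For item (7), I would split $\m A$ along a congruence $\theta$. Above $\theta$, the subdirectly irreducible quotients are totally ordered lattice-like. Below $\theta$, the blocks are solvable, and the failure of item (6) says no type $\tn 2$ sits above type $\tn 4$. The plan is that the solvable part is computed in $\mathsf{NC}^2$ as in item (2), and it feeds into a monotone top part. This gives $\circuittwo{\mathsf{NC}^2}{\MON}$ and hence \cref{thm-dichotomy}.

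\textbf{Lower bounds.} For item (3), take a type $\tn 3$ minimal set $\{0,1\}$ with its idempotent polynomial $e$. The induced polynomials restricted to $\{0,1\}$ form the full Boolean clone. A $\ppoly$ circuit is simulated gate by gate with $e$ composed after each gate, which is polynomial size. Set $\iota$ into $\{0,1\}$ and $S\ni 1$. The only subtlety is fixing $\iota,S$ globally, which is allowed here.

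For item (5), the quotient $\m A/\alpha$ is subdirectly irreducible of type $\{\tn 4\}$ and not totally ordered. I would find two type-$\tn 4$ minimal sets whose induced lattice orders are incompatible, for instance through the monolith and some other prime quotient. Composing polynomials between them yields an order-reversing unary map on a two-element trace. Together with the lattice operations this gives negation, hence the Boolean clone, and we reduce to the argument for item (3).

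For item (6), take $\alpha<\beta$ with $\alpha$ of type $\tn 4$ and $\beta$ of type $\tn 2$. I would use the module structure on a $\beta$-block, through the Malcev-type polynomial available on type-$\tn 2$ traces. This realizes $x\mapsto$ ``$c-x$'' modulo $\alpha$, which flips the type-$\tn 4$ trace and again yields negation.

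The step I expect to be hardest overall is this: in items (5) and (6), constructing the order-reversing polynomial, and in item (7), the structural decomposition proving that no other configurations occur.
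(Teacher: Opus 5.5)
Items (1), (3) and the upper bound in (4) follow essentially the paper's route. Your plan for items (5) and (6), however, has a genuine gap. You want a polynomial that reverses a type-$\tn 4$ trace, so that $\wedge,\vee,\neg$ live on one two-element set and item (3) applies. No such polynomial exists. A type-$\tn 4$ minimal set $V=\{0_V,1_V\}$ induces an algebra polynomially equivalent to the two-element lattice, whose only unary polynomials are constants and the identity. A unary polynomial of $\m A$ swapping $0_V,1_V$ would therefore make $V$ Boolean, i.e.\ of type $\tn 3$. In (6) there is a further problem: the affine structure coming from $\beta$ lives on $\beta$-blocks modulo $\beta^-$, and join irreducibility of $\beta$ forces $\alpha\le\beta^-$, so $V$ collapses to a point there.

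The missing idea is that negation is only needed on literals (push it to the leaves by De Morgan), and it is supplied by the choice of $\iota$. The pair $\iota(0),\iota(1)$ need not lie in $V$; what you need are two different polynomials into $V$ that order this pair oppositely.

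For (5), let $V$ be a $(0_{\m A},\mu)$-minimal set and $e_1,\ldots,e_k$ all polynomials with range $V$. Subdirect irreducibility ($\mu\le\mathrm{Cg}(a,b)$ for $a\neq b$) makes $\le_V$ a partial order on $A$. If it is not total, pick incomparable $a,b$ and indices $i,j$ with $e_i(a)<e_i(b)$ and $e_j(a)>e_j(b)$. Then take $\iota(0)=a$, $\iota(1)=b$, positive literals $e_i(x)$, negative literals $e_j(x)$, gates $\wedge_V,\vee_V$ and $S=\{1_V\}$ (\cref{lm-subdirect-type-4}).

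For (6), take distinct $0_U,1_U$ in one trace of a $(\beta^-,\beta)$-minimal set $U$ (type $\tn 2$), and let $V$ be $(\alpha^-,\alpha)$-minimal. Since $\mathrm{Cg}(0_U,1_U)=\beta\ge\alpha$, there is a chain of polynomial images of $\{0_U,1_U\}$ from $0_V$ to $1_V$. Projecting it by $e_V$ yields a unary $f$ with $f(\{0_U,1_U\})=\{0_V,1_V\}$. Then $f(x)$ and $f(d(0_U,x,1_U))$ are the two literals, where $d$ is the pseudo-Mal'cev polynomial of $U$ (\cref{hard-combination-types}).

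Item (7), as you describe it, has the composition backwards. The part below $\theta$ is not a homomorphic image, so it cannot be computed first. In $\m A\le\mathbb{Z}_p^{[k]}\boxtimes\m A/\alpha$ (\cref{struct-alp-cong}) the quotient coordinate is computed independently. The $\mathbb{Z}_p$-coordinates are affine in themselves, with coefficients that are bounded-arity functions of the quotient values. So the correct order is:
programs over $\m A/\theta$ sit at the bottom; they are in $\circuittwo{\BOUNDA{c}}{\MON}$ by the totally ordered case, which uses the failure of (5).
Their outputs with singleton accepting sets become Boolean inputs of an ABP of size $O(l)$ (\cref{solvable-induction}).
Iterating along a chain from $0_{\m A}$ to $\theta$ gives $\nudet{p_1}\circ\ldots\circ\nudet{p_h}\circ\MMON$. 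The $\BOUNDA{c}$ layer is absorbed by a constant-size ABP, giving $\circuittwo{\nuitdet}{\MON}$.

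Your ordering, with an $\mathsf{NC}^2$ part feeding a monotone part, could not give the bound anyway: $\MON\circ\mathsf{NC}^2$ already contains $\ppoly$, since negated literals are in $\mathsf{NC}^2$. The same linearization removes the obstacle you flag in item (2). Treat indicator bits of quotient programs as fresh inputs; then each level is an ABP and hence a determinant, and no balancing of general arithmetic circuits is needed.

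The existence of $\theta$ also needs more than ``no type $\tn 2$ above type $\tn 4$''. The paper obtains it (\cref{last-lem}) from Valeriote's $(\tn 4,\tn 2)$-transfer principle together with \cite[Lemma~6.2]{IdziakK22}.

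Finally, your converse in item (4) yields only $\MON\subseteq\palg{\m A}$, and nothing more can be proved. For the two-element lattice, the language $\{w : w_1\neq w_2\}$ is in $\MMON$ but is neither monotone nor antimonotone (\cref{ex-lattice}). So only $\palg{\m A}\subseteq\MMON$ holds in general, which is also all that the paper's argument establishes.
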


In fact, wherever $NC^2$ appears in the above theorem we can capture the real complexity class better, by a less natural class $\nuitdet$ of languages recognized by evaluating iterated determinant. We will explain the class later in the proper context.

Now we explain how to combine the results of the next sections 

\begin{proof}[Proof of \cref{thm-main}] 
If $\typset{A} = \set{\tn 2}$ by \cref{theorem-solv-nc2} we have $\palg{\m A} \subseteq NC^2$. If $\m A$ is additionally nilpotent $\palg{\m A} = CC^0$ by \cref{nilp-theorem}.

If $\tn 3 \in \typset{\m A}$ there is a two element set $U \subseteq A$ on which the algebra $\m A$ behaves like a Boolean algebra. So there are operations $\land_U, \lor_U, \neg_U$ in the polynomial clone of $\m A$ such that $(U, \land_U, \lor_U, \neg_U)$ is isomorphic to the two element Boolean algebra. Thus we can use $\iota$ to project to this two elements and encode arbitrary Boolean circuits by composing $\land_U, \lor_U, \neg_U$. Hence $\palg{\m A} = \ptime/poly$.  

If $\m A$ is subdirect product of totally ordered algebras, call them $\m B_1, \ldots, \m B_h$, every program $\progb{\po p}{\iota}{S}$ over $\m A$ can be simulated by a system of constant number of programs of the form $\progb{\po p^{\po B_i}}{\iota^{\po B_i}}{\{c^{\po B_i}\}}$, $\iota^{\po B_i}$ is just $\iota$ projected to the coordinate $i$ and $c^{\po B_i} \in B_i$. But by \cref{lm-totally-ordered} function computed by a program over each $\m B_i$ can be also computed by $\MMON$ circuit of size polynomial in the size of the program.
There is a bounded arity function $\po d$ which takes this programs from coordinates and reconstructs the answer for $\progb{\po p}{\iota}{S}$. Since $\MMON$ is closed under composing with bounded arity functions from the outside we see that $\palg{\m A} = \MMON$.

If there is a congruence $\alpha$ such as in (\ref{cong-alp}) by \cref{lm-subdirect-type-4} we have $\palg{\m A} = \ppoly$.
If $\m A$ has congruences $\alpha, \beta$ as in (\ref{item-main-6})  then by \cref{hard-combination-types} we have  $\palg{\m A} = \ppoly$. 

If none of the items 1-6 hold, then $\typset{A} = \set{\tn 2,\tn 4}$. Since \cref{item-main-6} does not hold by \cref{last-lem} there is a congruence $\beta$ such that $\typset{\m A/\beta} = \set{\tn 4}$ and all prime quotients below $\beta$ are of type $\tn 2$. Now $\m A/\beta$ has to decompose into a subdirect product of totally ordered algebras, otherwise \cref{cong-alp} would hold for some subdirectly irreducible quotient $(\m A/\beta)/\alpha$  of $\m A/\beta$ which is also a quotient of $\m A$. Then by \cref{final-mixed-th} $\palg{\m A} \subseteq \circuittwo{\mathsf{NC}^2}{\MON}$ which finishes the proof.

\end{proof}

\section{Background material}

\subsection{Universal algebra}

We use a standard universal algebraic notation (see \cite{BurrisS12}). Moreover, we use two important theories of universal algebra and their notions: tame congruence theory (TCT, see \cite{hm}) and modular commutator theory (see \cite{fm}).

 We use boldface capital letters to denote algebras and the same letters but using regular font to denote their universes. Congruences of an algebra $\m A$ form a complete lattice with the smallest element $0_{\m A}$ and the biggest element $1_{\m A}$, where  $0_{\m A}$ is the equality relation and $1_{\m A}$ is an equivalent relation with one equivalence class. Minimal non-trivial congruences i.e. congruences which cover $0_{\m A}$ are called atoms. It is well known that an algebra is subdirectly irreducible iff it has exactly one atom. Such the unique atom is called monolith. By polynomials over an algebra we mean every proper expression built of basic operations of the algebra, constants and variables. Polynomials define polynomial operations in an obvious way. Two algebras are polynomially equivalent if they have the same set of polynomial operations (or more precisely if one algebra is isomorphic to an algebra which has the same set of polynomial operations as the second algebra). For a subset $S\subseteq A$ an  algebra induced on $S$ is the algebra ($\m A|_{S}$) with universe $S$ and the set of basic operations consisting of all polynomial operations of $\m A$ which for values from $S$ return elements from $S$.

For a given covering pair of congruences $\alpha\prec\beta$ of $\m A$, an $(\alpha,\beta)$-minimal set is a minimal, with respect to inclusion, image of unary polynomial which do not collapse $\beta$ to $\alpha$. It is known that algebras induced on every two $(\alpha, \beta)$-minimal sets are polynomially equivalent. For a $(\alpha,\beta)$-minimal set $U$ and $a\in U$, $N=a/\beta\cap U$ such that $N\not=a/\alpha\cap U$ is called a trace. The sum of all traces of $U$ is a body and a tail of $U$ consists of all element of $U$ which are not contained in body i.e. $a\in U$ such that $a/\beta\cap U=a/\alpha\cap U$. For every minimal set $U$ there is polynomial $\po e_U$ such that $\po e(U)=U$ end $\po e_U(\po e_U(x))=\po e_U(x)$. One of the nice properties of algebras from congruence modular variety is the fact that its minimal sets have no tails. Every trace of $(\alpha,\beta)$-minimal set is, modulo $\alpha$, polynomially equivalent to the one of five types of local behaviour mentioned in  \cref{sec-intr-languages}. Every trace of the one minimal set is of the same type, and hence it makes sense to talk about the minimal set type. Minimal sets of types $\tn3$, $\tn 4$ end $\tn 5$ have exactly one trace. In case of type $\tn 2$ traces are polynomially equivalent to one-dimensional vector space and their sizes are of prime power order, this prime, the same for every trace in the one minimal set we call a characteristic of the minimal set of type $\tn 2$.

Let $k \in \mathbb{N}$ and $\m M$ be an algebra. Then the matrix power $\m M^{[k]}$ is defined in a standard way, i.e.\ the domain is just $M^{k}$ and operations on each component are just operations of $\m M$ but taking all variables from all components. That is if $m \in M^{k}$, and $m^i$ denotes the $i$-th component of $m$, the algebra $\m M^{k}$ contains all operations of the form $\po p(x_1, \ldots, x_n) = (\po p_1(x_1^1, \ldots, x_n^{k}), \ldots, \po p_k(x_1^1, \ldots, x_n^{k}))$ where $\po p_1, \ldots, \po p_k$ are some $n\cdot k$-ary polynomials of $\m M$. Matrix powers of minimal sets play central role in our investigation.  Actually, our results rely on the fact that every finite algebra from a congruence modular variety can be expressed as some type of a product of its minimal sets matrix powers. To be able  to state this fact properly we need to introduce the new type of a product.

\begin{definition}
Let $\m A, \m B_1, \ldots, \m B_h$ be finite algebras. We say that an algebra $\m A$ is an action product  $(\m B_1\boxtimes (\m B_2 \boxtimes (\ldots \boxtimes \m B_h)\ldots)$ whenever the domain $A = B_1 \times B_2 \times \ldots \times B_h$ and for each $n$-ary $\po f \in \pol{A}$ 
there are $\po f_1\ldots, \po f_h$, such that $\po f_i: (B_i \times \ldots \times B_h)^n \longrightarrow B_i$ and
$$\po f(x) = (\po f_1(x^{(1)}), \ldots, \po f_h(x^{(h)})),$$
where $x^{(i)}=(x_1^i,\ldots, x_n^{i},\ldots,x_1^h,\ldots, x_n^{h} )\in (B_i \times \ldots \times B_h)^n$. Moreover whenever we assign a constant value $c \in (B_{i+1} \times \ldots \times B_h)^n$ to $x^{(i+1)}$, the function $\po f_i(x_1^i,\ldots, x_n^{i}, c_1^{i+1},\ldots, c_n^{i+1},\ldots,c_1^h,\ldots, c_n^{h} )$ is an $n$-ary polynomial of the algebra $\m B_i$.
\end{definition}
The above definition may seem to be complicated but the main idea is quite simple. Algebras appearing in the product later act on their predecessor as it is in semidirect product of groups.

\subsection{Algebraic Branching Programs}
To understand the behaviour of circuits over solvable algebras, we need the notion of $\textit{algebraic branching program}$. We take our basic definitions from \cite{dvir2012separating}.
An algebraic branching program is a way to represent a formal polynomial over a field by an edge-labeled directed acyclic graph in which multiple edges between vertices are allowed. There is always one source vertex, and one sink vertex in the graph, and edges are labeled by either variable of a constant of the field $\mathbb{F}$. We will consider only $ABP$s over finite fields $\mathsf{GF}(p)$ where $p$ is a prime.  Each path from source to sink represents a monomial, created by multiplying variables/constants occurring on its edges. Then we say that $ABP$ represents/computes a polynomial $\po w$ equal to the sum of monomials on all such paths. 

The size of such $ABP$ is defined as the number of its vertices plus the number of the edges (but edges linearly dominate the size). Another way to define the size which is more useful in our context is by the following process, that we  later call ABP-process. We start with a single identity $1$ polynomial. Then we create new polynomials by applying one of the two steps:
\begin{enumerate}
    \item take an already created polynomial and multiply it by a constant or a variable
    \item add $n$ already created polynomials.
\end{enumerate}
One can see that this two notions of size are linearly-dependent on each other, i.e. if some $ABP$ represents a polynomial $\po f$ using $l$ edges then there is $O(l)$ step process defining $\po w$, and also from each $l$-step process we get $ABP$ of size $O(l)$ in an obvious way. Indeed, the multiplication step corresponds to adding edge to a DAG and addition corresponds to introducing new vertex in the DAG. As we are only interested in asymptotic sizes, we use this two notions of size interchangeably. If a polynomial has ABP of size $O(l)$, we will sometimes say that $\po w$ is of ABP complexity $O(l)$.

Determinantal complexity of a polynomial $\po w$ over field $\mathbb{F}$ is the smallest size of a matrix filled with variables and constants from $\mathbb{F}$ such that $\po w$ is (as a polynomial) equal to determinant of the matrix. It is known, that every polynomial can be represented as determinant of such matrix, moreover the determinantal complexity of polynomials represented by small arithmetic formulas is also relatively small \cite{valiant1979completeness}. We will use the following deep result.

\begin{theorem}[\cite{valiant1979completeness, malod2008characterizing, berkowitz1984computing, samuelson1942method, mahajan1997determinant}]\label{abp-to-det}
Determinantal complexity of a polynomial and the smallest size of ABP representing it are polynomially related.
\end{theorem}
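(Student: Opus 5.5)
Since the theorem only asks for a polynomial relation, I would prove the two inequalities separately: an ABP of size $l$ gives a matrix of size $\mathrm{poly}(l)$, and an $n\times n$ determinant has an ABP of size $\mathrm{poly}(n)$. Both constructions must work over $\mathsf{GF}(p)$ and must be division free, since only that setting is used later.

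\emph{From ABP to determinant.} Start with an ABP of size $l$ computing $\po w$, with source $s$ and sink $t$.
\begin{enumerate}
\item Layer it: subdivide edges, labelling the new edges by the constant $1$, so that every source–sink path has the same length $k$. This costs at most $O(l^2)$ vertices.
\item Remove multiple edges by subdividing them as well. Now each ordered pair of vertices carries at most one edge, labelled by a single variable or constant.
\item Identify $t$ with $s$, so each source–sink path becomes a cycle. Index a matrix $M$ by the remaining vertices. Put $M_{u,v}$ equal to the label of the edge $u\to v$ (and $0$ if there is none), and put $1$ on the diagonal for every vertex other than $s$.
\end{enumerate}
The permutations contributing to $\det M$ are cycle covers of this graph with self-loops. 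Because the original graph is acyclic, every nontrivial cycle passes through the merged vertex $s=t$, so it comes from exactly one source–sink path. Hence a cycle cover is one such path plus self-loops on all other vertices, and its weight is the monomial of that path. All these cycles have the same length $k$, so they all carry the sign $(-1)^{k-1}$. Therefore $\det M=\pm\po w$. If the sign is $-1$, multiply one row by $-1$ or append a $1\times1$ block containing $-1$. The matrix has size $O(l^2)$.

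\emph{From determinant to ABP.} Let $M$ be an $n\times n$ matrix of variables and constants. It suffices to build a division-free ABP of size $\mathrm{poly}(n)$ for the generic determinant and then substitute the entries of $M$ for the edge labels. Two routes are available.
\begin{itemize}
\item Mahajan–Vinay clow sequences. Sign-corrected closed walks are organised into a layered DAG with $O(n)$ layers and $O(n^3)$ vertices per layer. Every edge is labelled by an entry of the matrix or by $\pm1$, and the non-cycle-cover contributions cancel in pairs. This gives an ABP of size $O(n^4)$ over any commutative ring, in particular $\mathsf{GF}(p)$.
\item Samuelson–Berkowitz. The characteristic polynomial is a product of $n$ Toeplitz matrices. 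Their entries are of the form $R A^{j} C$ for submatrices of $M$, and these entries are themselves iterated matrix products. An iterated product of $\mathrm{poly}(n)$ matrices whose entries have small ABPs has a small ABP: chain layered ABPs together, one layer per matrix. The ABP-process description from the previous subsection makes this closure under iterated sums and products immediate.
\end{itemize}

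The first direction is routine once the graph is layered. The main obstacle is the second direction. Gaussian elimination needs division, and a naive expansion of the determinant has exponential size. So the core of the argument is a division-free combinatorial or algebraic formula for the determinant with a polynomial-size layered structure. For this I would rely on the Mahajan–Vinay clow-sequence construction and check carefully the sign-reversing involution that cancels the non-cycle-cover clow sequences. Since that argument uses only ring operations, it remains valid in characteristic $p$.
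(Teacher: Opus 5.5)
Your proposal is correct and reconstructs the standard arguments behind the references the paper cites in place of a proof: the cycle-cover construction of Valiant and Malod--Portier for ABP $\to$ determinant, and the division-free Mahajan--Vinay clow-sequence DAG (or the Samuelson--Berkowitz iterated Toeplitz product) for determinant $\to$ ABP. The only slip is the order of your first two steps: subdividing just the parallel edges after layering would break the equal path lengths your sign argument needs, so remove parallel edges first (or subdivide every edge) and then layer.
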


We will also use the fact that we can make affine substitutions for variables in a given ABP and it does not change its size too much. We note that many authors define ABP so that in the DAG defining it one can put affine combinations on edges instead of just variables and constants, but for our definition the following fact allows us to manage affine substitutions.

\begin{fact}\label{affine-subst}
Let $\po w$ be a polynomial  over $n$ variables represented by an ABP of size smaller than $l$. Let $A_1, A_2, \ldots, A_n$ be affine combinations of this $n$ variables. Then the polynomial $\po w(A_1, \ldots, A_n)$ has ABP of size  $O(nl)$. Additionally if $A_i$ depend only on one variable the size is $O(l)$. 
\end{fact}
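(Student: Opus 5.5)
We use the ABP-process description of size, which the paper treats as interchangeable with the DAG size. Fix an ABP-process $P$ of $O(l)$ steps producing $\po w(x_1,\ldots,x_n)$. The idea is to replay $P$ step by step, replacing every multiplication by a variable $x_i$ with a small gadget that multiplies by $A_i$ instead. Induction along the steps then shows that each created polynomial $\po q$ is replaced by $\po q(A_1,\ldots,A_n)$.

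1. Write each affine combination as $A_i=c_{i,0}+\sum_{j=1}^n c_{i,j}x_j$ with constants in $\mathsf{GF}(p)$.

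2. Replay the steps of $P$ as follows.
\begin{itemize}
\item An \emph{addition step} is copied verbatim.
\item A \emph{multiplication by a constant} is copied verbatim.
\item A \emph{multiplication of an already created $\po q$ by $x_i$} is simulated by the following gadget. First form $x_j\po q$ for each $j$ with $c_{i,j}\neq 0$ (one step each). Then form $c_{i,j}x_j\po q$ (one step each), and $c_{i,0}\po q$ (one step). Finally take a single addition step of these at most $n+1$ polynomials, which yields $A_i\po q$.
\end{itemize}

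3. Correctness follows by induction on the steps. If every earlier polynomial $\po q$ has been replaced by $\po q(\ov A)$, then after a gadget step we hold $A_i\cdot \po q(\ov A)=(x_i\po q)(\ov A)$. Sums and constant multiples commute with the substitution, so addition and constant steps are handled as well.

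4. Each original step is replaced by $O(n)$ steps, so the new process has $O(nl)$ steps. Because the paper's sizes are interchangeable, this gives an ABP of size $O(nl)$.

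5. For the additional claim, suppose each $A_i$ depends on only one variable, say $A_i=a x_j+b$. Then the gadget consists of forming $x_j\po q$, then $a x_j\po q$, then $b\po q$, and then one addition, which is $O(1)$ steps. Hence the total size is $O(l)$.

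The only delicate point is the accounting for the addition step in the gadget. With the DAG definition of size, summing $n+1$ terms costs $O(n)$ edges, not one. Even under that accounting the bound is still $O(nl)$ in the general case and $O(1)$ per step in the single-variable case, so the fact holds with either notion of size. No constant blow-up beyond this arises, since constants $0$ can simply be omitted.
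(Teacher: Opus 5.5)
Your proposal is correct and is essentially the paper's proof. The paper works in the DAG picture: it replaces each edge labelled by a variable with a small ABP for the corresponding affine form, gluing that ABP's source to the edge's tail and its sink to the edge's head. This costs $O(n)$ per edge, or $O(1)$ when the form involves a single variable, and your multiply-by-$A_i$ gadget is exactly this edge replacement written in the ABP-process language.
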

\begin{proof}
   After such affine substitution label on each edge becomes an affine combination of variables $\alpha_1 \cdot x_1 + \ldots + \alpha_n x_n +c$. Its enough to replace this edge by an ABP computing this affine combination, i.e.\! put a source of the affine ABP in the beginning of the edge and sink of the ABP in the end of the edge. This  ABP has size bounded by $O(n)$ and its $O(1)$ when such affine combination has one variable.
\end{proof}

Another fact we will use is that we can multiply polynomials represented by ABPs without expanding the size of the representation too much.

\begin{fact}\label{abp-mult}
Let $\po w_1, \po w_2$ be two polynomials represented by ABPs of size at most $l_1$ and $l_2$ respectively. Then the polynomial $\po w_1 \cdot \po w_2$ has ABP of size at most $l_1 + l_2$.
\end{fact}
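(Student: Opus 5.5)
The natural approach is to build an ABP for the product by composing the two DAGs in series. Let $G_1$ be an ABP for $\po w_1$ with source $s_1$ and sink $t_1$, and let $G_2$ be an ABP for $\po w_2$ with source $s_2$ and sink $t_2$. Take the disjoint union of $G_1$ and $G_2$ and identify $t_1$ with $s_2$. Every source-to-sink path in the resulting DAG (from $s_1$ to $t_2$) passes through the merged vertex exactly once. Such a path therefore splits uniquely into a path from $s_1$ to $t_1$ in $G_1$ followed by a path from $s_2$ to $t_2$ in $G_2$, and conversely every such pair concatenates to a source-to-sink path. Acyclicity is preserved, since no edge of $G_2$ leads back into $G_1$.

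The monomial of a concatenated path is the product of the monomials of its two halves. Summing over all pairs and using distributivity gives
$$\sum_{P_1,P_2} m(P_1)m(P_2)=\Big(\sum_{P_1} m(P_1)\Big)\Big(\sum_{P_2} m(P_2)\Big)=\po w_1\cdot \po w_2 .$$
So the new DAG computes $\po w_1\cdot\po w_2$. Its edge count is exactly the sum of the two edge counts, and its vertex count is one less than the sum. Hence its size, vertices plus edges, is at most $l_1+l_2$.

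An alternative check uses the ABP-process view, though it only gives the bound up to constants, which would also suffice for the asymptotic uses of the fact. Run the process for $\po w_1$. Then replay the process for $\po w_2$, starting from $\po w_1$ instead of the identity $1$. Every multiplication and addition step is linear over the starting element, so the replay ends with $\po w_1\cdot\po w_2$.

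The only point needing care is that the identification of $t_1$ with $s_2$ creates no extra source-to-sink paths. This follows because $t_1$ has no outgoing edges in $G_1$ and $s_2$ has no incoming edges in $G_2$. We may assume both without loss of generality, since edges out of a sink or into a source lie on no source-to-sink path and can be deleted without increasing the size. I do not expect any genuine obstacle here.
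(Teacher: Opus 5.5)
Your proof is correct and follows the paper's argument: the paper also puts the ABP for $\po w_2$ in series after the one for $\po w_1$, identifying the sink of the first with the source of the second. Your path decomposition, distributivity computation and count giving size $l_1+l_2-1$ fill in details the paper leaves implicit.
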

\begin{proof}
    In the DAG definition of ABP, we indentify the source of ABP for $\po w_2$ with a sink of ABP for $\po w_1$  and we get an ABP for $\po w_1 \cdot \po w_2$.
\end{proof}

\subsection{Circuit Combinatorics}
In this paper we will consider bounded depth Boolean circuit families with the gates of unbounded fan-in. As we consider only decision problems, we will always have only one output gate. As opposed to usual notions in circuit complexity, gates of our circuits will have different sizes, as they will sometimes represent complex computational objects. By default gates have size one unless specified directly. 
We measure size of the circuit as the total number of its edges plus the sum of the sizes of its gates. 

 Gate of type $\MODG{m}$ is labelled with an accepting set $S \subseteq \mathbb{Z}_m$. It takes an arbitrary number of Boolean inputs and sums them modulo $m$. The gate returns $1$ iff the sum belongs to $S$. This gate is always of size one. The $CC_h[m]$ circuit is a circuit which has $h$ layers with $\MODG{m}$ gate on each layer. 

 Gate $\ANDG{d}$ computes arity $d$ conjunction of its input wires. We will mention the complexity class $\mathsf{AC}_h[m]$. This are languages recognized by circuits of height $h$ using gates $\ANDG{}$, $\ORG{}$, $\MODG{m}$ of unbounded fan-in and the $\neg$ gate.
 
 Gate of type $\PROGG{\m A}$ is labeled with an $n$-ary program $\progb{\po p}{\iota}{s}$ over $\m A$. It takes boolean inputs $b_1, \ldots, b_n$ and returns $\progb{\po p}{\iota}{s}(b_1, \ldots, b_n)$. The size of such gate is the size of the circuit defining $\po p$ in $\m A$. Gate of type $\BABPG{p}$ is labelled with $n$-variate $ABP$ over $GF(p)$ and the accepting set $S \subseteq GF(p)$. It take $n$ Boolean inputs and returns one if evaluation of the polynomial corresponding to this $ABP$ on these inputs belongs to set $S$. The size of such gate is just the size of the $ABP$. 

We will do inductive proofs on both the structure of the algebra $\m A$ as well as the definition of the circuits/polynomials defined over $\m A$, for this reason we introduce the following notation.
\begin{definition}
For a circuit $\po p$ over an algebra $\m A$ and its gate $G$ we write $\po p_G$ for a unique subcircuit of $\po p$ which consists  output gate $G$ and all its ancestors in $\po p$. By $U(G)$ we mean the set of all  ancestors of G in circuit $\po p$.

We write $\po f_G$ for a basic operation computed by the gate $G$.
\end{definition}

%type 3 and type 2 over type 4

\section{Local lattice behaviour}
In this section we consider languages recognized by NuDFA's over algebras with typeset equal to $\set{\tn 4}$. First we define binary relation which plays a central role in our proofs.
\begin{definition}
Let $\m A$ be a finite algebra, let $\alpha \prec \beta$ be congruences of $\m A$, and let $U = \{0_U, 1_U\}$ be a $(\alpha,\beta)$-minimal set of type $\tn 4$ with no tail. Let $e_1,e_2,\ldots,e_k$ be all polynomials of $\m A$ whose ranges are equal to $U$. 

\[
\le_U{\defeq}\set{(a,b)\in \beta : e_i(a)\le e_i(b)\text{ for } i=1,\ldots,k}.
\]
\end{definition}

It turns out that $\le_U$ is a preorder preserved by all operation of the algebra.

\begin{lemma}\label{lm-order}
Let $\m A$ be a finite algebra, let $\alpha\prec\beta$ be congruences of $\m A$, and let $U$ be a $(\alpha,\beta)$-minimal set of type $\tn 4$ with no tails. Then:
\begin{enumerate}
    \item $\le_U$ is a preorder;
    \item\label{lm-order-item-2} $\le_U/\alpha$ is an order on the $\beta/\alpha$-cosets of $\m A/\alpha$;
    \item\label{lm-order-item-3} every polynomial of $\m A$ preserves $\le_U$.
\end{enumerate}
\end{lemma}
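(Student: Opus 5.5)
The plan is to treat the three items in the order (1), (3), (2). Items (1) and (3) are elementary once we know $U$ has exactly two elements. Item (2) needs one genuine fact from tame congruence theory: minimal sets separate $\beta$ from $\alpha$.

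\emph{Setup.} Since $U$ is of type $\tn 4$ and has no tail, $U$ is a single trace $\set{0_U,1_U}$. Its two elements are $\beta$-related but not $\alpha$-related. The induced algebra $\m A|_U$ is, modulo $\alpha$, the two-element lattice, and we fix the order $0_U<1_U$ on $U$.

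\emph{Item (1).} Reflexivity of $\le_U$ holds because $\beta$ is reflexive and $e_i(a)\le e_i(a)$. Transitivity holds because $\beta$ is transitive and $\le$ on $U$ is transitive, applied coordinatewise in $i$.

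\emph{Item (3).} Let $\po f$ be an $n$-ary polynomial and suppose $a_j\le_U b_j$ for all $j$. By item (1) it suffices to change one coordinate at a time. So we consider the unary polynomial $g(x)=\po f(b_1,\ldots,b_{j-1},x,a_{j+1},\ldots,a_n)$ and show $g(a_j)\le_U g(b_j)$.
\begin{itemize}
\item Membership in $\beta$ is immediate, since $\beta$ is a congruence.
\item For each $e_i$, put $h=e_i\circ g$. This is a unary polynomial with $h(A)\subseteq U$.
\item If $h(A)=U$, then $h$ is itself one of the $e_k$, so $h(a_j)\le h(b_j)$ by the assumption $a_j\le_U b_j$.
\item Otherwise $h(A)$ is a nonempty proper subset of the two-element set $U$, so $h$ is constant. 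Then $h(a_j)=h(b_j)$.
\end{itemize}
In both cases $e_i(g(a_j))\le e_i(g(b_j))$, which gives $g(a_j)\le_U g(b_j)$.

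\emph{Item (2).} First, $\le_U$ is well defined on $\alpha$-classes. If $a\mathrel{\alpha} a'$, then $e_i(a)\mathrel{\alpha} e_i(a')$ with both values in $U$. The two elements of $U$ lie in different $\alpha$-classes, so $e_i(a)=e_i(a')$. Hence $\le_U$ depends only on $\alpha$-classes, and it relates only elements lying in a common $\beta$-class.

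For antisymmetry, suppose $a\le_U b$ and $b\le_U a$. Then $(a,b)\in\beta$ and $e_i(a)=e_i(b)$ for every $i$. Assume towards a contradiction that $(a,b)\notin\alpha$. By the separation property of minimal sets in tame congruence theory (Hobby--McKenzie, \cite{hm}, Thm.~2.8), there is a unary polynomial $e$ with $e(A)=U$ and $(e(a),e(b))\notin\alpha$. Such an $e$ is one of the $e_i$, and it satisfies $e(a)\ne e(b)$, a contradiction. Hence $(a,b)\in\alpha$, and $\le_U/\alpha$ is a partial order on each $\beta/\alpha$-coset.

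The main obstacle is this last step: correctly invoking the separation property. We need that every pair in $\beta\setminus\alpha$ is separated modulo $\alpha$ by some polynomial whose image is exactly $U$. This comes from the fact that all $(\alpha,\beta)$-minimal sets are polynomially isomorphic, so $U$ can replace whatever minimal set separates the pair. The rest is routine.
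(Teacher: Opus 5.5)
Your proof is correct and is essentially the paper's argument: (1) comes from the coordinatewise order, and (3) reduces to a unary polynomial $g$ and uses that each $e_i\circ g$ is either constant or itself one of the $e_j$. Item (2) is injectivity of $x\mapsto(e_1(x),\ldots,e_k(x))$ modulo $\alpha$ on $\beta$-classes via \cite[Theorem~2.8]{hm}, and your explicit check that each $e_i$ is constant on $\alpha$-classes fills in a step the paper leaves implicit. The only inaccuracy is your closing heuristic: polynomial isomorphism of minimal sets merely transfers separation from one minimal set to another, whereas the existence of a separating $e$ with $e(A)=U$ comes from a Mal'cev chain for $\beta=\alpha\vee\mathrm{Cg}(a,b)$, pushed through an idempotent polynomial onto $U$, together with minimality of $U$; since you cite the theorem directly, this does not affect the proof.
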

\begin{proof}
Let $e_1,e_2,\ldots,e_k$ be the polynomials of $\m A$ with range equal to $U$. By definition, $\le_U$ inherits reflexivity and transitivity from the coordinatewise order on $2^k$, and hence $\le_U$ is a preorder.

By \cite[Theorem~2.8(4)]{hm}, for every pair $(a,b)\in \beta\setminus\alpha$ there exists a unary polynomial $\po f$ such that $\po f(A)=U$ and $(\po f(a),\po f(b))\in \beta\setminus\alpha$. Hence the mapping
\[
x \mapsto(e_1(x),\ldots,e_k(x)),
\]
for fixed $a$, is an embedding of the $\beta/\alpha$-coset of $a/\alpha$ into $2^k$. This proves \cref{lm-order-item-2}.

To prove \cref{lm-order-item-3}, assume for contradiction that there exists an $n$-ary polynomial $\po p$ of $\m A$ and tuples $\overline a,\overline b\in A^n$ such that $a_i\le_U b_i$ for all $i$, but
\[
\po p(a_1,\ldots,a_n)\not\le_U \po p(b_1,\ldots,b_n).
\]
Replacing, step by step, $a_i$ with $b_i$, we will obtain at some point $\overline{a'}$ and $\overline{b'}$ which differ on one position only such that  $a'_i\le_U b'_i$ for every $i$ and $\po p(a'_1,\ldots,a'_n)\not\le_U\po p(b'_1,\ldots,b_n')$. Hence, we may assume that $\po p$ is unary and that $a\le_U b$ while $\po p(a)\not\le_U \po p(b)$. Then there exists $i$ such that
\[
e_i(\po p(a))>e_i(\po p(b)).
\]
Since $e_i\circ \po p$ is a polynomial with range $U$, there exists $j$ such that $e_j=e_i\circ \po p$. Consequently,
\[
e_j(a)>e_j(b),
\]
contradicting $a\le_U b$.
\end{proof}
%Points $1,2$ follow from the fact that we project $A$ to the set $U$ on which we have a natural order. Polynomials of $\m A$ preserve $\le_U$  because if they wouldn't $e_1, \ldots, e_k$ would not be all polynomials whose ranges are equal to $U$.

If $\m A$ is totally ordered then the class of languages recognized by NuDFA's over $\m A$ is quite restricted.

\begin{lemma}\label{lm-totally-ordered}
Let $\m A$ be a totally ordered algebra. Then,
\[
\mathsf{nuP}_{\m A} \subseteq \MMON.
\]
\end{lemma}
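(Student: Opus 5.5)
Fix the total order $\le$ on $A$ preserved by all polynomial operations, and write $A=\{a_1<\dots<a_m\}$. Take a program $\progb{\po p}{\iota}{S}$ of size $s$ over $\m A$ on Boolean inputs. The idea is to simulate $\po p$ by monotone Boolean circuits using threshold encodings, and to absorb the non-monotone parts (the input map $\iota$ and the accepting set $S$) into a bounded-arity layer $\BOUNDA{c}$.

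Step 1: the threshold encoding. For each gate $G$ and each $j=2,\dots,m$, introduce a Boolean wire $[G\ge a_j]$. Each basic operation $\po f$ of $\m A$ has bounded arity $r$ and preserves $\le$, so it is monotone in every argument. Hence
\[
\po f(x_1,\dots,x_r)\ge a_j \iff \bigvee_{\overline c\,:\,\po f(\overline c)\ge a_j}\ \bigwedge_{i=1}^{r}[x_i\ge c_i].
\]
The direction $(\Rightarrow)$ takes $\overline c=\overline x$. The direction $(\Leftarrow)$ uses monotonicity: if $x_i\ge c_i$ for all $i$, then $\po f(\overline x)\ge\po f(\overline c)\ge a_j$. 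This is a monotone formula of constant size $O(m^r)$ in the threshold wires of the inputs of $G$. Constants in $\po p$ contribute constant wires $0$ or $1$. Composing these formulas along $\po p$ gives a monotone circuit of size $O(s)$ that computes every wire $[\po p\ge a_j]$ from the input wires $[x_i\ge a_j]$.

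Step 2: the inputs. Here $[x_i\ge a_j]=[\iota(b_i)\ge a_j]$ is a Boolean function of the single bit $b_i$, so it is one of $0$, $1$, $b_i$ or $\neg b_i$. The constants are harmless. The trouble is that $\neg b_i$ may occur, and for different $j$ both $b_i$ and $\neg b_i$ may appear. Two cases arise:
\begin{itemize}
\item If $\iota(0)\le\iota(1)$, every wire is $0$, $1$ or $b_i$, and the circuit is monotone in $\overline b$.
\item If $\iota(0)>\iota(1)$, every wire is $0$, $1$ or $\neg b_i$. The whole circuit is then a monotone circuit applied to $\neg b_1,\dots,\neg b_n$, which by De Morgan equals the negation of a monotone circuit $\po q$ in $\overline b$.
\end{itemize}
Since $\iota$ is fixed for the program, one of these cases holds uniformly, so no mixed inputs occur. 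Also, $\MMON$ is closed under bitwise complement, as the paper notes.

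Step 3: the output. Membership $\po p\in S$ is a Boolean function of the $m-1$ wires $[\po p\ge a_2],\dots,[\po p\ge a_m]$, so it is a $\BOUNDA{m-1}$ gate placed on top. In the second case of Step 2, each top wire is the negation of a monotone circuit, and those negations can be absorbed into the bounded-arity gate. The result is a $\circuittwo{\BOUNDA{c}}{\MON}$ circuit of size polynomial in $s$. For a polynomial-size NuDFA this gives $\palg{\m A}\subseteq\MMON$.

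The only delicate point is checking that the threshold formula is correct, which depends essentially on $\le$ being a total order preserved by the operations. The rest is bookkeeping: the constant blow-up per gate, and using that $\iota$ is fixed so the inputs are uniformly monotone or uniformly antitone.
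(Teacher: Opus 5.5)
Your proof is correct and follows essentially the same route as the paper. Your threshold wires $[G\ge a_j]$ are exactly the paper's assumption that $A\subseteq\{0,1\}^k$ carries the coordinatewise order, and your explicit DNF formula justifies the paper's unproved claim that each coordinate of a basic operation is a circuit over $(2,\wedge,\vee)$. Your handling of $\iota$ (uniformly monotone or uniformly antitone, resolved by De Morgan) and of $S$ (a bounded-arity gate on the output thresholds) matches the paper's use of an upper-set decomposition of $S$ together with the closure of $\MMON$ under bitwise complement.
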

\begin{proof}
Assume without loss of generality that $A\subseteq\set{0,1}^k$ for some $k$ and that total order on $A$ is just  the coordinatewise order on $\set{0,1}^k$. Let $\pi_i$ be a function which project element of $A$ onto its $i$-th coordinate. Note that since $\m A$ is totally ordered, for every basic operation of $\m A$ its projection on any coordinate can be expressed  as a circuit over $(2,\wedge, \vee)$ which takes arguments from coordinates as inputs. Hence, projection of every polynomial size circuit of $\m A$ can be expressed as a polynomial size circuit over two element lattice. The same is true for characteristic function of every upper set of $A$ except the size which, in this case, is bounded by a constant. 

First, we will show that polynomial size NuDFA $(\set{\po t_i}_{i=0}^{\infty},\iota, S)$ with monotone $\iota$ and $S$ being upper set of $A$ recognize language contained in $\MON$. In such a case $\iota(0)\le\iota(1)$ and hence $\pi_i(\iota(0))\le\pi_i(\iota(1))$ for all $i$. Hence, for all $i$ either $\pi_i(\iota(0))=\pi_i(\iota(1))$ or $\pi_i(\iota(0))=0$ and $\pi_i(\iota(1))=1$. Therefore, for every $i$,  $\pi_i(\po t_n(\iota(b_1),\ldots,\iota(b_n)))$ can be expressed as a polynomial size circuit over two element lattice on inputs $b_i$. As a consequence, $\po t_n[\iota, S](\overline{b})$ is a composition of monotone circuits computing\break $\pi_i(\po t_n(\iota(b_1),\ldots,\iota(b_n)))$, for ever $i$, and the characteristic function of $S$. 

Now, we will show that ever language recognized by NuDFA over $\m A$ is contained in $\MMON$. Note, that for ever $S\subseteq A$ it holds that $S=\bigcup_{s\in S}(s\mathord{\uparrow}\setminus \bigcup_{a>s}a\mathord{\uparrow})$, where $s\mathord{\uparrow}$ is the upper set $s\mathord{\uparrow}=\set{a\in A:a\geq s}$. Hence, since every set is a result of finitely many set operations on upper sets, the characteristic function of a language recognized by polynomial size  NuDFA over $\m A$ with monotone function $\iota$ and arbitrary set $S$ can be computed by composition of monotone circuits and bounded arity Boolean circuit i.e  $\BOUNDA{c}\circ\MON\subseteq\MMON$.  Finally, observe that if $\iota$ is not monotone, then $\iota'(x)=\iota(\neg x)$ is, and polynomial size NuDFA $\m D'=(\set{\po t}_{i=1}^{\infty},\iota', S)$ recognizes language contained in $\MMON$. Since, language recognized by $\m D=(\set{\po t}_{i=1}^{\infty},\iota', S)$ is just a bitwise complement of language recognized by $\m D'$ and $\MMON$ is closed under taking  bitwise complement, it  follows that language recognized by $\m D$ is contained in $\MMON$. This observation completes the proof.
\end{proof}
Finally, we are ready to prove the main result of this section.
\begin{lemma}\label{lm-subdirect-type-4}
Let $\m A$ be a subdirectly irreducible algebra with $\typset{\m A}=\set{\tn 4}$, monolith $\mu$ and such that $(0_{\m A},\mu)$-minimal sets have no tails.
If there exists a total order preserved by all operations of $\m A$, then
\[
\mathsf{nuP}_{\m A} \subseteq \MMON.
\]
Otherwise,
\[
\mathsf{nuP}_{\m A} = \textsc{P}/\text{poly}.
\]
\end{lemma}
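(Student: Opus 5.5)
The first case is immediate: if $\m A$ is totally ordered (and every polynomial preserves the order, because polynomials are composed of basic operations and constants), then \cref{lm-totally-ordered} gives $\mathsf{nuP}_{\m A}\subseteq\MMON$. The whole difficulty is the converse. Assuming no total order is preserved, we must embed arbitrary Boolean circuits, which means building a negation on some two-element set that also carries $\wedge,\vee$. The plan is to use the minimal set $U=\{0_U,1_U\}$ of the monolith $\mu$, on which $\m A|_U$ is a two-element lattice, and to show that some polynomial reverses $U$.

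\textbf{Step 1 (the order).} By \cref{lm-order} applied to $0_{\m A}\prec\mu$, the relation $\le_U$ is a partial order on each $\mu$-class (since $\alpha=0_{\m A}$), and it is preserved by all polynomials. Extend it by reflexivity to a relation on $A$. I first claim that if $\le_U$ restricted to each $\mu$-block, together with the induced behaviour between blocks, could be refined to a total order preserved by all operations, then we would be in the first case. So the absence of a compatible total order must be witnessed somewhere.

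\textbf{Step 2 (producing a negation).} The key claim is that there is a unary polynomial $\po f$ with $\po f(0_U)=1_U$ and $\po f(1_U)=0_U$. Given such $\po f$, $(U,\wedge_U,\vee_U,\po f)$ is a two-element Boolean algebra. Take $\iota$ mapping into $U$ and $S=\{1_U\}$, and encode any polynomial-size Boolean circuit gate by gate. This yields $\ppoly\subseteq\mathsf{nuP}_{\m A}$, and the reverse inclusion is trivial.

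To prove the claim, suppose instead that every unary polynomial mapping $U$ into $U$ with image $U$ is order preserving on $U$. Since every $(0,\mu)$-minimal set is polynomially isomorphic to $U$, any polynomial reversing some pair in the $\le_U$ order would, after composition with the $e_i$, give a polynomial that reverses $U$ itself. Hence all polynomials are monotone for $\le_U$ inside $\mu$-classes.

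\textbf{Step 3 (extending to a total order).} We now use subdirect irreducibility together with $\typset{\m A}=\set{\tn 4}$ to propagate this monotonicity up the congruence lattice. For a pair $(a,b)\notin\mu$, separation of the pair by a polynomial into $U$, which holds by the TCT fact \cite[Theorem~2.8]{hm} applied along a chain of type-$\tn 4$ quotients, lets us define $a\le b$ iff $\po e(a)\le \po e(b)$ for all polynomials $\po e$ with range $U$. This relation is compatible, by the argument of \cref{lm-order}(3), and antisymmetric, because $\mu$ is the monolith, so every nontrivial pair is separated into $U$ by some unary polynomial. Totality fails only if some $\po e_i,\po e_j$ order a pair oppositely. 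If that happens, combining $\po e_i,\po e_j$ with lattice polynomials on $U$ yields a reversal on $U$, contradicting Step 2.

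I expect this last argument to be the main obstacle: getting a genuine negation out of two incomparable separations. I would first try the construction $x\mapsto \po e_i(\po p(x))$, where $\po p$ is a binary polynomial built from $\po e_j$, and then use the minimality and idempotence of $\po e_U$ to land exactly on $\{0_U,1_U\}$ with the values swapped.
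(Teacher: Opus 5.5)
Your first case is fine, but the converse rests on a claim that is false under the hypotheses of the lemma. Step~2 asks for a unary polynomial $\po f$ with $\po f(0_U)=1_U$ and $\po f(1_U)=0_U$. Because $U$ has no tail and a type~$\tn 4$ minimal set has a single trace, $U$ is itself the trace. So $\m A|_U$ is polynomially equivalent to the two-element lattice, and every polynomial of $\m A$ mapping $U$ into $U$ is monotone on $U$. A swap of $0_U,1_U$ would be a negation in $\m A|_U$, making $U$ of type $\tn 3$.

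Consequently, the assumption you adopt for contradiction in Step~2 (every such unary polynomial is order preserving on $U$) always holds. The final move of Step~3, extracting a reversal on $U$ from two polynomials $\po e_i,\po e_j$ that order some pair oppositely, can never be carried out. If your argument worked, it would prove that every algebra satisfying the hypotheses is totally ordered. This already fails for the simple lattice $M_3$: all its polynomials are monotone for the lattice order, so no reversal exists, yet it preserves no total order, and the lemma gives $\palg{M_3}=\ppoly$.

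The missing idea is that you need negated literals, not a negation on $U$, and these come from the choice of $\iota$. Your Step~3 already reaches the right dichotomy. Since $\mu\subseteq\Theta(a,b)$ for all $a\neq b$, the map $x\mapsto(\po e_1(x),\ldots,\po e_k(x))$ is injective on $A$. Hence $\le_U$ is a partial order on all of $A$, preserved by all polynomials. If it is total, you are in the first case. Otherwise there are $a,b$ and $i,j$ with $\po e_i(a)<\po e_i(b)$ and $\po e_j(a)>\po e_j(b)$.

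Stop there and build the simulation directly:
\begin{itemize}
\item push the negations of a Boolean circuit to the leaves by De Morgan;
\item set $\iota(0)=a$ and $\iota(1)=b$;
\item replace each positive literal $x_m$ by $\po e_i(x_m)$ and each negated literal by $\po e_j(x_m)$;
\item replace $\wedge,\vee$ by $\wedge_U,\vee_U$, and accept on $\set{1_U}$.
\end{itemize}
This is the paper's proof, and it gives $\ppoly\subseteq\palg{\m A}$ with linear blow-up.

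Note that $\iota$ must map outside $U$. With $\iota$ into $U$, as you propose, the map $\ov b\mapsto\po t(\iota(\ov b))$ is monotone or antitone into $(A,\le_U)$ for every circuit $\po t$. Such programs therefore only yield Boolean combinations of boundedly many monotone functions.
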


\begin{proof}
Let  $U=\set{0_U,1_U}$ be a $(0_{\m A},\mu)$-minimal set of $\m A$. Let $\po e_1,\po e_2,\ldots,\po e_k$ be the  polynomials of $\m A$ with range equal to $U$. Assume without loss of generality that $\po e_1$ is an idempotent operation.  We will show that if $\palg{\m A}\neq \textsc{P}/\text{poly}$, then $\le_U$ is a total order.

First note that for every $a,b\in A$ we have $\mu\subseteq \Theta(a,b)$. Hence there exists a sequence $\{\po f_1,\ldots,\po f_l\}\subseteq \poln{1}{A}$ such that
\[
0\in\{\po f_1(a),\po f_1(b)\},\quad
1\in\{\po f_l(a),\po f_l(b)\},
\]
and
\[
\{\po f_i(a),\po f_i(b)\}\cap \{\po f_{i+1}(a),\po f_{i+1}(b)\}\neq\emptyset
\]
for all $i=1,\ldots,l-1$.

Thus there exists $i$ such that $\po e_1(\po f_i(a))\neq \po e_1(\po f_i(b))$, and consequently there exists $j$ such that $\po e_j=\po e_1\circ \po f_i$. It follows that for all distinct $a,b\in A$,
\[
(\po e_1(a),\ldots,\po e_k(a))\neq (\po e_1(b),\ldots,\po e_k(b)),
\]
and hence $\le_U$ is an order on $A$.

Assume now that there exist $a,b\in A$ such that neither $a\le_U b$ nor $b\le_U a$. Then there exist $i,j$ such that
\[
\po e_i(a)< \po e_i(b)\quad\text{and}\quad \po e_j(a)>\po e_j(b).
\]
Let $C(b_1,\ldots,b_n)$ be a Boolean circuit. By De Morgan's laws, we may assume that all negations occur only at the input level. Replace the Boolean operations $\land,\lor$ by the operations $\land_U,\lor_U$ on $U$ to obtain a polynomial $\po p$ of $\m A$. If the $m$-th input is negated, replace it by $\po e_j(x_m)$; otherwise replace it by $\po e_i(x_m)$. Let the projection $\iota$ be such that $\iota(0)=a$ and $\iota(1)=b$, and let the accepting set be $S=\set{1_U}$.

The resulting program computes the same Boolean function as $C$, and its size is linear in the size of $C$. Hence $\palg{\m A}=\ppoly$.
\end{proof}

\section{The structure of finite algebras}
In this section we show a structural results giving description of algebras from congruence modular variety.  We strongly believe that these results are of independent interest and can be generalised to work for arbitrary finite algebras. 

\begin{lemma}\label{lm-decomposition}
Let $\m A$ be a finite algebra from a congruence modular variety, let $\alpha\in\con{\m A}$ be an atom, and let $U$ be a $(0_{\m A},\alpha)$-minimal set.

Then there exists $k$ such that $\m A$ is isomorphic to a subreduct of $U^{[k]}\boxtimes \m A/\alpha$.
\end{lemma}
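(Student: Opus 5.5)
We embed $\m A$ into $U^{[k]}\boxtimes \m A/\alpha$ by the map
\[
\phi(x) = \bigl(\po e_1(x),\ldots,\po e_k(x),\; x/\alpha\bigr),
\]
where $\po e_1,\ldots,\po e_k$ are all unary polynomials of $\m A$ with range $U$ (equivalently, after composing with a fixed idempotent $\po e_U$, all maps $\po e_U\circ\po f$ with $\po f\in\poln{1}{A}$). The proof has three steps: show $\phi$ is injective, show the image is closed under polynomial operations acting in the required triangular fashion, and check the action-product conditions.

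\textbf{Injectivity.} Suppose $\phi(a)=\phi(b)$ with $a\neq b$. Since $a/\alpha=b/\alpha$, we have $(a,b)\in\alpha\setminus 0_{\m A}$. By \cite[Theorem~2.8(4)]{hm}, as in the proof of \cref{lm-order}, there is a unary polynomial $\po f$ with $\po f(A)=U$ and $\po f(a)\neq\po f(b)$. This $\po f$ is one of the $\po e_i$, a contradiction. So $\phi$ is an embedding of sets.

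\textbf{Operations on the image.} For each basic operation $\po g$ of $\m A$ we define the operation on the image coordinatewise.
\begin{itemize}
\item On the $\m A/\alpha$ coordinate, $\po g(\overline x)/\alpha=\po g^{\m A/\alpha}(\overline x/\alpha)$, which depends only on that coordinate, as the action product requires.
\item On the $U^{[k]}$ coordinates we need $\po e_i(\po g(x_1,\ldots,x_n))$ to be expressed by a polynomial of $\m U$ (the induced algebra $\m A|_U$) in the values $\po e_j(x_l)$, with parameters depending only on $\overline x/\alpha$.
\end{itemize}

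\textbf{Main obstacle: the second bullet.} Fix representatives $c_l$ of the classes $x_l/\alpha$. Since $\m A$ is in a congruence modular variety, $U$ has no tail. If the type is $\tn 2$, $\m A$ is locally Maltsev on $\alpha$-classes, and the term condition (abelianness of $(0,\alpha)$, which holds when the type is $\tn 2$) lets us write
\[
\po e_i\po g(\overline x)=\po d\bigl(\po e_i\po g(\ldots x_l\ldots),\ \po e_i\po g(\overline c),\ \ldots\bigr),
\]
using the Maltsev polynomial $\po d$ of $\m U$. We then change one argument at a time, as in the proof of \cref{lm-order}: each unary map $y\mapsto \po e_i\po g(c_1,\ldots,y,\ldots,c_n)$ restricted to $x_l/\alpha$ factors through some $\po e_j$ as $\po h\circ\po e_j$, with $\po h$ a polynomial of $\m U$. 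We get the $\po e_j$ by composing with $\po e_U$ and minimality; we get the factorisation because the trace is the whole $\alpha$-class intersected with $U$.

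If the type is $\tn 3$ or $\tn 4$, $|U|=2$ and $\alpha$-classes have at most two elements that matter. Here the dependence on the $\po e_j(x_l)$ is a Boolean function, and we express it using the lattice/Boolean polynomials of $\m U$ through a DNF over the bits $\po e_j(x_l)$. This is legitimate because each $\po e_i\po g$, with the $\alpha$-classes fixed, is determined by those bits (by injectivity), and because $\m U$ has enough polynomials: all of them in type $\tn 3$, and in type $\tn 4$ only the monotone ones. For type $\tn 4$ we therefore need monotonicity, which we get from \cref{lm-order}(3), since $\le_U$ is preserved.

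\textbf{Conclusion.} The image $\phi(A)$ is closed under these operations, and $\phi$ is a homomorphism onto a subreduct of $U^{[k]}\boxtimes\m A/\alpha$. The step I expect to be hardest is the uniform factorisation through $\m U$-polynomials, in particular making the parameters depend only on the $\m A/\alpha$ coordinate.
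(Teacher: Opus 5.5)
Your proposal is correct and follows essentially the same route as the paper. It uses the same embedding $x\mapsto(\po e_1(x),\ldots,\po e_k(x),x/\alpha)$, with injectivity from \cite[Theorem~2.8(4)]{hm}, and the same case split: a one-variable-at-a-time affine decomposition via abelianness in type $\tn 2$, arbitrary Boolean functions in type $\tn 3$, and monotone lattice polynomials justified by \cref{lm-order} in type $\tn 4$. The differences are only cosmetic: you keep only the $\po e_i$ with range exactly $U$ (the others are constant on $\alpha$-classes and are absorbed into the parameters), and in type $\tn 2$ you use the term condition with a Maltsev polynomial of $\m U$ rather than the commuting difference term from \cite{fm}.
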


\begin{proof}
Note that for every $(a,b)\in\alpha$ there exists $f\in \poln{1}{A}$ such that $f(\m A)\subseteq U$ and $f(a)\neq f(b)$ (\cite[Theorem~2.8(4)]{hm}). Let $\set{\po e_1,\ldots,\po e_k}$ be the set of all such unary polynomials. It is easy to see that for $a,b\in A$ we have
\[
a=b \iff (\po e_1(a),\ldots,\po e_k(a),a/\alpha)=(\po e_1(b),\ldots,\po e_k(b),b/\alpha).
\]

We will show that $\m A$ is isomorphic to a subreduct of $U^{[k]}\boxtimes \m A/\alpha$ via the mapping
\[
h(x)=(\po e_1(x),\ldots,\po e_k(x),x/\alpha).
\]
It suffices to show that for every basic $n$-ary operation $\po f^{\m A}$ of $\m A$ there exists an operation $\po f^{U^{[k]}\boxtimes \m A/\alpha}$ such that
\[
h(\po f^{\m A}(x_1,\ldots,x_n))
=
\po f^{U^{[k]}\boxtimes \m A/\alpha}(h(x_1),\ldots,h(x_n)).
\]

Fix $a_1,\ldots,a_n\in A$. It is enough to show that for each $i$ there exists a polynomial $\po p_i$ of $U$ such that
\[
\po e_i(\po f(x_1,\ldots,x_n))
=
\po p_i(\po e_1(x_1),\ldots,\po e_1(x_n),\ldots,\po e_k(x_1),\ldots, \po e_k(x_n))
\]
for all $x_1\in a_1/\alpha,\ldots,x_n\in a_n/\alpha$.

We split the rest of the proof according to the TCT type of $U$. Since $\m A$ belongs to a congruence modular variety, the only possible types of $U$ are $\tn 2$, $\tn 3$, and $\tn 4$.

\paragraph{Type $\tn 2$.}
In this case $\alpha$ is abelian, i.e.\ $[\alpha,\alpha]=0_{\m A}$. Hence a difference term $d$ defines ternary abelian group operations on $\alpha$-cosets (see \cite[Lemma~5.6]{fm}) and commutes with every polynomial of $\m A$ on $\alpha$-related arguments (see \cite[Proposition~5.7]{fm}). Moreover, the difference term induces a group operation on the traces of $U$, since
\[
\po d(\po e_U(x),\po e_U(y),\po e_U(z))=\po e_U(\po d(x,y,z))
\]
for all $x,y,z\in U$.

Let $\set{0_{a/\alpha}:a\in A}$ be a transversal of $\alpha$ such that $0_{u/\alpha}\in U$ for every $u\in U$. For $x,y\in a/\alpha$ define $x+y=\po d(x,0_{a/\alpha},y)$. Let $\po f$ be an $n$-ary basic operation of $\m A$, fix $a_1,\ldots,a_n\in A$, and define
\[
\po g_i(x_1,\ldots,x_n)
=
\po e_i(\po f(x_1,\ldots,x_n))
-
\po e_i(\po f(0_{a_1/\alpha},\ldots,0_{a_n/\alpha})).
\]
Let $0=0_{f(a_1,\ldots,a_n)/\alpha}$. Then $g_i(0_{a_1/\alpha},\ldots,0_{a_n/\alpha})=0$. 

Observe that for $x_1\in a_1/\alpha$, \ldots, $x_n\in a_n/\alpha$ 
\begin{align*}
\po g_i(x_1,x_2,\ldots, x_n)=\\=\po g_i(\po d(x_1,0_{a_1/\alpha},0_{a_1/\alpha}),\po d(0_{a_2/\alpha},0_{a_2/\alpha},x_2),\ldots,\po d(0_{a_n/\alpha},0_{a_n/\alpha},x_n))=\\=
\po d(\po g_i(x_1,0_{a_2/\alpha},\ldots, 0_{a_n/\alpha}),\po g_i(0_{a_1/\alpha},0_{a_2/\alpha},\ldots, 0_{a_n/\alpha}),\\\po g_i(0_{a_1/\alpha},x_2,\ldots, x_n))=\\=
\po g_i(x_1,0_{a_2/\alpha},\ldots, 0_{a_n/\alpha})+\po g_i(0_{a_1/\alpha},x_2,\ldots, x_n).
\end{align*}
In this way we obtain that for $x_j\in a_j/\alpha$ 
\[
\po g_i(x_1,\ldots,x_n)
=
\sum_{j=1}^n
\po g_i(0_{a_1/\alpha},\ldots,0_{a_{j-1}/\alpha},x_j,0_{a_{j+1}/\alpha},\ldots,0_{a_n/\alpha}).
\]
Each summand is either constant or a unary polynomial of $\m A$ with image contained in $U$, and hence equals $\po e_{s_j}(x_j)$ for some index $s_j$. Consequently,
\[
\po g_i(x_1,\ldots,x_n)=\sum_{j\in J} \po e_{s_j}(x_j),
\]
and therefore
\[
\po e_i(\po f(x_1,\ldots,x_n))
=
\po p_i(e_1(x_1),\ldots,e_k(x_n))
\]
for some polynomial $\po p_i$ of $U$.

\paragraph{Type $\tn 3$.}
In this case the claim is immediate, since every function on a two-element set is a polynomial of $U$.

\paragraph{Type $\tn 4$.}
Note that, since $\alpha$ is an atom,  $\le_U$, by \cref{lm-order}, is an order on $\alpha$-cosets. Moreover,  for any $n$-ary polynomial $\po f$ of $\m A$ and fixed $a_1,\ldots,a_n\in A$, the restriction
\[
\po f|_{a_1/\alpha\times\cdots\times a_n/\alpha}
\]
is monotone with respect to $\le_U$. Since, every monotone Boolean function is a lattice polynomial, each function $\po e_i\circ \po f$ restricted to these cosets can be expressed as a lattice polynomial in the variables $\po e_j(x_l)$, for $j=1,\ldots,k$ and $l=1,\ldots,n$. This completes the proof.
\end{proof}

In the case of atom of type $\tn 2$ we can give even more precise description. 
\begin{lemma}\label{struct-alp-cong}
Let $\m A$ be a finite algebra from a congruence modular variety, let $\alpha\in\con{\m A}$ be an atom, and let $U$ be a $(0_{\m A},\alpha)$-minimal set of type $\tn 2$.

Then there exists $k$ such that $\m A$ is isomorphic to a subreduct of $\mathbb{Z}_p^{[k]}\boxtimes \m A/\alpha$, where $p$ is a prime divisor of $|U|$.
\end{lemma}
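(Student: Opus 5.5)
We build on \cref{lm-decomposition} in its type $\tn 2$ case. That lemma already gives an embedding $h(x)=(\po e_1(x),\ldots,\po e_k(x),x/\alpha)$ of $\m A$ into $U^{[k]}\boxtimes \m A/\alpha$. It also shows that on $\alpha$-cosets each $\po e_i\circ\po f$ equals a constant plus a sum $\sum_j \po e_{s_j}(x_j)$, where the sum is taken with respect to the group operation $x+y=\po d(x,0,y)$ that the difference term induces on traces. What remains is to replace $U$ by a product of copies of $\mathbb{Z}_p$. So the plan is to represent the relevant part of $U$ as an elementary abelian $p$-group, and then refine the coordinates $\po e_i$ into $\mathbb{Z}_p$-valued coordinates.

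First we use the structure of type $\tn 2$ minimal sets. Since minimal sets in congruence modular varieties have no tails, $U$ is a union of traces $N=u/\alpha\cap U$. Each trace, modulo $0_{\m A}$, is polynomially equivalent to a one-dimensional vector space over $\mathsf{GF}(p^r)$, where $p$ is the characteristic. Its additive group is therefore $\mathbb{Z}_p^r$, and the group operation is exactly the one induced by $\po d$, as observed in the proof of \cref{lm-decomposition}.

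Fix the transversal elements $0_{u/\alpha}\in U$ and, for each trace, fix a linear isomorphism $\phi_N\colon N\to\mathbb{Z}_p^r$. Compose each $\po e_i$ with the appropriate $\phi_N$ on each $\alpha$-coset. The coset itself is remembered by the last coordinate $x/\alpha$, so we may choose $\phi$ depending on the coset. This yields $rk$ coordinates in $\mathbb{Z}_p$, and the new map $h'(x)=(\phi(\po e_1(x)),\ldots,\phi(\po e_k(x)),x/\alpha)$ is still injective.

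Next we verify the action-product condition. For a basic operation $\po f$ with $x_j\in a_j/\alpha$, the $\mathbb{Z}_p$-coordinates of $\po e_i(\po f(\bar x))$ must be polynomials of $\mathbb{Z}_p$ in the $\mathbb{Z}_p$-coordinates of the $\po e_l(x_j)$. The parameters $a_j/\alpha$ are read off the last block. By the additivity computation above, $\po e_i\circ \po f$ is a constant plus $\sum_j \po e_{s_j}(x_j)$. Here the sum lives in the trace containing $\po e_i(\po f(\bar a))$, while the summands $\po e_{s_j}(x_j)$ are interpreted in that same trace, because each summand is shifted by the transversal zero. After applying $\phi$, addition becomes addition in $\mathbb{Z}_p^r$ and the constant becomes a constant vector. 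Each coordinate is therefore an affine $\mathbb{Z}_p$-combination, hence a polynomial of $\mathbb{Z}_p$. The coordinate $x/\alpha$ transforms by $\m A/\alpha$ as before.

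The main obstacle is to make the identification of the trace addition with $\mathbb{Z}_p^r$ coherent across different traces and cosets. Specifically, one must check two things. First, $\po d$ restricted to a trace coincides with the vector-space addition, up to the isomorphism with the induced algebra modulo $0_{\m A}$. Second, the unary polynomials $\po e_{s_j}$ used in the decomposition really land in a single trace and act as group homomorphisms shifted by constants. We would first try to obtain both facts from the commutation of $\po d$ with polynomials on $\alpha$-related arguments (\cite[Proposition~5.7]{fm}). Together with idempotence of $\po e_U$, this commutation makes every unary polynomial $\alpha$-affine. Finally, $|U|$ is a power of $p$, which justifies taking $p$ to be a prime divisor of $|U|$.
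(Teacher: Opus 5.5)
Your argument is correct. It takes a more direct route than the paper.

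\textbf{Comparison with the paper.} The paper first builds an intermediate representation $\m A\hookrightarrow \m A|_{N_1}^{[k_1]}\boxtimes \m A/\alpha$. It uses the pseudo-Mal'cev polynomial $\po d$ of \cite[Lemma~4.20]{hm} to get polynomial permutations $x\mapsto \po d(x,a_i,a_1)$ of $U$, which carry every trace $N_i$ onto one fixed trace $N_1$. It composes the coordinate maps with these permutations. Only then does it show separately that $\m A|_{N_1}$ is a reduct of $\mathbb{Z}_p^{[m]}$, using that its polynomials have the form $\sum_i s_i(x_i)+c$ with $s_i$ group endomorphisms.

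You instead compose each $\po e_i(x)$ with an abstract group isomorphism $\phi_N\colon (N,+_{0_N})\to\mathbb{Z}_p^r$, chosen per trace. This is legitimate because the action product only requires $\mathbb{Z}_p$-affine dependence once the $\m A/\alpha$-coordinates are fixed. It does not require the coordinate maps to be polynomials of $\m A$. Your route therefore skips both the transport between traces and the reduct step. It even gives the sharper form ``constant plus a sum of input blocks''. The paper's route, in exchange, gives a more structural intermediate statement, with polynomial coordinate maps into a single trace.

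\textbf{Corrections.}
\begin{itemize}
\item The coherence problem you call the main obstacle does not arise. The summands $\po e_{s_j}(x_j)$ lie in the output trace $N'$, so using $\phi_{N'}$ throughout works. Even an incoherent choice, say $\phi$ depending on the coset and on the index $i$, would be harmless. Any group isomorphism between copies of $\mathbb{Z}_p^r$ is $\mathbb{Z}_p$-linear, so the transition maps are absorbed into the affine form.
\item You never need the $\po e_{s_j}$ to be shifted homomorphisms, since they appear directly as coordinates.
\item What you do need is that $(N,+_{0_N})$ is elementary abelian. This follows because $\po d$ preserves $N$: commuting with $\po e_U$ gives $\po d(x,y,z)\in U$. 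Hence $\po d|_N$ is a polynomial of the induced one-dimensional vector space. It satisfies the Mal'cev identities on $N$, since $[\alpha,\alpha]=0_{\m A}$. So $\po d|_N$ must be $x-y+z$.
\item You use a single $r$. This needs all traces of $U$ to have the same size, which holds; the paper's permutations show it.
\item The claim that $|U|$ is a power of $p$ is neither justified nor needed. $p$ divides $|U|$ simply because $U$ is a disjoint union of traces, each of size $p^r$ with $r\ge 1$.
\end{itemize}
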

We omit a straightforward proof of \cref{struct-alp-cong}. Let us only remark that the proof can be divided into two steps. First we can show that instead of projections into minimal set, used in the proof of \cref{lm-decomposition},  we can project elements of the algebra into one of the minimal set traces. Then, it is enough to show that every trace is subreduct of matrix power of $\mathbb{Z}_p$.

\section{Solvable algebras}
Now we focus on finite algebras $\m A$ which are solvable, i.e.\! they satisfy $\typset{\m A} \subseteq \{2\}$. We are going to characterize $\palg{\m A}$ with the use of algebraic branching programs. It is quite unexpected, that Barrington used nonsolvable groups to demonstrate the power of (Boolean) branching programs, and here we use algebraic branching programs to bound the power of solvable algebras and their circuits.

 The following lemma follows from analysis of the proof of Lemma 9.6 in \cite{IdziakKK25}.

\begin{lemma}{\label{coll:simaff}}
Let $\m A$ be a finite simple algebra of affine type of characteristic $p$. Let $\progb{\po p}{\iota}{S}$ be a program over $\m A$ of length $l$. Then there is an integer $k$ depending only on $\m A$ and a $\BOUNDG{k} \circ \MODG{p}$ circuit of size $O(l)$ computing the same function as $\po p$ does.  
\end{lemma}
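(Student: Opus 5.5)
The plan is to use the structure theory of finite simple affine algebras. If $\m A$ is simple of affine type with characteristic $p$, then $\m A$ is polynomially equivalent to a finite module $\m M$ over a ring $R$. Since $\m A$ is simple, $\m M$ is a simple module, so as an abelian group it is $\mathbb{Z}_p^{d}$ for some $d$ that depends only on $\m A$. In particular, every $n$-ary polynomial of $\m A$ has the affine form
\[
\po p(x_1,\ldots,x_n)=\sum_{j=1}^n r_j x_j + c,
\]
where $r_j\in R$ act as $\mathbb{Z}_p$-linear endomorphisms of $\mathbb{Z}_p^d$ and $c\in M$.

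The first step is to compute this affine normal form from the circuit $\po p$ by induction over its gates. Each basic operation is affine, and a composition of affine maps is affine, so for every gate $G$ the function computed by $\po p_G$ is affine. I will not expand the form explicitly, since that could blow up the size. Instead I plug in Boolean inputs and write $\iota(b_j)=\iota(0)+b_j(\iota(1)-\iota(0))$. Then
\[
\po p[\iota](\overline b)=v_0+\sum_{j=1}^n b_j v_j
\]
with fixed vectors $v_0,v_j\in\mathbb{Z}_p^d$. Projecting to the $d$ coordinates, each coordinate $m$ is $v_0^m+\sum_j b_j v_j^m \bmod p$, and this is exactly what a $\MODG{p}$ gate can read. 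To realize the coefficient $v_j^m\in\mathbb{Z}_p$, the wire from $b_j$ is fed into the gate $v_j^m$ times. Here I use that $\MODG{p}$ gates take multi-edges, or equivalently that they count with multiplicity. The constant $v_0^m$ is absorbed by shifting the accepting set. A single $\MODG{p}$ gate with one accepting residue tests whether a coordinate equals a given value. So there are $d\cdot p$ gates, each outputting $[\text{coordinate }m = s]$, and the bounded-arity top gate in $\BOUNDG{k}$ with $k=dp$ decides whether the vector lies in $S$. This $k$ depends only on $\m A$.

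The main obstacle is the size bound $O(l)$. Naively each $\MODG{p}$ gate has fan-in up to $(p-1)n$, and $n$ may exceed $l$ only if some inputs are unused. Unused inputs have $v_j=0$, so the gate fan-in is $O(\text{number of input occurrences})\le O(l)$. With $dp=O(1)$ gates the total size is $O(l)$. The subtle point is that $\po p$ is a circuit, not a formula, so the coefficient of $x_j$ is a sum over paths. That does not matter for the size, because we only need the resulting coefficient modulo $p$, which is bounded. Computing the coefficients is non-uniform and free, since only the existence of the circuit is required. The remaining care is in justifying the module representation with $d$ bounded; for this I would cite the standard fact from \cite{fm,hm} that abelian simple algebras in a congruence modular variety are affine, together with the argument in \cite[Lemma~9.6]{IdziakKK25}.
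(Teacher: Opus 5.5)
Your proposal is correct and follows essentially the same route as the paper. Both write each coordinate of the program's value as an affine $\mathbb{Z}_p$-combination of the input bits, after affinely interpolating $\iota$; both test each coordinate condition with a single $\MODG{p}$ gate; and both combine the boundedly many results with a constant-arity function (the paper uses an $\ORG{|S|}\circ\ANDG{k}$ layer where you use one $\BOUNDG{dp}$ gate).

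One remark: like the paper, you start from polynomial equivalence to a module. Your argument only uses that every polynomial is coordinatewise an affine $\mathbb{Z}_p$-map, so it also covers subreducts of modules. That matters, because the paper later applies this lemma to arbitrary simple algebras of type $\tn 2$, which need not be Malcev.
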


To precisely characterize $\palg{\m A}$ for nonsimple $\m A$ we introduce the following complexity classes. 

\begin{definition}
    We say that a language $L$ is in the class $\nudet{p_1} \circ \ldots \nudet{p_h}$ whenever it is recognized by a sequence of polynomial size $\BABPG{p_1} \circ \ldots \circ \BABPG{p_h}$ circuits. In particular $\nudet{p}$ is a set of languages recognized by $\BABPG{p}$ gates of the size polynomial in the input legth.
    
    $\nuitdet$ is the set sum of classes $\nudet{p_1} \circ \ldots \nudet{p_h}$ for all sequences of primes $p_1, \ldots, p_h$.
\end{definition}

Notice that for a polynomial $\po w$ of ABP complexity $l$ from \cref{abp-to-det} we can replace evaluation of expression $\po w(\ov b) \in S$ by computing determinant of a matrix of size $O(\text{poly}(l))$ and checking if a result is in $S$. It justifies the name of the class $\nudet{p}$ as the $\BABPG{p}$ gates just evaluate such polynomials. Moreover, computing such determinant can be famously done uniformly in the class $NC^2$. Since we have different gates for different input lengths the resulting class is nonuniform.

\begin{fact}\label{solv-nc_2}
    $\nudet{p} \subseteq \text{(nonuniform) }NC^2$.
\end{fact}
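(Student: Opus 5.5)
The plan is to compose the two polynomial-time reductions the paper already has in hand. By \cref{abp-to-det}, a $\BABPG{p}$ gate of size $l$ computes $\po w(\ov b)\in S$ for some polynomial $\po w$ over $GF(p)$ with an ABP of size $l$. Its determinantal complexity is then bounded by $\mathrm{poly}(l)$. So there is a matrix $M$ of dimension $m=\mathrm{poly}(l)$ with $\det M=\po w$ as polynomials. Each entry of $M$ is a variable $x_i$ or a constant of $GF(p)$.

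Given a language $L\in\nudet{p}$, I fix for each input length $n$ the gate of size $l=\mathrm{poly}(n)$ and hardwire its matrix $M_n$ into a circuit. On Boolean input $\ov b$, the circuit builds the numerical matrix $M_n(\ov b)$ over $GF(p)$. This costs nothing: each entry is a constant or a copy of an input bit, and the bits $0,1$ are themselves elements of $GF(p)$. Elements of $GF(p)$ are encoded in $O(\log p)=O(1)$ bits.

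Next, the circuit computes $\det M_n(\ov b)$ over $GF(p)$ with a depth $O(\log^2 m)$, polynomial-size circuit. For this I invoke Berkowitz's algorithm (or Csanky/Mulmuley-type methods; Berkowitz is division-free, so it works over any field, in particular small characteristic). Berkowitz expresses the characteristic polynomial as an iterated product of $m$ matrices, each of polynomial dimension with entries that are polynomial-size arithmetic expressions in the entries of $M$. The iterated product is computed by a balanced binary tree of matrix multiplications of depth $O(\log m)$. Each multiplication over $GF(p)$ is an $O(\log m)$-depth circuit: products are constant-size Boolean gadgets, and the sum of $m$ terms uses a binary addition tree with constant-size adders mod $p$. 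Altogether this gives depth $O(\log^2 m)=O(\log^2 n)$ and polynomial size. The Boolean circuit then finishes with a constant-size test of whether the result lies in $S$.

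The only real care point is checking that the cited determinant algorithm is division-free and so valid over $GF(p)$ regardless of $p$; Csanky alone would fail for small $p$, which is why I use Berkowitz. Everything else is routine nonuniform bookkeeping: the family $\{M_n\}$ is just hardwired, so no uniformity of the construction is needed.
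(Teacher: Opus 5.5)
Your proposal is correct and follows the paper's own argument: convert the ABP to a polynomial-size determinant via \cref{abp-to-det}, hardwire the matrix for each input length (which is where the nonuniformity enters), and evaluate the determinant over $GF(p)$ in $NC^2$. Choosing the division-free Berkowitz algorithm instead of Csanky is the right precaution for small characteristic. One accounting detail: the Toeplitz entries $R_iM_i^kC_i$ must themselves be computed in $O(\log^2 m)$ depth via matrix powering, and this leaves the overall bound unchanged.
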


On the other hand, every $n$-ary formula in $NC^1$ of size $\text{poly}(n)$ can be converted to arithmetic formula over $GF(p)$ of height $O(\log n)$, then expanded to arithmetic expression of size $O(\text{poly}(n))$ and then written as a determinant of a $O(\text{poly}(n))$ size matrix by \cite{valiant1979completeness}. Hence, by \cref{abp-to-det} $NC^1 \subseteq \nudet{p}$.

In our proofs we will use the following

\begin{fact}\label{compose_poly_const}
Let $p$ be a prime and $k$ be a fixed constant. Let $\po w_1, \po w_2, \ldots, \po w_k$ be $n$-ary  polynomials over the field $GF(p)$ which can be represented by ABPs of size $O(l)$. Let $\po d$ be a $k$-ary function of type $GF(p)^k \longrightarrow GF(p)$. Then the function $GF(p)^n \longrightarrow GF(p)$ induced by expression $\po d(\po w_1, \ldots, \po w_k)$  can be represented by a polynomial that has ABP of size $O(l)$.
\end{fact}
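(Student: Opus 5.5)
Since $GF(p)^k$ is finite, I will interpolate $\po d$ and then assemble the resulting expression out of ABPs using \cref{abp-mult} and the ABP-process.

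\textbf{Step 1: interpolation.} Every function $GF(p)^k\to GF(p)$ is given by a polynomial in which each variable has degree at most $p-1$. Write
\[
\po d(y_1,\ldots,y_k)=\sum_{\ov e\in\{0,\ldots,p-1\}^k} c_{\ov e}\, y_1^{e_1}\cdots y_k^{e_k}.
\]
One way to see this is the Lagrange formula $\po d(\ov y)=\sum_{\ov a}\po d(\ov a)\prod_{i}(1-(y_i-a_i)^{p-1})$. There are at most $p^k$ monomials, and both $p^k$ and the degrees are constants because $p$ and $k$ are fixed.

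\textbf{Step 2: monomials.} For each monomial, the product $\po w_1^{e_1}\cdots\po w_k^{e_k}$ has at most $k(p-1)$ factors. Each factor has an ABP of size $O(l)$. Applying \cref{abp-mult} repeatedly gives an ABP of size at most $k(p-1)\cdot O(l)=O(l)$ for the whole product.

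\textbf{Step 3: coefficients and sum.} Multiply each monomial's ABP by the constant $c_{\ov e}$, which is a single multiplication step in the ABP-process. Then add the at most $p^k$ resulting polynomials in one addition step. The total size is $p^k\cdot O(l)+O(p^k)=O(l)$.

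\textbf{Correctness.} The resulting polynomial $\po d(\po w_1,\ldots,\po w_k)$, obtained by formal substitution into the interpolating polynomial, agrees pointwise on $GF(p)^n$ with the function induced by the expression $\po d(\po w_1,\ldots,\po w_k)$. This is because the interpolation identity holds for every point of $GF(p)^k$, and in particular at the point $(\po w_1(\ov b),\ldots,\po w_k(\ov b))$ for each $\ov b\in GF(p)^n$.

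\textbf{Expected obstacle.} There is no real difficulty here. The only point needing care is that all hidden constants depend only on $p$ and $k$, not on $n$ or $l$. Repeating an ABP for $\po w_i$ up to $p-1$ times, rather than sharing it, is acceptable precisely because $p-1$ is a constant.
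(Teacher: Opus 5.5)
Your proof is correct and follows the same route as the paper: you represent $\po d$ by a polynomial over $GF(p)$ with a constant number of monomials, build each product of the $\po w_i$ by repeated use of \cref{abp-mult}, and sum the constantly many results. Your version only makes explicit the constants ($p^k$ monomials from Lagrange interpolation, at most $k(p-1)$ factors each) that the paper's short argument leaves implicit.
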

\begin{proof}
Function $\po d$ is of bounded arity, so it can be represented by a sparse polynomial over $GF(p)$ which in its definition requires some constant number of additions and multiplications. So by consecutively applying \cref{abp-mult} to every multiplication in this polynomial, 
the composition $\po d(\po w_1, \ldots, \po w_n)$ can be represented by a polynomial of size $O(l)$.
\end{proof}

It turns out that the class $\nudet{p}$ can be exactly captured as $\palg{A}$ for some solvable algebra $\m A$. 

\begin{theorem}\label{nudet-exact}
For each prime $p$ there is a solvable Malcev algebra $\m A$ such that polynomial size programs over $\m A$ recognize exactly languages from $\nudet{p}$.   
\end{theorem}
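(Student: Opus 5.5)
We construct $\m A$ explicitly as a two-step solvable Malcev algebra whose circuits simulate the ABP-process described earlier. Take the universe $A = \mathbb{Z}_p \times \mathbb{Z}_p$, written as pairs $(y, x)$, where the second coordinate $x$ is the ``input'' level and the first coordinate $y$ is the ``accumulator''. The basic operations are:
\begin{itemize}
\item[] the Malcev operation $\po d((y_1,x_1),(y_2,x_2),(y_3,x_3)) = (y_1-y_2+y_3,\, x_1-x_2+x_3)$;
\item[] all constants;
\item[] the binary operation $\po m((y_1,x_1),(y_2,x_2)) = (y_1\cdot x_2,\, x_1)$.
\end{itemize}
Here $\po m$ multiplies an accumulated value by the input-level value of another argument. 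Equivalently, $\m A$ is an action product $\mathbb{Z}_p \boxtimes \mathbb{Z}_p$ in which the lower factor acts multiplicatively on the upper one. The congruence ``equal second coordinate'' is abelian and so is the quotient, so $\m A$ is solvable, indeed of type $\tn 2$ only.

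\emph{Lower bound} ($\nudet{p} \subseteq \palg{\m A}$). Fix a $\BABPG{p}$ gate with ABP $\po w$ and accepting set $S'$. Set $\iota(b) = (0,b)$ and $S = S' \times \mathbb{Z}_p$. We simulate the ABP-process by induction on its steps, maintaining circuits whose first coordinate equals the current polynomial:
\begin{itemize}
\item[] the start $1$ is the constant $(1,0)$;
\item[] multiplication of a created term $t$ by a variable $x_i$ is $\po m(t, x_i)$;
\item[] multiplication by a constant $c$ is $\po m(t, (0,c))$;
\item[] sums are obtained from iterated $\po d(t_1,(0,0),t_2)$.
\end{itemize}
Each step adds $O(1)$ gates per edge, so the circuit has size linear in the ABP. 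The garbage in the second coordinate is harmless because $S$ ignores it.

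\emph{Upper bound} ($\palg{\m A} \subseteq \nudet{p}$). Given a circuit $\po t$ over $\m A$ and Boolean inputs, we show by induction on gates $G$ that the second coordinate of $\po t_G$ is an affine function of the inputs $b_i$, and the first coordinate is a polynomial with an ABP of size $O(|U(G)|)$. Since $\iota(b)$ is an affine function of $b$ over $\mathbb{Z}_p$ coordinatewise, this holds for inputs by \cref{affine-subst}.
\begin{itemize}
\item[] For a $\po d$-gate, the first coordinate is a sum, which adds an ABP vertex.
\item[] For an $\po m$-gate, the first coordinate is the product of an ABP by an affine form. We multiply by the affine form via \cref{abp-mult}, using an $O(n)$-size ABP for that form.
\end{itemize}
Finally, acceptance depends on both coordinates through a bounded function. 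By \cref{compose_poly_const} this is again a single ABP polynomial tested against a set, giving a polynomial-size $\BABPG{p}$ gate.

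The main obstacle is sharing. Our size measure counts DAG gates, and ABPs support reuse of created polynomials, so the induction must build one ABP DAG incrementally, reusing vertices, rather than copying subtrees. The ABP-process formulation handles this directly. The $O(n)$ cost per affine multiplication still keeps the total size polynomial.
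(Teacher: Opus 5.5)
Your proposal is correct and follows essentially the paper's proof. Your $\po d$, constants and $\po m$ play the role of the paper's $+,\pi_1,\pi_2,u,v$; the inclusion $\nudet{p}\subseteq\palg{\m A}$ comes from the same simulation of the ABP-process; and the converse uses the same induction (the accumulator coordinate has a shared ABP, the input coordinate stays affine), followed by affine interpolation of $\iota$ and \cref{compose_poly_const}. The only slip is the stated bound $O(|U(G)|)$: because each $\po m$-gate costs $O(n)$, it should read $O(n\cdot|U(G)|)$, matching the paper's $O(n^{c+1})$.
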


\begin{proof}[Proof sketch]
    
Let the underlying set of $\m A$ be $Z_p \times Z_p$ and let $\m A$ have $5$ operations. 
\begin{enumerate}
    \item binary abelian $+$ opearion of a group $Z_2 \times Z_2$
    \item unary projections $\pi_1, \pi_2$ satisfying\\ $\pi_1(x,y) = (x,0), \pi_2(x,y) = (0,y)$
    \item unary $u(x,y) = (y,0)$
    \item binary quasiprojection $v((x_1, y_1), (x_2, y_2)) = (x_1 \cdot y_2, 0)$
\end{enumerate}
It is automatic to check with the definitions that this algebra is Malcev and solvable. Moreover one can check by a simple inductive argument that every polynomial $\po p(x_1, \ldots, x_n)$ of $\m A$ of length $l$ is of the form $(\po w(x_1^1, \ldots, x_n^1, x_1^2, \ldots, x_n^2), \po a(x_1^2, \ldots, x_n^2))$, where $\po w(x_1^1, \ldots, x_n^1, x_1^2, \ldots, x_n^2)$ is a polynomial over $GF(p)$ with $ABP$ complexity $O(l)$ and $\po a$ is an affine function over variables $x_i^2$. Indeed all what the basic operations on the first coordinate allow is just addition and multiplication by an affine function from the second coordinate. The multiplication by an affine function can be simulated by $O(n)$ ABP steps. On the other hand any polynomial $\po w(x_1^2, \ldots, x_n^2)$ over $GF(p)$ with some bounded ABP complexity can be obtained on the first coordinate by applying polynomially many basic operations of $\m A$. This two facts together would give us $\palg{\m A} = \nudet{p}$, the problem is that while rewriting program to polynomial over $GF(p)$ we have to deal with $\iota, S$. To handle/eliminate $\iota$ we interpolate it with an affine function and apply \cref{affine-subst}. To emulate $S$  we apply \cref{compose_poly_const}. Another issue is that $\m A$ has two coordinates, but we can also conjunct them by an application of \cref{compose_poly_const}. Rewriting polynomial over $GF(p)$ to a proper program of our algebra is straightforward.
\end{proof}

Now we show how to inductively upperbound the complexity class $\palg{A}$, when the induction is on the height of congruence lattice of $\m A$. 

\begin{lemma}\label{solvable-induction}
Let $\m A$ be a finite algebra from a congruence modular variety and $\alpha$ be its atom of type $2$ and characteristic $p$. Then any function computed by a program over $\m A$ of size $l$ can be also computed by circuit of type $\BABPG{p} \circ \PROGG{\m A/\alpha}$ of size $O(\text{poly}(l))$.
\end{lemma}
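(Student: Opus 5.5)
We would use \cref{struct-alp-cong} to represent $\m A$ as a subreduct of $\mathbb{Z}_p^{[k]}\boxtimes \m A/\alpha$ via $h(x)=(\po e_1(x),\ldots,\po e_k(x),x/\alpha)$, where the $\po e_i$ now take values in $\mathbb{Z}_p$ (after identifying a trace with $\mathbb{Z}_p$). Fix a program $\progb{\po p}{\iota}{S}$ of size $l$ over $\m A$. The quotient coordinate of every gate $G$ of $\po p$ is computed by the subcircuit $\po p_G$ read modulo $\alpha$, which is a circuit over $\m A/\alpha$ of size at most $l$. For every element $c\in A/\alpha$, the Boolean function ``$\po p_G[\iota](\ov b)/\alpha = c$'' is computed by a single $\PROGG{\m A/\alpha}$ gate of size $\le l$, with projection $\iota/\alpha$ and accepting set $\{c\}$. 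So the bottom layer consists of at most $|A/\alpha|\cdot l$ such gates, denoted $y_{G,c}$.

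For the top layer we build, by induction along the gates of $\po p$, ABPs over $GF(p)$ for the $\mathbb{Z}_p^k$-coordinates. The inputs to these ABPs are the bits $y_{G,c}$ and the original input bits $b_j$; the latter are themselves trivially $\PROGG{\m A/\alpha}$ gates, or can be fed through a dummy gate. The inductive claim is that each coordinate $\po e_i(\po p_G[\iota](\ov b))$ equals a polynomial $\po w_{G,i}$ in these bits with ABP complexity $O(\text{poly}(|U(G)|))$.

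\begin{itemize}
\item \textbf{Input gate.} If $G$ is the input $x_j$, then $\po e_i(\iota(b_j))$ is an affine function of $b_j$.
\item \textbf{Inner gate.} If $G$ applies a basic operation $\po f$ to gates $G_1,\ldots,G_n$, the action-product structure gives, for each fixed tuple of cosets $\ov c=(c_1,\ldots,c_n)$, a $\mathbb{Z}_p^{[k]}$-polynomial, i.e.\ an affine map $L_{\ov c}$ applied to the $\po w_{G_j,i'}$. This is exactly the type-$\tn 2$ computation in \cref{lm-decomposition}. We then set
\[
\po w_{G,i}=\sum_{\ov c}\Big(\prod_{j=1}^n y_{G_j,c_j}\Big)\cdot L_{\ov c,i}(\po w_{G_1,\cdot},\ldots,\po w_{G_n,\cdot}).
\]
Since $n$ is bounded by the maximal arity and the number of coset tuples is constant, \cref{abp-mult} and \cref{compose_poly_const} yield a size increment of $O(1)$ times the sizes of the children ABPs.
\end{itemize}

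The naive recursion would blow up: a DAG reusing $\po w_{G_j}$ in many places, combined with multiplication by selectors, could double the size. We handle this with the ABP-process view of size. Each $\po w_{G,i}$ is a new polynomial created from already-created ones by $O(1)$ multiplications by the variables $y$ and by constants, plus one addition. Because a product with a single variable is one step of the process, the total process length is $O(kl)$. The selector products $\prod_j y_{G_j,c_j}$ are products of a bounded number of variables, and we multiply them onto an existing polynomial one variable at a time, so nothing gets squared.

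At the output, acceptance $\po p[\iota](\ov b)\in S$ is a bounded-arity function of the $k$ coordinates and the output coset indicators. Expanding this function as $\sum_c y_{\text{out},c}\cdot \po d_c(\po w_{\text{out},1},\ldots,\po w_{\text{out},k})$ with \cref{compose_poly_const} gives a single $\BABPG{p}$ gate of size $O(l)$. It has accepting set $\{1\}$ and is fed by the $\PROGG{\m A/\alpha}$ gates.

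The step we expect to be the main obstacle is the inner-gate case: we must check that the affine maps $L_{\ov c}$ really depend only on the cosets of the arguments, i.e.\ that the polynomial property in the action product, applied with the constants from $\m A/\alpha$, is legitimate. We also need the ABP-process bookkeeping to keep multiplication only by variables or constants, never by whole polynomials, so that the bound is polynomial in $l$.
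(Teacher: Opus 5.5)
Your proposal is correct and follows essentially the same route as the paper. Both use the representation from \cref{struct-alp-cong}, and supply coset indicators of every subcircuit by $\PROGG{\m A/\alpha}$ gates with singleton accepting sets. Both run an ABP-process spending $O(1)$ steps per gate, multiplying already-built polynomials only by selector variables and constants. Both finish with a bounded-arity combination via \cref{compose_poly_const}.

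The differences are cosmetic: you substitute the affine interpolation of $\iota$ at the leaves rather than at the end, and you keep the sum over coset tuples instead of regrouping by the $\mathbb{Z}_p$-variables. The point you flag as the main obstacle follows directly from the definition of the action product.
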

\begin{proof}
 Let $\m C = \m A/\alpha$. By \cref{struct-alp-cong} the algebra $\m A$ is isomorphic to a subreduct of $\mathbb{Z}_p^{[k]}\boxtimes \m C$. Without loss of generality we assume that $\m A$ is a reduct of $\mathbb{Z}_p^{[k]}\boxtimes \m C$ as taking subalgebras decreases the power of the programs. For every $a \in A$ we write $a^{\m C}$ for its coordinate corresponding to $\m C$, and for $s\in \{1, \ldots, k\}$ we write $a^{s}$ for its coordinate corresponding to the $s$-th component of $\mathbb{Z}_{p}^k$.   

 Let $\po f(x_1,\ldots, x_r)$ be a basic operation of $\m A$. Then
 by the definition of matrix product $\po f(x_1,\ldots, x_r)^{s}$ depends on the  $r + r \cdot k$ values
 \[x_1^{\m C}, \ldots, x_r^{\m C}, x_1^1, \ldots, x_r^k
 \]
 in such a way that after fixing $x_1^{\m C}, \ldots, x_r^{\m C}$ it becomes a polynomial of $(Z_p, +)$, that is an affine function. It means that 
\[
 (\po f(x_1, \ldots, x_r))^s = \sum_{c_1,..,c_r \in \m C} \chi_{c_1}(x_1^{\m C})\cdot \ldots \cdot \chi_{c_r}(x_r^{\m C}) \cdot A_{\overline{c}}(x_1^1,..,x_r^k)
\]
where $\chi_c(x)$ is a unary characteristic function $C \longrightarrow Z_p$ which is equal to 1 when $x=c$ and is equal to $0$ otherwise, and each $A_{\overline{c}}$ is an affine function coming from substituting $c_1,..,c_r$ for the respective variables $x_1^C, \ldots, x_r^C$. But then we can regroup formula  according to the variables $x_i^j$ to get
\begin{align*}
 (\po f(x_1, \ldots, x_r))^s = \sum_{i=1..r}\sum_{j=1..k} x_i^j \cdot &\po w_{i,j}(\chi_{c_1}(x_1^{\m C}), \chi_{c_2}(x_1^{\m C}), ..,\chi_{c_m}(x_r^{\m C}))\\
 +&\po w(\chi_{c_1}(x_1^{\m C}), \chi_{c_2}(x_1^{\m C}), ..,\chi_{c_m}(x_r^{\m C}))
\end{align*}
where $c_1,\ldots c_m$ enumerate all elements of $\m C$ and each $\po w_{i,j}$ as well as $\po w$ is a $rm$-ary degree $r$ polynomials over $GF(p)$. 

Now take arbitrary $n$-ary polynomial/circuit $\po p(x_1, \ldots, x_n)$ of $\m A$ which is built of $l$ gates. From now, without loss of generality, we will treat  $l$ as a size of $\po p$, because its smaller than the actual size of $\po p$ (it does not account for the edges). We will prove by induction on the definition of $\po p$ that there exists $nk+ml$-ary polynomial $\po v_{\po p}^s$ over $GF(p)$ which has ABP complexity $O(l)$ such that 
$\po p(x_1, \ldots, x_n)^s$ is equal to
\begin{align*}
 \po v_{\po p}^s(x_1^1,..,x_n^k, \chi_{c_1}(\po p_1^{\m C}(\overline{x}^{\m C})), \chi_{c_2}(\po p_1^{\m C}(\overline{x}^{\m C}), ..,\chi_{c_m}(\po p_l^{\m C})(\overline{x}^{\m C})))
\end{align*}
Where $\po p_1, \ldots, \po p_l$ enumerate all subcircuits of $\po p$ (each is of the form $\po p_G$ for some gate $G$ in the definition of $\po p$), $\po p_i^{\m C}$ is the polynomial $\po p_i$ interpreted in the quotient $\m C$ and $\overline{x}^{\m C}$ is a notation for $x_1^{\m C}, \ldots, x_n^{\m C}$.
We want to prove that there is an ABP process defining $\po v_{\po p}^s$ which also defines in its intermediate steps all $\po v_{\po q}^s$ corresponding to all subcircuits $\po q$ of $\po p$ from the set  $\set{\po p_1, \ldots, \po p_l}$. We do it inductively going from the sources (constants or variables) to the output gate of $\po p$.
Whenever $\po q$ is a variable $x_j$ we can see that $\po v_{\po q}^s$ is equal $x_j^s$ so the claim holds. Whenever $\po q$ is a constant $c$ then $\po v_{\po q}^s$ is just $c^s$ and again, the claim holds.

Essential case of the induction is when $\po q(x_1,\ldots, x_n)$ is created by composing basic operation $\po f(x_1, \ldots, x_r)$ of $\m A$ with shorter polynomials $\po q_1(x_1,\ldots,x_n), \ldots, \po q_r(x_1,\ldots,x_n)$. Note that $\{\po q_1, \ldots, \po q_r\} \subseteq \{\po p_1, \ldots, \po p_l\}$ so we can treat the expressions $\chi_{c_i}(\po q_j(\overline{x})^{\m C})$ as variables when defining $\po v_{\po q}^s$. Using the formula for $\po f(x_1, \ldots, x_r)^s$ we can see that
\begin{align*}
\po q(x_1, \ldots, x_n)^s& = 
\\\sum_{i=1..r}\sum_{j=1..k} \po q_i(\overline x)^j \cdot &\po w_{i,j}(\chi_{c_1}(\po q_1^{\m C}(\overline{x}^{\m C})), \chi_{c_2}(\po q_1^{\m C}(\overline{x}^{\m C})), ..,\chi_{c_m}(\po q_r^{\m C}(\overline{x}^{\m C}))\\
 +&\po w(\chi_{c_1}(\po q_1^{\m C}(\overline{x}^{\m C})), \chi_{c_2}(\po q_2^{\m C}(\overline{x}^{\m C})), ..,\chi_{c_m}(\po q_r^{\m C}(\overline{x}^{\m C})))
\end{align*}
Note that $\po w_{i,j}$ and $\po w$ in the above formula are  degree $r$ polynomials. So they can be written as a linear combination of $(rm)^r$ monomials of degree $r$. So these polynomials have a constant $O(1)$ ABP complexity, as $r$ and $m$ are constants depending on $\m A$ and $\alpha$. So it follows from \cref{abp-mult} (and its short proof) that ABP for $\po v_{\po q}^s$ can be created from ABPs for $(\po v_{\po q_i}^j)_{i \in \{1..r\}, j \in \{1..k\}}$ in $O(1)$ steps. In total the polynomial $\po v_{\po p}^s$ has ABP of size at most $O(l)$, as each vertex of $\po p$ introduces only constant number of steps in the overall ABP-process defining $\po v_{\po p}^s$.  

Now consider an arbitrary program $\progb{\po p}{\iota}{T}$ over $\m A$. Let $\iota'$ be an evaluation of $\iota$ in the quotient $\m C$, so if $a_1, a_2$ are two chosen elements in the image of $\iota$ we translate them to $a_1^{\m C}, a_2^{\m C}$ in $\iota'$. We established that 
$\prog{\po p}{\iota}(b_1, \ldots, b_n)^s$ is equal to
\[ \po v_{\po p}^s(\iota(b_1)^1,..,\iota(b_n)^k, \chi_{c_1}(\po p_1^{\m C}(\iota'(\overline{b}))), \chi_{c_2}(\po p_1^{\m C}(\iota'(\overline{b})), ..,\chi_{c_m}(\po p_l^{\m C}(\iota'(\overline{b})))
\]
Where $\po v_{\po p}^s$ can be represented by ABP of size $O(l)$. Note that\break $\chi_{c}(\po p_j^{\m C}(\iota'(\overline{b}))$ behaves exactly like the evaluation of the program $\progb{\po p_j^{\m C}}{\iota'}{\set{c}}(\overline{b})$ in $\m C$. And each $\iota(b_i)$ can be affinely interpolated in $GF(p)$. So there is a polynomial $\po t_{s}$ over $GF(p)$ of arity $n + kl$ with ABP of size $O(l)$ such that $\prog{\po p}{\iota}(b_1, \ldots, b_n)^s$ is equal to
\begin{equation}\label{final-eq}
\po t_{s}(b_1, \ldots, b_n, (\progb{\po p_j^{\m C}}{\iota'}{\{c\}}(b_1, \ldots, b_n))_{j \in \{1..l\}, c \in C}))
\end{equation}

Now notice that the value of the program $\progb{\po p}{\iota}{T}$ is determined by the set of values $\prog{\po p}{\iota}(b_1, \ldots, b_n)^s$ for $s \in \{1..k\}$ and the  values $\progb{\po p^{\m C}}{\iota'}{\set{t'}}(\overline{b})$ for $t' \in C$. This means there is bounded arity function $\po d: GF(p)^{k+|C|} \longrightarrow GF(p)$ such that
\begin{equation}\label{d-final}
\po d\Big((\prog{\po p}{\iota}{t}(b_1, \ldots, b_n)^s)_{s \in \{1..k\}}, (\progb{\po p^{\m C}}{\iota'}{\set{t'}}(b_1,..,b_n))_{t' \in C}\Big) = 1
\end{equation}
iff $\progb{\po p}{\iota}{T}(\overline{b}) = 1$. Each argument of $\po d$ in the formula above is the composition of the polynomial over $GF(p)$ of ABP complexity $O(l)$ with some number of programs over $\m C$. Indeed we can see from \cref{final-eq} that all arguments are programs, except $b_1, \ldots b_n$, which can be replaced by trivial identity programs over $\m C$. And the programs $\progb{\po p^{\m C}}{\iota'}{\set{t'}}(b_1,..,b_n)$ are itself a program over $\m C$ so we can just compose them on the outside with a trivial identity $x$ polynomial of ABP complexity $1$. In effect we can apply \cref{compose_poly_const} to formula \cref{d-final} and show that function computed by $\progb{\po p}{\iota}{T}$ can be also computed by composition of a polynomial which has ABP representation of size $O(l)$ with bunch of programs over $\m C$. Size of these programs is upper bounded by the size of $\po p$, which finishes the proof 
\end{proof}

Now we are to characterize $\palg{\m A}$ for solvable algebras $\m A$.

\begin{theorem}\label{solv-characterize}
    Let $\m A$ be a finite solvable Malcev algebra. Then there exists sequence of primes $p_1, \ldots, p_h$ such that
    \[
    \palg{\m A} \subseteq \nudet{p_1} \circ \ldots \circ \nudet{p_h}
    \]
\end{theorem}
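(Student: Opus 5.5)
We argue by induction on the height of the congruence lattice of $\m A$, peeling off one atom at a time with \cref{solvable-induction}. A finite solvable Malcev algebra generates a congruence modular (indeed congruence permutable) variety, so \cref{solvable-induction} applies. Every prime quotient of a solvable algebra in a congruence modular variety is of type $\tn 2$, so every atom $\alpha$ of $\m A$ has type $\tn 2$ and some characteristic $p$. Quotients of solvable Malcev algebras are again solvable Malcev, so the induction hypothesis will apply to $\m A/\alpha$.

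In the base case $\m A$ is trivial, or more conveniently simple. A simple solvable algebra is of affine type. By \cref{coll:simaff} every program over $\m A$ of length $l$ is computed by a $\BOUNDG{k}\circ\MODG{p}$ circuit of size $O(l)$. A $\MODG{p}$ gate is a linear form over $GF(p)$ with an accepting set, so it is a $\BABPG{p}$ gate of size $O(l)$. A bounded-arity Boolean function of $k$ such values is again a polynomial over $GF(p)$ of the same ABP order, by \cref{compose_poly_const}. Hence $\palg{\m A}\subseteq\nudet{p}$. (If $\m A$ is trivial, any single prime works.)

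For the inductive step, pick an atom $\alpha$ of characteristic $p$ and set $\m C=\m A/\alpha$. By the induction hypothesis there are primes $p_2,\ldots,p_h$ such that every program over $\m C$ of size $l$ is computed by a $\BABPG{p_2}\circ\cdots\circ\BABPG{p_h}$ circuit of size $\text{poly}(l)$. Here I need the uniform, size-preserving form of the statement, not just class inclusion, and I will carry this strengthened form through the induction. By \cref{solvable-induction}, a program over $\m A$ of size $l$ is a $\BABPG{p}\circ\PROGG{\m C}$ circuit of size $\text{poly}(l)$. This circuit has at most $\text{poly}(l)$ program gates, each of size at most $l$. Replacing each such gate by its $\text{poly}(l)$-size circuit gives a $\BABPG{p}\circ\BABPG{p_2}\circ\cdots\circ\BABPG{p_h}$ circuit of polynomial size. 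Applying this to polynomial-size NuDFAs gives $\palg{\m A}\subseteq\nudet{p}\circ\nudet{p_2}\circ\cdots\circ\nudet{p_h}$.

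The main obstacle is bookkeeping rather than mathematics. First, the induction must be stated for individual programs with explicit polynomial size bounds, so that substituting into the gates of \cref{solvable-induction} keeps the total size polynomial. Second, the depth $h$ must be fixed independently of $n$. This holds because $h$ is bounded by the congruence-lattice height of $\m A$, and the sequence of primes depends only on a fixed chosen maximal chain of congruences.
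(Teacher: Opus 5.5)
Your proposal is correct and is essentially the paper's proof. It fixes a maximal chain whose prime quotients are all of type $\tn 2$, handles the simple affine top quotient with \cref{coll:simaff}, and applies \cref{solvable-induction} atom by atom, substituting the inductively obtained circuits for the $\PROGG{\m C}$ gates; your explicit per-program polynomial size bounds are exactly what the paper uses implicitly. The only cosmetic difference is that you turn the $\BOUNDG{k}\circ\MODG{p}$ base circuit directly into one $\BABPG{p}$ gate via \cref{compose_poly_const}, whereas the paper rewrites it as $\MODG{p}\circ\ANDG{d}$ and uses that bounded-degree polynomials have small ABPs.
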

\begin{proof}
Let $0_A = \alpha_0 \prec \alpha_1 \prec \ldots \prec \alpha_h = 1_A$ be a tight chain of congruences of $\m A$. All the prime quotients $\alpha_{i-1} \prec \alpha_i$ have type $2$, so they have some prime characteristic $p_{i}$ corresponding to their traces.

We can see that $\m C = \m A/\alpha_{h-1}$ is a simple algebra so by \cref{coll:simaff} any function computed by a program over $\m C$ of length $l$ can be also computed by $\MODG{p_{h-1}} \circ \ANDG{d}$ circuit of size $O(\text{poly}(l))$, where $d$ is some fixed constant depending on $\m A$ and $\alpha_{h-1}$. Now we apply the \cref{solvable-induction} to the algebra $\m A/\alpha_{h-1-i}$ and its atom $\alpha_{h-i}$ (for $i \in \{1,..,h-1\}$) to see that function computed by a program over $\m A/\alpha_{i-1-h}$ of size $l$ can be also computed by $\BABPG{p_{h-i}} \circ \PROGG{\m A/\alpha_{h-1-i}}$ circuit of size $O(\text{poly}(l))$. By applying an induction along the chain $0_A = \alpha_0 \prec \alpha_1 \prec \ldots \prec \alpha_h = 1_A$ we can see that any function computed by a program $\po q$ of $\m A$ 
can be also computed by a circuit $C_{\po q}$ of the form 
\[
\BABPG{p_1} \circ \BABPG{p_2} \circ \ldots \BABPG{p_{h-1}} \circ \MODG{p_{h}} \circ \ANDG{d}
\]
Note that every $\MODG{p_{h}} \circ \ANDG{d}$ subcircuit of size $l$ can be replaced by one $\nudet{p_h}$ gate of size $O(\text{poly}(l))$, because $\MODG{p_{h}} \circ \ANDG{d}$ essentially evaluates a bounded degree polynomial over $GF(p)$, and $n$-ary bounded degree $d$ polynomials have ABP complexity at most $n^d$. This finishes the proof. 
\end{proof}

From \cref{solv-nc_2} and from the fact that $NC^2$ is closed under taking fixed height compositions we have

\begin{theorem}\label{theorem-solv-nc2}
Let $\m A$ be a solvable Malcev algebra. Then $\palg{A} \subseteq NC^2$.
\end{theorem}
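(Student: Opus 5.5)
The plan is to combine \cref{solv-characterize} with \cref{solv-nc_2}. Let $\m A$ be a finite solvable Malcev algebra and $L\in\palg{\m A}$. By \cref{solv-characterize} there is a fixed sequence of primes $p_1,\ldots,p_h$, depending only on $\m A$ (namely the characteristics along a maximal chain of congruences), such that $L\in\nudet{p_1}\circ\cdots\circ\nudet{p_h}$. So $L$ is recognized by a family of polynomial-size circuits of the form $\BABPG{p_1}\circ\cdots\circ\BABPG{p_h}$. The number of layers $h$ is a constant, and every gate has size polynomial in the input length $n$.

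Next I would replace every gate by an $NC^2$ subcircuit. A $\BABPG{p}$ gate of size $s$ is given by an ABP over $GF(p)$ together with an accepting set $S$. By \cref{abp-to-det} the polynomial it computes is the determinant of a matrix of size $\mathrm{poly}(s)$ whose entries are variables or constants. Since the gate's inputs are Boolean, each entry is just an input bit or a constant in $GF(p)$. The determinant over $GF(p)$ of a $\mathrm{poly}(s)$-sized matrix can be computed by a Boolean circuit of size $\mathrm{poly}(s)$ and depth $O(\log^2 s)$; this is the content of \cref{solv-nc_2}, e.g.\ via Berkowitz's algorithm. Testing whether the result lies in $S$ then costs only constant size and depth. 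Because $s$ is polynomial in $n$, each gate becomes an $NC^2$ subcircuit of depth $O(\log^2 n)$.

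Finally I would stack the layers. Wiring the $h$ layers of such subcircuits together gives a circuit of depth $h\cdot O(\log^2 n)=O(\log^2 n)$, because $h$ is constant. The total size is at most the number of gates, which is polynomial, times the polynomial size of each subcircuit, so it stays polynomial. Fan-in is not an issue: the unbounded fan-in additions inside the ABPs are already absorbed into the determinant computation, which uses bounded fan-in.

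The only point needing care is uniformity of the depth bound across gates of different sizes within one layer. Every gate has size at most $\mathrm{poly}(n)$, so each gate's depth is bounded by $O(\log^2 n)$ uniformly, and the argument goes through.
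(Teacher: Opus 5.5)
Your proposal is correct and follows the paper's argument: the paper obtains this theorem directly from \cref{solv-characterize}, \cref{solv-nc_2}, and the closure of $NC^2$ under compositions of constant height, which is exactly your layer-by-layer replacement and stacking. Your check that the $O(\log^2 n)$ depth bound holds uniformly for every gate, since each gate has size $\mathrm{poly}(n)$, just spells out the closure property the paper takes for granted.
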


To fully characterize circuit complexity classes over solvable algebras we now prove the opposite of \cref{solv-characterize}.
\begin{theorem}\label{solv-construc}
For each sequence of primes $p_1, \ldots, p_h$ there exists a finite solvable Malcev algebra $\m A$ such that.
\[
\nudet{p_1} \circ \ldots \circ \nudet{p_h} \subseteq \palg{\m A}
\]
\end{theorem}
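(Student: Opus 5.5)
We will build $\m A$ by iterating the construction from the proof of \cref{nudet-exact}, stacking one layer per prime. The innermost layer (corresponding to $p_h$, the gates that read the Boolean inputs) will be the algebra $\m A_h$ of \cref{nudet-exact} for $p_h$: universe $Z_{p_h}\times Z_{p_h}$, first coordinate an ABP accumulator, second coordinate an affine "input register". Inductively, given $\m A_{i+1}$ (which simulates $\nudet{p_{i+1}}\circ\ldots\circ\nudet{p_h}$), we take $\m A_i$ with universe $Z_{p_i}\times A_{i+1}$. Its operations are those of $\m A_{i+1}$ on the second component, with $0$ on the first component. In addition we add the binary group operation $+$ on the first component (ignoring the rest), and for every element $c\in A_{i+1}$ a binary "controlled multiplication" operation
\[
v_c\big((x_1,y_1),(x_2,y_2)\big)=(x_1\cdot \chi_c(y_2),\,0),
\]
where $\chi_c$ is the characteristic function of $\{c\}$ into $GF(p_i)$. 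We also add constants $(1,0)$ and $(0,0)$. The final algebra is $\m A=\m A_1$. This has exactly the shape of an action product $Z_{p_i}\boxtimes \m A_{i+1}$: after fixing the $\m A_{i+1}$ coordinates, each operation is affine on $Z_{p_i}$.

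The first step is algebraic. We check that each $\m A_i$ is solvable and Malcev. For the Malcev term we use $d((x,y),(x',y'),(x'',y''))=(x-x'+x'',\,d_{i+1}(y,y',y''))$, with the first coordinate built from $+$ and scalar multiples of $+$. For solvability, the kernel of the projection onto $\m A_{i+1}$ is abelian, since every operation restricted to a coset of it is affine in $Z_{p_i}$. Solvability then follows because $\m A_{i+1}$ is solvable and extensions of solvable algebras by abelian congruences are solvable. The main subtlety I expect here is that $\chi_c$ depends on $y_2$ non-affinely. This is harmless for abelianness of the kernel but must not break the Malcev identities. The choice of $d$ above handles this, since the $x$-part does not involve $y$.

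The second step is simulation, by induction on $i$. Take a $\BABPG{p_i}$ gate whose ABP has size $l$ and whose inputs are outputs $g_1,\ldots,g_N$ of circuits in $\nudet{p_{i+1}}\circ\ldots\circ\nudet{p_h}$. By induction, each $g_j$ is computed by a polynomial-size program over $\m A_{i+1}$ with a common $\iota$. We normalise the accepting sets to singletons by summing over the elements of an accepting set $S_j$; this is legitimate because the $g_j$ take value $1$ on at most one element, so $\sum_{c\in S_j}\chi_c(\cdot)$ is again $0/1$. Getting a common $\iota$ and a fixed accepting element across different layers is a bookkeeping issue: we choose $\iota(b)=(0,\iota_{i+1}(b))$ recursively and use the constants to adjust. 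We then run the ABP-process on the first coordinate. We start from $(1,0)$. Multiplying by a constant is done by repeated $+$. Multiplying by a variable $g_j$ is done by $v_c(\cdot,\po q_j)$ with $\po q_j$ the inner program, because $g_j=\sum_{c\in S_j}\chi_c(\po q_j)$ and $+$ distributes. Additions use $+$. Each ABP step therefore costs $O(1)$ operations, plus the shared inner subcircuits, giving polynomial size overall.

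The final accepting set is $S\times A_{i+1}$. For the base layer $p_h$, the variables are the input bits themselves, which we handle exactly as in \cref{nudet-exact} via the affine second coordinate (equivalently, through $v_c$ with $\iota$ placing $b$ into the register). The main obstacle, as I see it, is precisely this uniform handling of $\iota$ across all layers, making sure that the single fixed $\iota$ of the NuDFA feeds correct Boolean values to the innermost layer while the outer coordinates start at $0$. That is why we put $\iota(b)$ in the last component only, and let the outer coordinates be generated from constants.
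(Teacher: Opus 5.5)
Your proposal is correct and is essentially the paper's construction: you stack one copy of the algebra from \cref{nudet-exact} per prime, pass the outputs of lower-layer gates upward through characteristic functions, and place $\iota(b)$ in the innermost register. The only differences are that you fuse the paper's $\chi_{i,S}$ and its multiplication $v$ into a single controlled multiplication $v_c$, so upper layers need one coordinate instead of two, and that you spell out the solvability argument the paper leaves implicit.

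One detail to pin down concerns $+$. If it outputs $0$ outside its own component, like your lifted operations and $v_c$ (and like the paper's blockwise operations), then no term can satisfy the Malcev identities. So let $+$ keep the second component, e.g.\ $(x_1,y_1)+(x_2,y_2)=(x_1+x_2,y_1)$; then your $d$ is a genuine term.
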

\begin{proof}
    We use the the construction used in  \cref{nudet-exact}. Let the underlying set of $A$ be $(Z_{p_1})^2 \times \ldots \times (Z_{p_h})^2$. The idea is that on every pair of coordinates using the same prime $p$ we emulate all the ABPs over $GF(p)$ and then pass them to lower coordinates. So for the prime $p_i$ we introduce operations $(+)^{p_i}, (\pi_1)^{p_i}, (\pi_2)^{p_i}, (u)^{p_i}, (v)^{p_i}$ like in the \cref{nudet-exact}, which act only on coordinates corresponding to $p_i$ and put all the other coordinates in the image to $0$. Then we also introduce unary polynomials $\chi_{i, S}$ for every $i\in \{2,.., h\}$ and $S\subseteq Z_{p_i}$ that look at the left-most coordinate corresponding to $p_i$ and and return $1$ to right-most coordinate corresponding to $p_{i-1}$ iff the argument is in the accepting set $S$, otherwise they return 0. In this way we can build a $\nudet{p_1} \circ \ldots \circ \nudet{p_h}$ circuit by building an appropriate ABPs with $(+)^p, (\pi_1)^p, (\pi_2)^p, u^p, v^p$ and using $\chi_{i, S}$ to simulate accepting sets and for passing values to the coordinates that are more to the left. Instruction $\iota$ should identify $0$ with $(0,\ldots, 0)$ and $1$ with $(0,\ldots, 0, 1)$.
    One can see that in this way we can simulate arbitrary $\BABPG{p_1} \circ \ldots \circ \BABPG{p_h}$ circuit of size $O(l)$ as program of size $O(\text{poly}(l))$  which complete the proof.
\end{proof}

If $C$ is a class of algebras by $\palg{C}$ we mean $\bigcup_{\m A \in C} \palg{\m A}$. What follows is the following.
\begin{corollary}
Let $SOLVABLE$ be the class of all finite solvable
Malcev algebras. Then $\palg{SOLVABLE} = \nuitdet$.
\end{corollary}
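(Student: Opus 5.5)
The corollary asserts an equality, so the plan is to prove the two inclusions separately, each by invoking one of the two preceding theorems.

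For the inclusion $\palg{SOLVABLE} \subseteq \nuitdet$, take an arbitrary finite solvable Malcev algebra $\m A$ and a language $L \in \palg{\m A}$. By \cref{solv-characterize} there is a sequence of primes $p_1,\ldots,p_h$, determined by a tight chain of congruences of $\m A$, with
\[
\palg{\m A}\subseteq \nudet{p_1}\circ\ldots\circ\nudet{p_h}.
\]
Since $\nuitdet$ is by definition the union of all such classes over all finite sequences of primes, $L\in\nuitdet$. Taking the union over all $\m A \in SOLVABLE$ gives this inclusion.

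For the reverse inclusion, take $L\in\nuitdet$. By definition of $\nuitdet$ there is some sequence of primes $p_1,\ldots,p_h$ with $L\in \nudet{p_1}\circ\ldots\circ\nudet{p_h}$. \Cref{solv-construc} supplies a finite solvable Malcev algebra $\m A$ with $\nudet{p_1}\circ\ldots\circ\nudet{p_h}\subseteq\palg{\m A}$. Hence $L\in\palg{\m A}\subseteq\palg{SOLVABLE}$.

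The only point needing care is to check that the two cited theorems refer to the same class of algebras. \Cref{solv-characterize} is stated for finite solvable Malcev algebras, and \cref{solv-construc} produces exactly such an algebra, namely the coordinatewise construction extending \cref{nudet-exact}. So both inclusions range over $SOLVABLE$, and there is no substantive obstacle. If one wanted more care, the thing to verify is that the algebra built in \cref{solv-construc} is indeed Malcev and solvable. The Malcev term comes from the group operations $(+)^{p_i}$, and solvability follows because the coordinate blocks form a chain of abelian congruence quotients. Both properties are asserted in the proof of that theorem, so they can be taken as given.
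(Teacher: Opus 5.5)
Your proposal is correct and is exactly the paper's argument: the corollary is read off by combining \cref{solv-characterize} (for $\palg{SOLVABLE}\subseteq\nuitdet$) with \cref{solv-construc} (for the reverse inclusion). Your optional aside is slightly off but does not affect the proof. As literally defined in \cref{solv-construc}, each $(+)^{p_i}$ zeroes out the other blocks, so for $h\ge 2$ no Malcev term comes from these operations; one has to add a global addition, as in \cref{example-nilpo}, which only enlarges $\palg{\m A}$. Also, the abelian congruence quotients come from single coordinates rather than whole blocks, since $(v)^{p_i}$ makes each two-coordinate block solvable but not abelian.
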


It is a common saying in the arithmetic circuit complexity papers that arithmetic formulas are determinants in disguise. Here we see that circuits over solvable algebras are iterated determinants in disguise.

\section{Nilpotent algebras}

In this section we show that circuits over nilpotent algebras recognize exactly languages from the class $CC^0$. This algebras were historically very important in search for tractable cases in the field of equational satisfiability problems for finite algebras. It was suggested already in \cite{kompatscher2022cc} that expressive power of circuits over nilpotent algebras is tied to class $CC^0$. The next Lemma follows from the analysis of the proof in  \cite{IdziakKK25} (Theorem 9.7, Claim 9.3). 
\begin{lemma}\label{lem-nilp}
Let $\m A$ be a finite nilpotent Malcev algebra and $\alpha$ be an atom of characteristic $p$. Then any function computed by a program $\progb{p}{\iota}{S}$ over $\m A$ of length $l$ can  be also computed by $\circuit{\MODG{p}}{\ANDG{d}}{\PROGG{\m A/\alpha}}$ circuit of size $O(\text{poly}(l))$.
\end{lemma}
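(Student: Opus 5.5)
The plan is to follow the proof of \cref{solvable-induction}, but exploit nilpotency to replace the algebraic branching program by a bounded-degree polynomial. By \cref{struct-alp-cong}, $\m A$ is (up to isomorphism) a reduct of a subreduct of $\mathbb{Z}_p^{[k]}\boxtimes \m C$ with $\m C=\m A/\alpha$. The extra input from nilpotency is the centrality of the atom: $[\alpha,1_{\m A}]=0_{\m A}$. I would use this to strengthen the coordinate formula for a basic operation $\po f$ from the proof of \cref{solvable-induction}. Commutator theory (the difference term is a Malcev term commuting with all polynomials, \cite[Ch.~5--7]{fm}) says that a central atom makes the $\alpha$-coordinate act additively and \emph{independently of the $\m C$-coordinates}. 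Concretely, I aim to show
\[
(\po f(x_1,\ldots,x_r))^s=\sum_{i,j}\lambda^{s}_{i,j}\,x_i^j+\po w_{\po f}^s(x_1^{\m C},\ldots,x_r^{\m C}),
\]
with constants $\lambda^s_{i,j}\in GF(p)$ and an arbitrary function $\po w_{\po f}^s$ of the quotient coordinates. The reason the coefficients $\lambda$ cannot depend on the $\m C$-part is the term condition. If changing $x^{\m C}$ within different blocks changed the linear part, we would obtain a witness to $C(1_{\m A},\alpha;0_{\m A})$ failing.

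With this formula, induction over the gates of a circuit $\po p$ with $l$ gates gives
\[
\po p(\overline x)^s=\sum_{i,j}\mu^s_{i,j}x_i^j+\sum_{G}\sum_{t}\nu^s_{G,t}\,\po w_{\po f_G}^t\bigl(\po p_{G_1}^{\m C}(\overline x^{\m C}),\ldots,\po p_{G_r}^{\m C}(\overline x^{\m C})\bigr),
\]
where the inner sum ranges over gates $G$ with inputs $G_1,\ldots,G_r$. The constants $\mu$ and $\nu$ are obtained by composing the linear maps along paths. Since the linear maps are fixed matrices over $GF(p)$, each coefficient is computable and requires no ABP at all.

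Next I would rewrite each summand as a program over $\m C$. The term $\po w_{\po f_G}^t(\ldots)$ equals $\sum_{c\in C^r}\po w^t_{\po f_G}(c)\prod_{q\le r}\chi_{c_q}(\po p_{G_q}^{\m C})$. Each factor $\chi_{c_q}(\po p_{G_q}^{\m C}(\iota'(\overline b)))$ is a $\PROGG{\m C}$ gate $\progb{\po p_{G_q}^{\m C}}{\iota'}{\set{c_q}}$, so each summand is an $\ANDG{r}$ of such gates with a $GF(p)$ coefficient, and $r$ is bounded by the maximal arity in $\m A$. The linear terms $x_i^j$ become affine functions of $b_i$ after interpolating $\iota$, and these are trivial programs. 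Hence every $\po p[\iota](\overline b)^s$ is a $GF(p)$-linear combination of $O(l)$ $\ANDG{d}$'s of $\PROGG{\m C}$ gates, i.e.\ a $\MODG{p}$-style sum.

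Finally, the acceptance of $\progb{\po p}{\iota}{S}$ is a bounded-arity function $\po d$ of the $k$ values $\po p[\iota](\overline b)^s$ together with the $|C|$ values $\progb{\po p^{\m C}}{\iota'}{\set{c}}$. I would write $\po d$ as a polynomial of bounded degree over $GF(p)$ and expand products of the linear sums. This gives a degree-$O(1)$ polynomial in the program outputs with $O(\text{poly}(l))$ monomials. That is exactly a $\circuit{\MODG{p}}{\ANDG{d}}{\PROGG{\m A/\alpha}}$ circuit of polynomial size, since a $GF(p)$ value is tested by a $\MODG{p}$ gate with a suitable accepting set, and coefficients are realized by repeating wires.

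I expect the main obstacle to be the first step: proving rigorously that the linear part is independent of the $\m C$-coordinates. This is precisely where nilpotency (centrality of the atom) must be used; for a merely abelian atom the claim fails, and that failure is what forces ABPs in \cref{solvable-induction}. My first attempt would use the term condition applied to $\po f$ with the $\alpha$-related arguments varied inside fixed cosets, combined with the fact that the difference term is a Malcev operation commuting with polynomials on $\alpha$-classes.
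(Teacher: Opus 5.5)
Your plan matches the paper's proof. The paper takes your target formula from \cite[Theorem~7.1]{fm}: for the central atom, every polynomial has the form $\po p(\ov a)=(\po p^{\m M}(\ov a^M)+\hat{\po p}(\ov a^B),\po p^{\m B}(\ov a^B))$, with $\po p^{\m M}$ a module polynomial. It then unwinds this along the circuit into affine terms in the inputs plus distortion terms $\hat{\po f}_H$ of quotient subprograms, and writes these as bounded-degree polynomials in $\PROGG{\m A/\alpha}$ gates. The bounded-arity acceptance condition is collapsed via a $\BOUNDG{d'}\circ\MODG{p}\circ\ANDG{d}\to\MODG{p}\circ\ANDG{dd'p}$ lemma, which is exactly what your expansion of $\po d$ does.

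One caution on the step you flagged. If you only use the existence of an embedding into $\mathbb{Z}_p^{[k]}\boxtimes\m C$ from \cref{struct-alp-cong}, the term condition cannot force $\lambda^s_{i,j}$ to be constant. Re-coordinatizing a single $\alpha$-class by a linear automorphism preserves the action-product form but makes the coefficients depend on $x^{\m C}$. So you need coordinates aligned across $\alpha$-classes (e.g.\ through $\alpha/\Delta_{1,\alpha}$), and this is precisely what citing that structure theorem supplies.
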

\noindent For the proof of the above Lemma see the Appendix \ref{apend-lem29}.

\begin{theorem}\label{nilp-theorem}
    Let $\m A$ be a finite nilpotent Malcev algebra. Then $\palg{\m A} \subseteq CC^0$.
\end{theorem}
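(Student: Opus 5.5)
We argue by induction on the height $h$ of the congruence lattice of $\m A$, following the pattern of the proof of \cref{solv-characterize} but using \cref{lem-nilp} in place of \cref{solvable-induction}. Since $\m A$ is nilpotent, every quotient $\m A/\alpha$ is again a finite nilpotent Malcev algebra, so the induction stays inside the class of algebras covered by \cref{lem-nilp}. The base case is the trivial one-element algebra, or a simple nilpotent algebra. A trivial algebra computes only constant functions, which lie in $CC^0$. For a simple nilpotent (hence abelian, affine) algebra of characteristic $p$, \cref{coll:simaff} gives a $\BOUNDG{k}\circ\MODG{p}$ circuit of size $O(l)$. A bounded-arity Boolean function of $\MODG{p}$ outputs can be written as a $\MODG{q}$ gate over $\ANDG{k}$ of the $\MODG{p}$ outputs for a suitable modulus. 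Alternatively, we can simply apply \cref{lem-nilp} once more with $\m A/\alpha$ trivial.

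For the inductive step, fix an atom $\alpha$ of $\m A$ with characteristic $p$. By \cref{lem-nilp}, a program of length $l$ over $\m A$ is computed by a $\circuit{\MODG{p}}{\ANDG{d}}{\PROGG{\m A/\alpha}}$ circuit of size $O(\text{poly}(l))$. Each $\PROGG{\m A/\alpha}$ gate has size at most $\text{poly}(l)$. By the induction hypothesis, each such gate is replaced by a $CC^0$ circuit of depth bounded by a constant $D(\m A/\alpha)$ and size polynomial in its program size. The composition therefore has polynomial size and depth $D(\m A/\alpha)+2$. Iterating along a maximal chain $0_{\m A}=\alpha_0\prec\cdots\prec\alpha_h=1_{\m A}$ gives depth $O(h)$, a constant depending only on $\m A$, and size a bounded composition of polynomials, hence polynomial. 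The resulting circuits use $\MODG{p_i}$ gates for the various characteristics $p_i$, together with $\ANDG{d}$ gates.

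It remains to ensure the final circuit is a genuine $CC^0$ circuit, i.e.\ built only of $\MODG{m}$ gates for a single modulus $m$. First, we take $m=p_1\cdots p_h$ (or their product with multiplicities). A $\MODG{p}$ gate with $p\mid m$ is simulated by a $\MODG{m}$ gate after duplicating each input wire $m/p$ times, which multiplies the size by a constant. Second, we must remove the bounded fan-in $\ANDG{d}$ gates. This is the main obstacle: the standard fact is that an $\ANDG{d}$ of bounded fan-in can be computed by a constant-size, constant-depth circuit of $\MODG{q}$ gates for any $q\ge 2$. For example, $\ANDG{2}(x,y)$ is obtained from $\MODG{q}$ gates on $\{x,y\}$ with accepting set $\{2\}$, after first converting inputs to $0/1$ via $\MODG{q}$ gates with accepting set $\{1\}$. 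In general, $\ANDG{d}$ equals the $\MODG{m}$ gate with accepting set $\{d\}$ when $m>d$, and we can choose $m$ large enough by taking a suitable multiple. Negations are handled by complementing accepting sets. Since $d$ is fixed, each replacement adds constant depth and size, so the whole family is a polynomial-size constant-depth $CC_h[m]$ circuit family, proving $\palg{\m A}\subseteq CC^0$.
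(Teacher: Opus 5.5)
Your proof is correct and follows the paper's route: you iterate \cref{lem-nilp} along a maximal congruence chain to get a constant-depth, polynomial-size $\MODG{p_1}\circ\ANDG{d}\circ\cdots\circ\MODG{p_h}$ circuit, and then unify the moduli. The one difference is how you remove $\ANDG{d}$: you use a single $\MODG{m}$ gate with accepting set $\{d\}$, where $m>d$ and all $p_i\mid m$, whereas the paper uses constant-size $\MODG{2}\circ\MODG{3}$ circuits from \cite{BarringtonST90}; your version is valid for the paper's MOD gates with arbitrary accepting sets, and simpler. One slip you do not actually rely on: bounded fan-in AND is not computable by $\MODG{q}$ circuits for every $q\ge 2$, because $\MODG{2}$ circuits compute only $GF(2)$-affine functions, so your $\ANDG{2}$ example with accepting set $\{2\}$ requires $q\ge 3$.
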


\begin{proof}
Take an algebra $\m A$ and a tight chain of congruences $0_A = \alpha_0 \prec \alpha_1 \prec \ldots \prec \alpha_h = 1_A$. Define a quotient algebra $\m A_j = \m A/\alpha_{h-j}$. By \cref{lem-nilp} we can see that a program of $\m A_j$ of size $l$ can be represented by $\text{poly}(l)$-size $\circuit{\MODG{p_j}}{\ANDG{d}}{\PROGG{\m A_{j-1}}}$ circuit, where $p_j$ is a characteristic of $\alpha_{j-1} \prec \alpha_{j}$ prime quotient. Thus by induction on $j$ we can see that $\progb{\po p}{\iota}{S}$ itself can be computed by $\text{poly}(l)$ size circuit of type 
\[
    \MODG{p_1} \circ \ANDG{d} \circ \MODG{p_2} \circ \ANDG{d} \circ \ldots \circ \MODG{p_{h}} 
\]

We can replace each $\ANDG{d}$ gate by $\MODG{2} \circ \MODG{3}$ circuit of constant size (as every Boolean function has such $\MODG{2} \circ \MODG{3}$ circuit \cite{BarringtonST90}), and it finishes the proof of theorem (as we can simulate each $\MODG{p}$ gate by $\MODG{m}$ gate for $m$ being the lcm of primes involved).
\end{proof}

The reverse is also true and $CC_h[m]$ circuits can be simulated by some nilpotent Malcev algebra.
\begin{example}\label{example-nilpo}
Let $\m A$ be such that $A = (\mathbb{Z}_m)^h$ and $\m A$ has the following operations.
\begin{enumerate}
    \item abelian binary $+$ taken from $(\mathbb{Z}_m,+)^h$,
    \item for each $i \in \{1,\ldots,h\}$ a unary projection $\po e_i$ such that $(\po e_i(x))_j = x_i$ iff $i=j$ and  $(\po e_i(x))_j = 0$ otherwise,
    \item for each $i \in \{2,\ldots,h\}$ and $S\subseteq \mathbb{Z}_m$ unary $\chi_{S,i}$ such that $\chi_{S,i}(x)_j =1$ iff $j=i-1 \land x_i \in S$  and $(\chi_{S,i}(x))_j = 0$ otherwise.
\end{enumerate}
\end{example}

It is easy to check that one can rewrite any $CC_h[m]$ circuit into a program of this algebra in an efficient way. Indeed just replace the gate $\MODG{m}$ with an accepting set $S$ on the $i$-th level of the circuit using projection $\po e_i$ and $+$ operation and then pass the resulting value with $\chi_{S,i}$ to the next level. This algebra is Malcev because it has an abelian $+$ and is nilpotent because congruences forgetting prefixes of coordinates form a proper centralizing series.
\begin{corollary}
Let $NILPOTENT$ be class of all finite nilpotent Malcev algebras. Then $\palg{NILPOTENT} = CC^0$.
\end{corollary}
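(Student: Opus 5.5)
The corollary asserts two inclusions, and I would prove them separately.

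For $\palg{NILPOTENT}\subseteq CC^0$ there is nothing new to do. Every $\m A$ in the class is a finite nilpotent Malcev algebra, so \cref{nilp-theorem} gives $\palg{\m A}\subseteq CC^0$ directly. Taking the union over the class gives the inclusion.

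For $CC^0\subseteq\palg{NILPOTENT}$, take $L\in CC^0$. Fix a polynomial-size family of constant-depth $\MODG{m}$ circuits recognizing $L$, with depth at most $h$, for fixed $m$ and $h$. I would use the algebra $\m A$ of \cref{example-nilpo} with these $m$ and $h$. Write $\po e_i$ for the projections and $\chi_{S,i}$ for the level-transfer operations of that example.

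First, I normalize the circuits to be layered. Pad with identity gates (a $\MODG{m}$ gate with one input and accepting set $\{1\}$) so that every input-to-output path has length exactly $h$. Constants $0$ and $1$ can be produced as constant polynomials of $\m A$. This costs at most a factor $h$ in size.

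Next, I simulate the circuit level by level.
\begin{itemize}
\item Set $\iota(0)=(0,\ldots,0)$ and $\iota(1)=(0,\ldots,0,1)$, so each input bit sits on coordinate $h$.
\item A gate on level $i$ (counted from the inputs, with values living on coordinate $h-i+1$) with accepting set $S$ becomes $\chi_{S,h-i+1}\big(\po e_{h-i+1}(y_1)+\cdots+\po e_{h-i+1}(y_r)\big)$, where $y_1,\ldots,y_r$ are the simulated values of its inputs.
\item This places the Boolean output of the gate on coordinate $h-i$, as $0$ or $1$.
\item The output gate on the top level lands on coordinate $1$, although the $\chi_{S,1}$ required there does not exist in \cref{example-nilpo}. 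So for the top gate I stop after the sum $\po e_1(y_1)+\cdots+\po e_1(y_r)$ and take as accepting set the elements whose first coordinate lies in its $S$.
\end{itemize}
Each wire contributes one occurrence of $+$ and one of $\po e$, and each gate contributes one $\chi$. Hence the program has size linear in the circuit size. Since circuits over $\m A$ may reuse subcircuits, fan-out causes no blowup. So $L\in\palg{\m A}$.

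It remains to check that $\m A$ is in the class, and this is the step I expect to need care.
\begin{itemize}
\item \textbf{Malcev.} The Malcev term is $x-y+z$. Subtraction is a term because $-y=(m-1)y$ is an iterated sum.
\item \textbf{Nilpotent.} The operations are affine for $\mathbb{Z}_m^h$ except the $\chi_{S,i}$, and each $\chi_{S,i}$ reads coordinate $i$ and writes only into coordinate $i-1$. Let $\gamma_j$ be the kernel of projection onto coordinates $j,\ldots,h$, with $\gamma_{h+1}=1_{\m A}$ and $\gamma_1=0_{\m A}$. Modulo $\gamma_j$, all operations remain affine in coordinate $j-1$, since they depend on the other coordinates only through the fixed classes. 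This gives $[\gamma_{j},1]\le\gamma_{j-1}$, i.e.\ a central series.
\end{itemize}
If the series needs refining to prime quotients, that is harmless, since nilpotence only requires some central series.
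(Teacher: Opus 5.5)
Your argument is correct and follows the paper's route: the inclusion $\palg{NILPOTENT}\subseteq CC^0$ comes from \cref{nilp-theorem}, and the reverse inclusion simulates layered $\MODG{m}$ circuits in the algebra of \cref{example-nilpo}; your layering, top-gate handling and central-series check $[\gamma_j,1]\le\gamma_{j-1}$ fill in details the paper only asserts. One small repair is needed: a NuDFA has a single accepting set for all input lengths, so instead of reading off the top gate's possibly $n$-dependent set on coordinate $1$, take the algebra with $h+1$ coordinates, apply $\chi_{S,2}$ after the output gate as well, and accept iff the first coordinate equals $1$.
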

\section{Products of solvable and totally ordered algebras}

Now we are ready to prove the two last points of \cref{thm-main}. 

\begin{theorem}\label{hard-combination-types}
Let $\m A$ be a finite algebra from a congruence modular variety. If there exist join irreducible congruences $\alpha,\beta\in\con{A}$ such that $\beta\leq\alpha$,  $(\alpha^-,\alpha)$-minimal sets are of type $2$ and $(\beta^-,\beta)$-minimal sets are of type $4$, then $\palg{\m A}=\ppoly$.
\end{theorem}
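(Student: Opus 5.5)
The plan is to reduce to the situation of \cref{lm-subdirect-type-4}. There, $\ppoly$ was obtained as soon as we had two elements $a,b\in A$ and two polynomials $\po e_i,\po e_j$ with range a type $\tn 4$ minimal set $U=\set{0_U,1_U}$ (taken modulo the relevant lower congruence) such that $\po e_i(a)<\po e_i(b)$ and $\po e_j(a)>\po e_j(b)$. With these, setting $\iota(0)=a$ and $\iota(1)=b$, every De Morgan normal form Boolean circuit is simulated using $\wedge_U,\vee_U$. I will first pass to $\m A/\beta^-$, which is allowed by the quotient fact of Section~2. This makes $\beta$ an atom of type $\tn 4$ while $\alpha\geq\beta$ stays join irreducible of type $\tn 2$, with subcover $\alpha^-\ge\beta^-$. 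So from now on $\beta^-=0$ and $U$ is a $(0,\beta)$-minimal set, which has no tail by congruence modularity.

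The key idea is to use the abelian behaviour of $\alpha/\alpha^-$ to reverse the orientation of $U$. Let $\po d$ be the difference term. On $\alpha$-classes modulo $\alpha^-$ it acts as a Malcev operation, and on a trace of an $(\alpha^-,\alpha)$-minimal set it induces a one-dimensional vector space over $GF(p)$. Because $\alpha$ is join irreducible and $\beta\le\alpha$, we have $\alpha=\mathrm{Cg}(c,c')$ for every pair $(c,c')\in\alpha\setminus\alpha^-$. I will use the separation property \cite[Theorem~2.8(4)]{hm} together with this join irreducibility to find a binary polynomial $\po t(x,y)$ and elements $c,c'$ with $(c,c')\in\alpha\setminus\alpha^-$ such that $\po t(c,\cdot)$ and $\po t(c',\cdot)$ both map a fixed $\beta$-class onto $U$ in a non-collapsing way. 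Concretely, I would take $\po t(x,y)=\po e_U(\po d(y,c,x))$. The map $x\mapsto \po d(y,c,x)$ realises the ``translation by $x-c$'' in the abelian $\alpha/\alpha^-$ layer. Since $\po d(y,c,c)=y$ fails only on non-abelian pairs, the behaviour at $x=c$ is essentially $\po e_U$ itself. In a type $\tn 2$ trace, translating $p$ times returns to the start. I therefore look at the cyclic sequence of unary polynomials $\po g_s(y)=\po e_U(\po d(y,c,c_s))$, where $c_0=c$ and $c_{s+1}=\po d(c_s,c,c')$ runs around the trace.

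The central claim is that along this cycle some $\po g_s$ orders a fixed pair $a<_U b$ in the opposite direction, that is, $\po g_s(a)>\po g_s(b)$. Suppose instead that every $\po g_s$ were monotone with respect to $\le_U$. Then \cref{lm-order} would give that the whole $\alpha$-class acts on $(U,\le_U)$ by order-preserving maps, and cycling through the group $\mathbb Z_p$ would force these maps to fix the order pointwise. I expect this to yield $[\alpha,\beta]=0$, since $\alpha$ would centralize the atom $\beta$. By congruence modularity $[\beta,\beta]\le[\alpha,\beta]=0$, so $\beta$ would be abelian, contradicting the type $\tn 4$ of $\beta$ (in CM varieties type $\tn 4$ atoms are non-abelian). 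Once the claim holds, the pair $\po e_i=\po g_0$ and $\po e_j=\po g_s$ with $\iota(0)=a$, $\iota(1)=b$ and $S=\set{1_U}$ finishes the proof exactly as in \cref{lm-subdirect-type-4}, with linear size blow-up.

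The main obstacle is this middle step: turning ``every translate preserves $\le_U$'' into the commutator statement $C(\alpha,\beta;0)$. I would try the term condition directly. A failure of centrality is witnessed by a polynomial $\po r(x,\bar y)$ with $\po r(u,\bar v)=\po r(u,\bar w)$ but $\po r(u',\bar v)\neq\po r(u',\bar w)$, where $(u,u')\in\beta$ and $\bar v\mathrel{\alpha}\bar w$. Composing with $\po e_U$ and using $\po d$ to move $\bar v$ to $\bar w$ within the abelian layer should produce exactly an orientation reversal. If this direct route stalls, the fallback is to invoke the known TCT fact that in a CM variety a type $\tn 4$ prime quotient cannot lie below a type $\tn 2$ one unless the $\tn 2$ quotient centralizes it, and then derive the same contradiction.
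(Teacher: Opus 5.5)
There is a genuine gap, and it is the central claim itself, which cannot hold. Each $\po g_s(y)=\po e_U(\po d(y,c,c_s))$ is a unary polynomial of $\m A$ with image inside the two-element set $U$. So it is either constant or one of the polynomials with range $U$ that define $\le_U$. Hence $a\le_U b$ (which by definition requires $(a,b)\in\beta$) already forces $\po g_s(a)\le\po g_s(b)$ for every $s$. This is just the definition of $\le_U$, and it is also \cref{lm-order}(3), which you invoke yourself. So the branch ``every $\po g_s$ is monotone'' is not an assumption to be refuted; it always holds. The step you flag as the main obstacle asks you to derive $[\alpha,\beta]=0$ from it, and that conclusion is false in every instance of the theorem: $\beta$ is a non-abelian atom, so $[\alpha,\beta]\ge[\beta,\beta]=\beta\neq 0_{\m A}$. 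Your fallback fails for the same reason, since it would make the hypotheses of the theorem unsatisfiable. More generally, no construction that feeds two $\beta$-related elements into the circuit can produce oppositely oriented projections onto $U$. This is also why, in \cref{lm-subdirect-type-4}, the reversing pair is necessarily not $\mu$-related.

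The missing idea is to take the circuit inputs from the abelian layer. You already note $\alpha=\mathrm{Cg}(c,c')$, but you then use $c,c'$ only as parameters. Let $W$ be an $(\alpha^-,\alpha)$-minimal set and $0_W,1_W\in W$ with $(0_W,1_W)\in\alpha\setminus\alpha^-$. Since $\beta<\alpha$ forces $\beta\le\alpha^-$, this pair is not $\beta$-related, so $\le_U$ imposes nothing on it. From $(0_U,1_U)\in\beta\le\alpha=\mathrm{Cg}(0_W,1_W)$ you get a chain of links $\set{\po g(0_W),\po g(1_W)}$, with $\po g$ unary polynomials, joining $0_U$ to $1_U$. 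Applying $\po e_U$ to the chain, some link becomes all of $U$, which gives a unary $\po f$ with $\po f(\set{0_W,1_W})=U$. The pseudo-Mal'cev polynomial $\po d$ on $W$ \cite[Lemma~4.20]{hm} supplies the swap $x\mapsto\po d(0_W,x,1_W)$. So $\po f(x)$ and $\po f(\po d(0_W,x,1_W))$ send $0_W,1_W$ onto $U$ with opposite orientations. Now set $\iota(0)=0_W$, $\iota(1)=1_W$, $S=\set{1_U}$, and use $\wedge_U,\vee_U$ in place of $\wedge,\vee$; every De Morgan normal form circuit is then simulated with linear blow-up. This is the paper's argument, and it needs neither the reduction to $\m A/\beta^-$ nor any commutator computation.
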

\begin{proof}
Let $U$ be $(\alpha^-,\alpha)$-minimal, and $V$ be $(\beta^-,\beta)$-minimal set. Let $e_U, e_V$ be some projections to the respective minimal set. Pick arbitrary two distinct elements in $U$ which are equal modulo $\alpha$ but not modulo $\alpha^-$, call them $0_U, 1_U$, and let $0_V, 1_V$ be the two elements of $V$. From \cite[Lemma~4.20]{hm}
there is a ternary operation $\po d(x,y,z)$ such that $\po d(y,y,x) = \po d(x,y,y) = x$ for all $x,y \in U$.

First we claim that there exists unary polynomial $\po f$ in $\pol{A}$ such that $\po f(0_U) = (0_V)$ and $\po f(1_U) = 1_V$. Since $\alpha$ is join irreducible the pair $(0_U, 1_U)$ closed under unary polynomials of $\m A$ must connect all the pairs in congruence $\alpha$, in particular $0_V, 1_V$.  Hence the undirected graph which vertices are elements of $\m A$ and edges are of the form $\set{\po g(0_U), \po \po g(1_U)}$ for all unary polynomials $\po g$, must connect $0_V$ with $1_V$. But since $\po e_V(0_V) = 0_V$ and $\po e_V(1_V) = 1_V$, if we replace each edge $\set{\po g(0_U), \po g(1_U)}$ by $\set{\po e_V(\po g(0_U)), \po e_V(\po g(1_U))}$ the resulting graph still connects $0_V$ with $1_V$. But the image of $\po e_V$ is equal to $\{0_V, 1_V\}$, hence there must be some unary $\po f$ such that $\po f(\set{0_U, 1_U}) = \set{0_V, 1_V}$. Then,  $\po f$ or $\po f( \po d(0_U, x, 1_U))$ does the job.

Now take arbitrary Boolean circuit $C(b_1, \ldots, b_n)$. We may assume by the Morgans' laws that negations are only in leafs and all the other gates are $\lor, \land$. To simulate circuit $C$ by the program of $\m A$ pick $\iota(0) = 0_U, \iota(1) = 1_U$ and the accepting set $S = \{1_V\}$. Circuit $\po p$ over $\m A$ is created by replacing $\land, \lor$ in $C$ by $\land_{V}, \lor_V$ and whenever variable $b_i$ is nonnegated we just replace it by $\po f(x_i)$ and when its negated we use $\po f(\po d(0_U, x_i, 1_U))$ instead. One can see that the program $\progb{\po p}{\iota}{S}$ has linear size in terms of the size of $C$ and computes the exact same function.

\end{proof}

To prove the last point we need the following algebraic lemma, which allows us to separate lattice behaviour from the affine one.

\begin{lemma}\label{last-lem}
Let $\m A$ be a finite algebra from a congruence modular variety such that $\typset{\m A} \subseteq \{2,4\}$. If there are no join irreducible congruences $\alpha,\beta\in\con{A}$ such that $\beta\leq\alpha$,  $(\alpha^-,\alpha)$-minimal sets are of type $2$ and $(\beta^-,\beta)$-minimal sets are of type $4$, then there exists a congruence $\theta \in \con{A}$ such that $\typset{\m A/\theta} = \{4\}$ and all prime quotients below $\theta$ are of type $2$.
\end{lemma}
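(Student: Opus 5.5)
We take $\theta$ to be the largest solvable congruence of $\m A$. In the modular setting this is the join of all congruences $\gamma$ for which every prime quotient in $[0_{\m A},\gamma]$ has type $\tn 2$; equivalently, it is the solvable radical, and it is itself solvable because joins of solvable congruences are solvable in congruence modular varieties \cite{fm}. By construction, every prime quotient below $\theta$ has type $\tn 2$. It remains to show that every prime quotient $\gamma\prec\delta$ with $\theta\le\gamma$ has type $\tn 4$. Since $\typset{\m A}\subseteq\set{\tn 2,\tn 4}$, we only need to rule out type $\tn 2$ above $\theta$.

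Suppose, for contradiction, that some prime quotient $\gamma\prec\delta$ with $\theta\le\gamma$ has type $\tn 2$. We first move it to a join irreducible congruence. Choose $(a,b)\in\delta\setminus\gamma$ and set $\alpha=\Theta(a,b)$. We refine $\alpha$ to a congruence that is join irreducible and satisfies $\alpha\le\delta$, $\alpha\not\le\gamma$. Then $\alpha\vee\gamma=\delta$, and modularity gives the projectivity $[\alpha\wedge\gamma,\alpha]\nearrow[\gamma,\delta]$. In a finite algebra with modular congruence lattice, a join irreducible $\alpha$ has the unique subcover $\alpha^-$, and $\alpha^-\ge\alpha\wedge\gamma$. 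Perspective prime quotients have the same type (TCT, \cite{hm}), and in modular lattices types are constant along projectivity classes of prime quotients. Hence $(\alpha^-,\alpha)$ has type $\tn 2$.

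Next we find a join irreducible $\beta\le\alpha$ of type $\tn 4$. Because $\alpha\not\le\theta$, the interval $[0_{\m A},\alpha]$ cannot consist only of type $\tn 2$ quotients; otherwise $\alpha$ would be solvable and hence below $\theta$. So some prime quotient $\eta\prec\zeta\le\alpha$ has type $\tn 4$. Choose $\beta$ minimal among the congruences $\le\zeta$ with $\beta\not\le\eta$. Then $\beta$ is join irreducible, and by the same perspectivity argument $(\beta^-,\beta)$ is projective to $(\eta,\zeta)$, so it has type $\tn 4$. Thus $\beta\le\alpha$, both are join irreducible, $\alpha$ has type $\tn 2$ and $\beta$ has type $\tn 4$. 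This is exactly the configuration the hypothesis forbids, which gives the contradiction.

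The main obstacle is showing that types are preserved when we pass to join irreducible congruences, that is, along the projectivity $[\alpha\wedge\gamma,\alpha]\nearrow[\gamma,\delta]$ together with the inclusion $\alpha^-\ge\alpha\wedge\gamma$. If we use only perspectivity, we must check that $(\alpha^-,\alpha)$ and $(\alpha\wedge\gamma,\alpha)$ are connected by a chain of type-preserving steps. For this we would use the TCT fact that in a modular congruence lattice all prime quotients inside a projective interval pair match type for type (\cite[Ch.~2,~6]{hm}). A secondary point is to verify carefully that "no type $\tn 2$ below" is equivalent to solvability, which uses the characterization of solvability via types from \cite{hm, fm} that was quoted earlier.
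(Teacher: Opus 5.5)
Your proof is correct, and it takes a different route from the paper's.

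\textbf{The paper's route.} The paper invokes Valeriote's $(4,2)$-transfer principle. If the principle holds, a maximal chain in $\con{A}$ is reordered so that all type $\tn 2$ covers come first, and $\theta$ is taken where the types switch. If it fails, the paper cites \cite[Lemma 6.2]{IdziakK22} to obtain $\alpha\prec\beta\prec\gamma$ of types $\tn 4,\tn 2$ with $\gamma$ join irreducible, and then transposes $\alpha\prec\beta$ down to a join irreducible of type $\tn 4$.

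\textbf{Your route.} You fix $\theta$ canonically as the solvable radical. You then produce the forbidden configuration directly from any type $\tn 2$ cover above $\theta$. This uses only three ingredients: invariance of types under perspectivity, the TCT characterization of solvable congruences by their types, and the fact that a congruence minimal among those lying below the top of a prime quotient but not below its bottom is join irreducible and perspective to that quotient.

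\textbf{What each buys.} Your argument is more self-contained, since it needs neither transfer principles nor the external lemma, and it identifies $\theta$ explicitly. A chain $0_{\m A}\prec\cdots\prec\theta$ of type $\tn 2$ covers, as used in \cref{final-mixed-th}, is then immediate.

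\textbf{Your ``main obstacle.''} It is not really an obstacle. By modularity, $[\alpha\wedge\gamma,\alpha]\cong[\gamma,\alpha\vee\gamma]=[\gamma,\delta]$, so $\alpha\wedge\gamma\prec\alpha$. Uniqueness of the subcover then gives $\alpha^-=\alpha\wedge\gamma$. Hence $\langle\alpha^-,\alpha\rangle\nearrow\langle\gamma,\delta\rangle$ is a single perspectivity and no chain argument is needed. Alternatively, choose $\alpha$ minimal in $\set{\alpha'\le\delta:\alpha'\not\le\gamma}$, exactly as you did for $\beta$.

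\textbf{A degenerate case.} Both your argument and the paper's give $\theta=1_{\m A}$ when $\m A$ is solvable, and then $\typset{\m A/\theta}$ is empty rather than $\set{\tn 4}$. This case is harmless, because the paper applies the lemma only when $\typset{\m A}=\set{\tn 2,\tn 4}$.
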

\begin{proof}
Valeriote in \cite{valeriote1986decidable} introduced so called transfer principle. We say that a finite algebra satisfies the $(i, j)$-transfer principle if whenever there are prime quotients $\alpha\prec\beta$ of type $i$ and $\beta\prec\gamma$ of type $j$  there exists $\beta'$ such that $\alpha\prec\beta'$ is a prime quotient of type $j$ and $\beta'\leq\gamma$. If $\m A$ satisfies $(4,2)$-transfer principle then we can transform any path from $0_{\m A}$ to $1_{\m A}$ in congruence lattice of $\m A$ in such a way that it is of the form $0_{\m A}=\alpha_1\prec\alpha_2\ldots\alpha_{h-1}\prec\alpha_{h}=1_{\m A}$ and there exists $k$ such that $\alpha_i\prec\alpha_{i+1}$ is of type $2$ for $i<k$ and of type $\tn 4$ for $i\geq k$.  In such a case $\theta=\alpha_k$ do the job. If $\m A$ does not satisfy $(4,2)$-transfer principle then by \cite[Lemma 6.2]{IdziakK22} there are congruences $\alpha\prec\beta\prec\gamma$ such that $\alpha\prec\beta$ is a quotient of type $\tn 4$, $\beta\prec\gamma$ is a quotient of type $\tn 2$ and $\gamma$ is join irreducible. Contradiction, since $\alpha\prec\beta$ can be transposes down to quotient $\alpha'\prec\beta'$ of type $\tn 4$ with $\beta'$ being join irreducible.
\end{proof}

This leads us to the following.
\begin{theorem}\label{final-mixed-th}
Let $\m A$ be a finite algebra from a congruence modular variety such that $\typset{\m A} \subseteq \{2,4\}$ and there exists a congruence $\beta \in \con{A}$ such that $\typset{\m A/\beta} = \{4\}$ and all prime quotients below $\beta$ are of type $2$. If $\m A/\beta$ is a subdirect product of totally ordered algebras
\[
\palg{\m A} \subseteq \circuittwo{\nuitdet}{\MON}\ \ (\subseteq NC^2 \circ \MON)
\]
\end{theorem}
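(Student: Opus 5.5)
We argue by induction along a maximal chain of congruences $0_{\m A}=\alpha_0\prec\alpha_1\prec\ldots\prec\alpha_h=\beta$ lying below $\beta$. Every prime quotient $\alpha_{i-1}\prec\alpha_i$ has type $\tn 2$; let $p_i$ be its characteristic. The top layer $\m A/\beta$ is a subdirect product of totally ordered algebras, so it is handled as in the proof of \cref{thm-main}, item 4. There, using \cref{lm-totally-ordered} and the bounded-arity reconstruction function, every program over $\m A/\beta$ of size $l$, with a singleton accepting set, is computed by a $\BOUNDA{c}\circ\MON$ circuit of size $\text{poly}(l)$.

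The inductive step is \cref{solvable-induction}, applied to the algebra $\m A/\alpha_{i-1}$ and its atom $\alpha_i/\alpha_{i-1}$. That lemma does not need solvability of the whole algebra. It only uses \cref{struct-alp-cong}, i.e.\ that the atom has type $\tn 2$ in a congruence modular variety. It shows that a program of size $l$ over $\m A/\alpha_{i-1}$ is computed by a $\BABPG{p_i}\circ\PROGG{\m A/\alpha_i}$ circuit of size $\text{poly}(l)$. The accepting sets of the inner programs are singletons $\{c\}$, and the $\iota'$ used there is just the quotient of $\iota$.

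Iterating this $h$ times, a program over $\m A$ becomes a
\[
\BABPG{p_1}\circ\cdots\circ\BABPG{p_h}\circ\PROGG{\m A/\beta}
\]
circuit of polynomial size. Note that $h$ is a constant, and each step composes polynomially bounded sizes, so the total size stays polynomial. Substituting the $\BOUNDA{c}\circ\MON$ description for every $\PROGG{\m A/\beta}$ gate gives
\[
\BABPG{p_1}\circ\cdots\circ\BABPG{p_h}\circ\BOUNDA{c}\circ\MON .
\]
It remains to absorb the $\BOUNDA{c}$ layer into the lowest $\BABPG{p_h}$ gate. A bounded-arity Boolean function of $c$ inputs is a polynomial of constant degree over $GF(p_h)$. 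Substituting such polynomials for the variables of an ABP of size $l$ blows it up only by a constant factor. This follows by the argument of \cref{compose_poly_const} and \cref{abp-mult}: replace each variable edge by a constant-size ABP of the corresponding polynomial. Then $\BABPG{p_h}\circ\BOUNDA{c}$ collapses into a single $\BABPG{p_h}$ gate, and we obtain $\nudet{p_1}\circ\cdots\circ\nudet{p_h}\circ\MON\subseteq\circuittwo{\nuitdet}{\MON}$. The inclusion into $NC^2\circ\MON$ then follows from \cref{solv-nc_2} and the closure of $NC^2$ under bounded-height composition.

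The main obstacle is the interface with the monotone layer. We must check that the programs over $\m A/\beta$ appearing at the bottom really land in $\MON$ and not only in $\MMON$. If $\iota$ is not monotone with respect to the orders on the factors, \cref{lm-totally-ordered} only gives monotone circuits over the inputs $\neg b_i$. Moreover, different factors $\m B_j$ may require different orientations of $\iota$.

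I would handle this as follows. For each factor $\m B_j$ I write the program as a Boolean combination of monotone circuits over either all-positive or all-negated literals. The negations and the Boolean combination are pushed into the $\BABPG{p_h}$ layer, where $\neg x = 1-x$ is an affine substitution covered by \cref{affine-subst}. If this is not permitted by the intended meaning of $\MON$ (monotone circuits on positive literals), I would instead allow the bottom $\MON$ circuits to read the negated literals $\neg b_i$. The main work is to check that the definition of $\circuittwo{\nuitdet}{\MON}$ tolerates this, which is the step I expect to need the most care.
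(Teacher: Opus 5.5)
Your proposal is correct and follows the paper's own proof: iterate \cref{solvable-induction} down the type-$\tn 2$ chain $0_{\m A}=\alpha_0\prec\cdots\prec\alpha_h=\beta$, handle the resulting $\PROGG{\m A/\beta}$ gates through the totally-ordered description $\MMON=\circuittwo{\BOUNDA{c}}{\MON}$, and absorb the bounded-arity layer into the lowest ABP layer. What you flag as the main obstacle is not a real one: by De Morgan, a monotone circuit on negated literals is the negation of the dual monotone circuit on positive literals (this is why $\MMON$ is closed under bitwise complement), so each factor gives positive-literal $\MON$ circuits whose output negations are absorbed by the $\BOUNDA{c}$/$\BABPG{p_h}$ layer, and $\MON$ never has to be weakened.
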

\begin{proof}
    We can see that if $\beta = 1_{\m A}$ the statement follows from \cref{theorem-solv-nc2}. If $\beta$ is nontrivial, we can choose a sequence of congruences $0_A = \alpha_0 \prec \alpha_1 \prec \ldots \prec \alpha_h = \beta$. Each prime quotient $\alpha_{i-1} \prec \alpha_i$ is of type $2$ and has some prime characteristic $p_i$. Therefore by \cref{lm-subdirect-type-4} and \cref{solvable-induction} iteratively applied along a chain $\alpha_i$ we can see that 
    \[
    \palg{A} \subseteq \nudet{p_1} \circ \ldots \circ \nudet{p_h} \circ \MMON
    \]
    But now any circuit of the class $\MMON$ is a composition of bounded arity function with some number of monotone circuits. Any bounded arity Boolean function is computable with the polynomial of constant ABP complexity over any finite field with accepting set $\{1\}$ so 
    \[
    \palg{A} \subseteq \nuitdet \circ \MON
    \]
\end{proof}

\section{Simple algebras}\label{sec-simple}
In this section we will consider simple algebras i.e. algebras witch have no nontrivial congruences. In such a case $0\prec 1$ is an only prime quotient and hence every two minimal sets are polynomially equivalent. Hence, it has sense to call a type of minimal sets a type of algebra.  Moreover, every minimal set has exactly one trace and no tail.  Most of our results can be summarized in the following theorem.
\begin{theorem}\label{thm-simple}
Let $\m A$ be simple finite algebra then if its type is equal:
\begin{enumerate}
    \item[$\tn 1$] then $\palg{\m A}=\BOUNDA{c}$, for some constant $c$.
    \item[$\tn 2$] then $\palg{\m A}=\BOUNDA{c} \circ \MODG{p}$, for some constant $c$ and prime number $p$.
    \item[$\tn 3$] then $\palg{\m A}=\ppoly$.
    \item[$\tn 4$] then $\MON \subseteq \palg{\m A} \subseteq \MMON$ for totally ordered $\m A$ and $\palg{A} = \ppoly$ otherwise.
\end{enumerate}
\end{theorem}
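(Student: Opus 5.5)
We treat the four types separately. In each case we use that a simple algebra has a unique prime quotient $0\prec 1$, every minimal set has a single trace and no tail, and, by \cite[Theorem~2.8(4)]{hm}, any two distinct elements can be mapped by some unary polynomial onto two distinct points of a fixed minimal set $U$. Type $\tn 3$ is the easiest. On $U$ the induced algebra is polynomially equivalent to the two-element Boolean algebra. We pick $\iota$ with $\iota(0),\iota(1)$ equal to the two points of $U$ and take $S$ to be one point. Then any Boolean circuit is simulated gate by gate, exactly as in the proof of \cref{thm-main}, so $\palg{\m A}=\ppoly$. Type $\tn 4$ is essentially \cref{lm-subdirect-type-4}, because a simple algebra is subdirectly irreducible with monolith $1_{\m A}$ and its minimal sets have no tails. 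That lemma gives $\palg{\m A}=\ppoly$ when $\m A$ is not totally ordered, and $\palg{\m A}\subseteq\MMON$ otherwise. For the lower bound $\MON\subseteq\palg{\m A}$, we take the lattice operations $\land_U,\lor_U$ on $U=\{0_U,1_U\}$, set $\iota(0)=0_U$, $\iota(1)=1_U$ and $S=\{1_U\}$, and replace each gate of a monotone circuit by the corresponding lattice polynomial.

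For type $\tn 2$, the upper bound $\palg{\m A}\subseteq\BOUNDA{c}\circ\MODG{p}$ is \cref{coll:simaff}. A simple algebra of type $\tn 2$ is of affine type, since it is abelian and hence polynomially equivalent to a module, and its characteristic is the characteristic $p$ of the traces. For the lower bound, a trace of $U$ is polynomially equivalent to a one-dimensional vector space over $GF(p^r)$, and in particular carries $+$. We choose $\iota$ mapping $0,1$ to $0$ and a nonzero vector $v$ of the trace, so that sums $x_1+\dots+x_n$ realise a $\MODG{p}$ gate. A constant number of such sums can be combined by a bounded-arity function with a fixed accepting set. Since $S$ is a single fixed set, we have to encode the $c$ outputs into one element: use a unary polynomial $e_{j}$ multiplying by a scalar, so that the $j$-th sum lands in a different $\mathbb{Z}_p$-direction of the module (or, if $r=1$, split via a bounded tuple of polynomials), and then adjust $S$. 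This step needs care when the trace is one-dimensional over $\mathbb{Z}_p$. In that case we would instead argue that $\BOUNDA{c}\circ\MODG{p}$ with the constant $c$ taken from the statement is attained, since \cref{coll:simaff} bounds $c$ by the dimension data of $\m A$ itself.

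For type $\tn 1$, the polynomials of $U$ are essentially unary: the induced algebra is a permutation group action. We would show that every polynomial of $\m A$ depends, on each input assignment from $\iota(\{0,1\})$, on only boundedly many variables. The tool is TCT, which says a type $\tn 1$ simple algebra is \emph{essentially unary up to} a bounded number of essential coordinates. More concretely, one would use that type-$\tn 1$ simple algebras are polynomially equivalent to $G$-sets, or else have polynomials of small essential arity. We then conclude by induction on the circuit that the output is determined by at most $c$ input bits, with $c$ depending on $|A|$. The converse $\BOUNDA{c}\subseteq\palg{\m A}$ requires exhibiting enough polynomials. This identification of the exact constant, together with the bounded-arity reduction for type $\tn 1$, is where I expect the main obstacle, since TCT only guarantees unary behaviour locally on minimal sets. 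I would first try to lift locality to the whole algebra using simplicity and the separating unary polynomials from \cite[Theorem~2.8(4)]{hm}.
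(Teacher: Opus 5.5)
Your treatment of types $\tn 3$ and $\tn 4$ matches the paper; you even supply the lower bound $\MON\subseteq\palg{\m A}$ that the paper leaves implicit. \textbf{Type $\tn 1$, however, has a genuine gap.} The whole content of $\palg{\m A}\subseteq\BOUNDA{c}$ is a uniform bound on the essential arity of \emph{all} polynomials of $\m A$, and you never prove it.
\begin{itemize}
\item The dichotomy you invoke does not supply the bound. Its first alternative is false in general: the matrix square $\m N^{[2]}$ of a two-element set with no basic operations is simple of type $\tn 1$, yet $(x,y)\mapsto(x^1,y^2)$ is an essentially binary polynomial, so it is not polynomially equivalent to a $G$-set. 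Its second alternative is just the claim itself.
\item An induction over the gates of the circuit cannot yield a bound either, because composition increases essential arity unless a clone-wide bound is already available.
\end{itemize}
The missing ingredient is the representation of \cite[Lemma~13.1]{hm}, which is the paper's \cref{lem:un_aff}. The unary polynomials $\po e_1,\dots,\po e_k$ with range in the trace $N$ separate points, by simplicity and \cite[Theorem~2.8(4)]{hm}. Hence $x\mapsto(\po e_1(x),\dots,\po e_k(x))$ embeds $\m A$ as a subreduct of $\m N^{[k]}$, and every polynomial of $\m A$ acts coordinatewise as a polynomial of $\m N$. For type $\tn 1$ those coordinate polynomials are essentially unary, so every polynomial of $\m A$ depends on at most $k$ variables, and programs inherit this bound (\cref{coll:simun}). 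This is exactly the ``lifting of locality'' you identify as your main obstacle.

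The same lemma is also the right justification for your type $\tn 2$ upper bound. Being a subreduct of a module is all that the coordinatewise-linear argument behind \cref{coll:simaff} needs. By contrast, ``abelian, hence polynomially equivalent to a module'' requires a Malcev polynomial, which a simple algebra of type $\tn 2$ need not have.

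Your attempts at the reverse inclusions are not proofs either. The type $\tn 2$ fallback is circular: the upper bound of \cref{coll:simaff} says nothing about which $\BOUNDA{c}\circ\MODG{p}$ functions are attained. For type $\tn 1$ the question is simply left open. Note that the paper's own argument for types $\tn 1$ and $\tn 2$ only establishes the inclusions $\subseteq$. With $\iota$ and $S$ fixed along the whole family, the reverse inclusion can genuinely fail:
\begin{itemize}
\item Take $\m A=(\mathbb{Z}_5,+)$. Every program with $\iota(0)\neq\iota(1)$ accepts a set $\{\ov b:\sum_i a_ib_i\in T\}$ with $|T|=|S|$.
\item One checks that for large $n$ such a set equals $\{\ov b:\sum_i b_i\in R\}$ with $\emptyset\neq R\subsetneq\mathbb{Z}_5$ only if all $a_i$ are equal and nonzero, and hence $|R|=|T|$.
\item Consider the language whose $n$-th slice is $\sum_i b_i\equiv 0$ for even $n$ and $\sum_i b_i\in\{0,1\}$ for odd $n$. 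It lies in $\BOUNDA{1}\circ\MODG{5}$, but it is not in $\palg{\m A}$, since that would need $|S|=1$ and $|S|=2$ simultaneously.
\end{itemize}
So for types $\tn 1$ and $\tn 2$ you should aim only for the upper bounds, obtained through \cref{lem:un_aff} as above.
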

Before we will prove \cref{thm-simple} let us note that simple algebras of type $\tn 5$ are not mentioned. It is because of surprisingly rich structure of these algebras and the resulting problem in simply characterizing the languages they recognize. It turns out that for many of considered languages there is a finite simple algebra  of type $5$ recognizing languages from this class. The following example can give us a taste of what might happen in case of algebras of type $\tn 5$.

\begin{example}\label{ex-type-5}
Let $A=\set{\bot,0,1}$ $\po e_1$, $\po e_2$, $\po e_3$, $\neg$, $\wedge$, $\vee$, $+$ be operations on $A$ such that
\begin{itemize}
    \item $\po e_1(\bot)=\po e_1(0)=\bot$ and $\po e_1(1)=0$,
    \item $\po e_2(\bot)=\bot$ and $\po e_2(0)=\po e_2(1)=1$,
     \item $\po e_3(\bot)=\bot$ and $\po e_3(0)=\po e_3(1)=0$,
    \item $\neg \bot = \bot$, $\neg 0=1$ and $\neg 1=0$,
    \item $\bot \wedge x=\bot \wedge x=\bot$ for $x\in A$, $0 \wedge y=y \wedge 0=0$ for $y\in\set{0,1}$ and $1\wedge 1=1$,
    \item $\bot \vee x=\bot \vee x=\bot$ for $x\in A$, $0\vee 0=0$ and $1' \vee y=y \vee 1 =1$ for $y\in\set{0,1}$,
     \item $\bot + x=\bot + x=\bot$ for $x\in A$, $0 + 1=1+0=1$, $0 + 0=1+1=0$.
\end{itemize}
Then for algebras
\begin{itemize}
    \item $\m A_1=(a,\po e_{1},\po e_{2}, \po e_3, \wedge)$,
    \item $\m A_2=(a,\po e_{1},\po e_{2}, \po e_3, \wedge,\vee)$,
    \item $\m A_3=(a,\po e_1,\po e_{2}, \po e_3, \wedge,\neg)$,
    \item $\m A_4=(a,\po e_1,\po e_2, \po e_3, +)$,
\end{itemize}
it holds that
\begin{itemize}
    \item $\MON\nsubseteq\palg{\m A_1}\subseteq\MMON$,
    \item $\MON\subseteq\palg{\m A_2}\subseteq\MMON$,
    \item $\palg{\m A_3}=\ppoly$,
    \item $P_{\m A_4}\subseteq \mathsf{AC}_k[2]$ for some small constant $k$.
\end{itemize}
\end{example}

From here until the end of the section we will prove \cref{thm-simple}. It is known that for every finite simple algebra $\m A$ of type 1 (unary type) there is a uniform bound $k_{\m A}$ for the essential arity of its polynomial operations. What follows, also programs over $\m A$ can express only essentialy-constant arity operations. As already the expressive power of programs over such algebras is narrow, studying their corresponding circuit complexity is pointless.

Similarly as in the previous sections we will use representations of finite algebras that are inspired by the \cite[Chapter\ 13]{hm}. In particular Lemma 13.1 provides representations for type 1 and type 2 simple algebras. 

\begin{lemma}\label{lem:un_aff}
Let $\m N$ be a trace of a simple algebra $\m A$ of type 1 (unary type) or type 2 (affine type). Then $\m A$ is a subreduct of $\m N^{[k]}$ for some integer $k$.
\end{lemma}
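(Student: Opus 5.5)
The plan is to embed $\m A$ into a matrix power of a trace $\m N$ using the polynomials that map $A$ onto $N$. Since $\m A$ is simple, $0_{\m A}\prec 1_{\m A}$ is the only prime quotient. Take a $(0,1)$-minimal set $U$ with idempotent $\po e_U$. It has a single trace $N$ and no tail, so $U=N$. Let $\po e_1,\ldots,\po e_k$ enumerate all unary polynomials of $\m A$ with range contained in $N$. This is a finite list, since $A$ is finite. Define
\[
h(x)=(\po e_1(x),\ldots,\po e_k(x))\in N^k .
\]

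The first step is to show that $h$ is injective. Take $a\neq b$. Because $\m A$ is simple, $(a,b)$ generates $1_{\m A}$. By \cite[Theorem~2.8(4)]{hm} (the separation fact already used in \cref{lm-order} and \cref{lm-decomposition}), there is a unary polynomial $\po f$ with $\po f(A)=U$ and $\po f(a)\neq\po f(b)$. This $\po f$ is one of the $\po e_i$, so $h(a)\neq h(b)$.

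The second step is to show that each basic operation $\po f$ of $\m A$ is transported by $h$ to an operation of $\m N^{[k]}$. Concretely, for each $i$ I need a polynomial $\po p_i$ of $\m N$ (the induced algebra $\m A|_N$) such that
\[
\po e_i(\po f(x_1,\ldots,x_n))=\po p_i\bigl(\po e_1(x_1),\ldots,\po e_k(x_n)\bigr).
\]
I split by type.

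In the affine case (type $\tn 2$), I repeat the argument of \cref{lm-decomposition}. Here $\m A$ is abelian and, by \cite[Chapter~13]{hm}, has a Malcev term $\po d$ that commutes with all polynomials. After subtracting a base point, $\po g_i=\po e_i\circ\po f-\text{const}$ decomposes as a sum of unary polynomials in the separate variables. Each summand has range in $N$, so it equals some $\po e_{s}(x_j)$ plus a constant. The sum is computed by $\po d$ restricted to $N$, which is a polynomial of $\m N$. This gives $\po p_i$.

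In the unary case (type $\tn 1$), I use the fact from \cite[Lemma~13.1]{hm} that every polynomial of a simple type $\tn 1$ algebra depends essentially on at most one variable once restricted to $N$. More precisely, I would show that $\po e_i\circ\po f$ is essentially unary. Suppose instead that it depended on two variables. Restricting to suitable values and composing with $\po e_U$ would then produce a polynomial of $\m A|_N$ that is not essentially unary, contradicting the type $\tn 1$ description of the trace. Granting this, $\po e_i\circ\po f(\bar x)=\po g(x_j)$ for some unary polynomial $\po g$ whose image lies in $N$, so $\po g=\po e_s$ for some $s$. The required $\po p_i$ is then simply a projection.

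The main obstacle I expect is this type $\tn 1$ step. The difficulty is passing from "every polynomial of $\m N$ is essentially unary" to "every polynomial of $\m A$ composed into $N$ is essentially unary". For this I would lean on the twin-group and permutational-polynomial analysis in \cite[Lemma~13.1]{hm}, using simplicity to move separating pairs into $N$.
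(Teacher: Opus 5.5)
Your overall frame is sound. For a simple algebra $U=N$, the map $h$ is injective by \cite[Theorem~2.8(4)]{hm}, and it suffices to express each $\po e_i\circ\po f$ as a polynomial of $\m N$ in the coordinates $\po e_j(x_l)$. The paper itself gives no argument here and only cites \cite[Lemma~13.1]{hm}. However, neither of your case-specific steps holds up as written.

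\textbf{Type $\tn 2$.} The assertion that $\m A$ has a Malcev term commuting with all polynomials is false. The argument of \cref{lm-decomposition} relied on the difference term of a congruence modular variety, and there is no such hypothesis here. For example, take $A=\{00,10,01\}\subseteq\mathbb{Z}_2^2$ with $u(x)=(x_1,0)$, $s(x)=(x_2,x_1)$ and $m(x,y)=(x_1+y_1,0)$. This algebra is simple, abelian and of type $\tn 2$, since $m$ is addition on the minimal set $\{00,10\}$. Yet every polynomial is the restriction of an affine map of $\mathbb{Z}_2^2$. The Malcev identities on $A$, which contains $00$ and spans $\mathbb{Z}_2^2$, force that map to be $x+y+z$, and this sends $(10,00,01)$ to $11\notin A$.

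What you need follows instead from the term condition $C(1,1;0)$; $\m A$ is abelian because type $\tn 2$ is an abelian type. Let $\po D$ be a polynomial of $\m A$ acting as $x-y+z$ on $N$, put $\po g=\po e_i\circ\po f$, and fix $a,\bar b$. Define $\po q(x,\bar y)=\po D(\po D(\po g(x,\bar y),\po g(x,\bar b),\po g(a,\bar b)),\po g(a,\bar y),0_N)$. Since $\po q(a,\bar y)=\po q(a,\bar b)=\po q(x,\bar b)=0_N$ for all $x,\bar y$, the term condition gives $\po q(x,\bar y)=\po q(x,\bar b)=0_N$. That is, $\po g(x,\bar y)=\po g(x,\bar b)+\po g(a,\bar y)-\po g(a,\bar b)$ in $N$. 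Iterating over the variables produces $\po p_i$.

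\textbf{Type $\tn 1$.} Here you do not yet have an argument. Restricting to $N$ means substituting $x_1=\po r_1(y_1)$ and $x_2=\po r_2(y_2)$ with polynomials $\po r_j$ defined on $N$. But dependence on $x_1$ and dependence on $x_2$ are witnessed at different values of the remaining arguments, and nothing ensures that images of $N$ capture both witnesses at once. Composing with $\po e_U$ does nothing, since the range already lies in $N$. Falling back on \cite[Lemma~13.1]{hm} is falling back on the statement itself.

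The missing ingredient is that a finite simple algebra of type $\tn 1$ is strongly abelian; this is the TCT characterisation of type $\tn 1$ by strong solvability in \cite{hm}. With it the step is short. Suppose $\po g\colon A^n\to N$ depends on $x_1$, and pick $\bar a'$ with $\po g(\cdot,\bar a')$ nonconstant. By minimality of $U=N$, its image is all of $N$. So for any $x,\bar x'$ there is $b$ with $\po g(x,\bar x')=\po g(b,\bar a')$. The strong term condition then yields $\po g(e,\bar x')=\po g(e,\bar a')$ for every $e$. Taking $e=x$ shows that $\po g$ depends only on $x_1$, and your projection argument finishes the case.
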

What follows is the following
\begin{observation}
    Let $\m A$ be a finite simple algebra of type $1$ (unary type). Then there is a constant $k_{\m A}$ such that for every $f \in \pol{A}$ the essential arity of $f$ is bounded by $k_{\m A}$.
\end{observation}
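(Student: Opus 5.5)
The plan is to derive the Observation directly from \cref{lem:un_aff}. We fix a trace $\m N$ of the type $\tn 1$ simple algebra $\m A$ and choose $k$ so that $\m A$ is isomorphic to a subreduct of $\m N^{[k]}$. Every polynomial operation of $\m A$ is then the restriction to $A$ of a polynomial of $\m N^{[k]}$. By the definition of the matrix power, such an $n$-ary operation has the form
\[
\po p(x_1,\ldots,x_n)=(\po p_1(x_1^1,\ldots,x_n^k),\ldots,\po p_k(x_1^1,\ldots,x_n^k)),
\]
where each $\po p_i$ is an $nk$-ary polynomial of $\m N$.

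The central step is to use the type $\tn 1$ assumption. The induced algebra on the trace $\m N$ is, up to polynomial equivalence, a set with a group action, so it is essentially unary: every polynomial of $\m N$ depends on at most one variable. I will justify this with the standard TCT description (\cite{hm}, Chapter 4 / Lemma 13.1), since $\m N$ is polynomially equivalent to a $G$-set. Consequently each coordinate $\po p_i$ depends on a single input coordinate $x_{j_i}^{s_i}$.

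From this, $\po p$ depends on at most the variables $x_{j_1},\ldots,x_{j_k}$, since each output coordinate reads only one component of one argument. Its essential arity is therefore at most $k$. Restricting $\po p$ to $A^n\subseteq (N^k)^n$ cannot increase the set of essential variables, so every $\po f\in\pol{A}$ has essential arity at most $k_{\m A}:=k$, and $k$ depends only on $\m A$.

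The step I expect to need the most care is checking that polynomials of the trace in the induced-algebra sense really are essentially unary as operations on $N$, rather than only modulo some congruence. For a simple algebra there is no congruence to worry about: the only prime quotient is $0\prec 1$, the trace is taken modulo $0_{\m A}$, and minimal sets have no tails. This makes the type $\tn 1$ description exact, which is what the argument above relies on.
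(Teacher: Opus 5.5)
Your proposal is correct and follows essentially the same route as the paper: embed $\m A$ into $\m N^{[k]}$ via \cref{lem:un_aff}, use that every polynomial of the type $\tn 1$ trace depends on at most one variable, and conclude that the essential arity is at most $k$. Your extra remarks make explicit two points the paper leaves implicit: that restricting to the subuniverse $A$ cannot create new essential variables, and that for a simple algebra the trace is taken modulo $0_{\m A}$, so the type $\tn 1$ description holds exactly.
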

\begin{proof}
Without loss of generality we assume that $\m A$ is a reduct of $\m N^{[k]}$.  We will prove that $k_{\m A} := k$ does the job. 

From the definition of matrix power we have that 
\[
f(x_1, \ldots, x_n) = (f_1(\ov{x}), \ldots, f_n(\ov{x}))
\]
where each $f_i$ is a polynomial of $\m N$ over variables $(x_1)_1, \ldots, (x_n)_k$. But every polynomial $f_i$ (as a polynomial of $\m N$) depends only on one variable. Thus $f$ depends on at most $k$ variables, each one corresponding to the variable on which $f_i$ depends.
\end{proof}
\begin{corollary}\label{coll:simun}
Let $\m A$ be a finite simple algebra of type 1 (unary type). Let $k_{\m A}$  be a universal bound on the essential arity of polynomials of $\m A$. Them $k_{\m A}$ bounds the essential arity of programs over $\m A$.
\end{corollary}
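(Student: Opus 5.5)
The plan is to reduce the corollary directly to the preceding observation. The program $\progb{\po p}{\iota}{S}$ is only a composition of $\po p$ with the unary map $\iota$ on each input and a characteristic function on the output, so its essential arity cannot exceed that of $\po p$.

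Fix an $n$-ary program $\progb{\po p}{\iota}{S}$ over $\m A$, where $\po p$ is a circuit over $\m A$. The circuit $\po p$ computes a polynomial operation $f \in \pol{A}$. By the observation just proved, $f$ depends essentially on at most $k_{\m A}$ of its variables, say those with indices in $I \subseteq \{1,\ldots,n\}$, $|I| \le k_{\m A}$. Concretely, whenever $\ov a, \ov a' \in A^n$ agree on the coordinates in $I$, we have $f(\ov a) = f(\ov a')$.

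Now take Boolean inputs $\ov b, \ov b' \in \{0,1\}^n$ that agree on $I$. Then the tuples $(\iota(b_1),\ldots,\iota(b_n))$ and $(\iota(b'_1),\ldots,\iota(b'_n))$ agree on $I$, so $\po p[\iota](\ov b) = \po p[\iota](\ov b')$. Consequently membership in $S$ coincides, i.e. $\po p[\iota,S](\ov b) = \po p[\iota,S](\ov b')$. Hence the Boolean function computed by the program depends only on the variables in $I$, and its essential arity is at most $k_{\m A}$.

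There is no real obstacle here. The only point to note is that the bound $k_{\m A}$ is uniform: it does not depend on $n$ or on the size of $\po p$, because it comes from the matrix-power representation in \cref{lem:un_aff}. This uniformity is what makes the bound apply to every program in a NuDFA and yields the $\BOUNDA{c}$ characterization for type $\tn 1$ in \cref{thm-simple}.
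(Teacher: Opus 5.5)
Your proof is correct and takes the same route as the paper. The paper likewise notes that if the polynomial $\po p$ depends only on $x_{i_1},\ldots,x_{i_k}$, then the program $\progb{\po p}{\iota}{S}$ depends only on $b_{i_1},\ldots,b_{i_k}$; you simply spell out the agreement-on-$I$ argument that the paper leaves implicit.
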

\begin{proof}
    If polynomial $\po p(\overline{x})$ of $\m A$ depends only on variables $x_{i_1}, \ldots, x_{i_k}$ then any program $\progb{p}{\iota}{S}(\overline{b})$ depends only on variables $b_{i_1}, \ldots, b_{i_k}$.
\end{proof}

The first point of \cref{thm-simple} is a straightforward consequence of \cref{coll:simun}.

The affine type is similar to the unary type in the sense that it also provides a very narrow expressive power when it comes to polynomials/programs. It follows from \cref{coll:simun} that every simple abelian algebra is a subreduct of of matrix power of a trace of type $\tn 2$ and hence is subreduct of module. Then, by \cref{coll:simaff} we have that NUDFA's over simple algebra $\m A$ of  affine type recognize languages contained in $\palg{\m A}=\BOUNDA{c} \circ \MODG{p}$ for come constant $c$ and prime $p$.

NuDFA's over type $3$ simple algebras can recognize exactly languages from $\ppoly$ as we can choose $\iota$ which maps $\set{0,1}$ onto minimal set of type $\tn 3$ which is polynomially equivalent with Boolean algebra. Characterization for simple algebras of type $4$ is an easy consequence of \cref{lm-subdirect-type-4}.

\section{Final remarks and open problems}
We explored complexity classes $\palg{\m A}$ in large, but still restricted class of structures. It is not clear how to characterize computational classes induced by simple algebras of type $5$. It is clearly visible that just fundamental tools of Tame Congruence Theory are not enough for the analysis. Understanding this case better could open the door to study general finite algebraic structures in more depth, possibly allowing for some strong representation theorems. 

It is also quite easy to see that using type $5$ allows us to capture more of the standard classes such as $AC^0, AC[p], ACC^0$ studied in the computational complexity theory. Indeed one can use semilattices as layers, similarly to the proof of \cref{solv-construc} or \cref{example-nilpo}, and possibly mix them with some addition modulo layers (depending whether we want to achieve $AC^0$, $AC^0[p]$ or $ACC$). Then projections from the upper levels to the lower levels would allow for negation, which enables us to define $AC^0, AC[p], ACC^0$ as limits of such constructions. We can see that $\palg{C}$ for some easy to define classes/varieties of algebras $C$ can capture truly a lot of fundamental computational classes.

It is quite interesting to study whether some other natural classes like $TC^0$ can be captured by some natural class of algebras. One can not only  use the mysterious behaviours of type $5$ algebras, but also try to define some solvable subvarieties with the desired properties. 

\begin{open}
    Does there exist a class $C$ of finite algebras such that $\palg{C} = TC^0$?
\end{open}

In view of \cref{thm-dichotomy} the following open problem seems to be important, however probably out of reach for the current techniques.
\begin{open}\label{op-mm*nc2}
Is $\circuittwo{\mathsf{NC}^2}{\MON} \subsetneq \mathsf{P}/poly$?
\end{open}

%%
%% Print the bibliography
%%
\printbibliography

@inproceedings{dvir2012separating,
  title={Separating multilinear branching programs and formulas},
  author={Dvir, Zeev and Malod, Guillaume and Perifel, Sylvain and Yehudayoff, Amir},
  booktitle={Proceedings of the forty-fourth annual ACM symposium on Theory of computing},
  pages={615--624},
  year={2012}
}

@InProceedings{IdziakKK25,
  author =	{Idziak, Pawe{\l} M. and Kawa{\l}ek, Piotr and Krzaczkowski, Jacek},
  title =	{{Nonuniform Deterministic Finite Automata over Finite Algebraic Structures}},
  booktitle =	{52nd International Colloquium on Automata, Languages, and Programming (ICALP 2025)},
  pages =	{161:1--161:14},
  series =	{Leibniz International Proceedings in Informatics (LIPIcs)},
  ISBN =	{978-3-95977-372-0},
  ISSN =	{1868-8969},
  year =	{2025},
  volume =	{334},
  publisher =	{Schloss Dagstuhl -- Leibniz-Zentrum f{\"u}r Informatik},
  address =	{Dagstuhl, Germany},
  URN =		{urn:nbn:de:0030-drops-235386},
  doi =		{10.4230/LIPIcs.ICALP.2025.161}
}

@inproceedings{Barrington86,
  author    = {David A. Mix Barrington},
  title     = {Bounded-Width Polynomial-Size Branching Programs Recognize Exactly
               Those Languages in {NC}{\({^1}\)}},
  booktitle = {Proceedings of STOC'86},
  pages     = {1--5},
  doi       = {10.1145/12130.12131},
  bibsource = {dblp computer science bibliography, https://dblp.org},
  year      = {1986}
}

@InProceedings{IdziakKKW22-icalp,
  author =	{Idziak, Pawe{\l} M. and Kawa{\l}ek, Piotr and Krzaczkowski, Jacek and Wei{\ss}, Armin},
  title =	{{Satisfiability Problems for Finite Groups}},
  booktitle =	{49th International Colloquium on Automata, Languages, and Programming (ICALP 2022)},
  pages =	{127:1--127:20},
  series =	{Leibniz International Proceedings in Informatics (LIPIcs)},
  ISBN =	{978-3-95977-235-8},
  ISSN =	{1868-8969},
  year =	{2022},
  volume =	{229},
  publisher =	{Schloss Dagstuhl -- Leibniz-Zentrum f{\"u}r Informatik},
  address =	{Dagstuhl, Germany},
  doi =		{10.4230/LIPIcs.ICALP.2022.127},
}

@article{BarringtonST90,
	author = {David A. Mix Barrington and Howard Straubing and Denis Th{\'e}rien},
	bibsource = {dblp computer science bibliography, https://dblp.org},
	doi = {10.1016/0890-5401(90)90007-5},
	journal = {Inf. Comput.},
	number = {2},
	pages = {109--132},
	title = {Non-Uniform Automata Over Groups},
	url = {https://doi.org/10.1016/0890-5401(90)90007-5},
	volume = {89},
	year = {1990}
}

@article{GoldmannR02,
	author = {Mikael Goldmann and Alexander Russell},
	bibsource = {dblp computer science bibliography, https://dblp.org},
	doi = {10.1006/inco.2002.3173},
	journal = {Inf. Comput.},
	number = {1},
	pages = {253--262},
	title = {The Complexity of Solving Equations over Finite Groups},
	url = {https://doi.org/10.1006/inco.2002.3173},
	volume = {178},
	year = {2002}
}

@online{BurrisS12,
  author    = {Burris, Stanley N. and Sankappanavar, H. P.},
  title     = {A Course in Universal Algebra},
  year      = {2012},
  url       = {https://www.math.uwaterloo.ca/~snburris/htdocs/ualg/univ-algebra2012.pdf},
  note      = {Millennium Edition, freely available PDF},
  urldate   = {2026-01-17}
}

@article{kompatscher2022cc,
  title={{CC}-circuits and the expressive power of nilpotent algebras},
  author={Kompatscher, Michael},
  journal={Logical Methods in Computer Science},
  volume={18},
  year={2022},
  publisher={Episciences. org}
}

@article{IdziakK22,
  title={Satisfiability in multivalued circuits},
  author={Idziak, Pawe{\l} M and Krzaczkowski, Jacek},
  journal={SIAM Journal on Computing},
  volume={51},
  number={3},
  pages={337--378},
  year={2022},
  publisher={SIAM}
}

@book{hm,
  title={Structure of Finite Algebras},
  author={David Hobby and Ralph McKenzie},
  year={1988},
  publisher={Contemporary Mathematics vol. 76. American Mathematical Society},
  doi={10.1090/conm/076}
}

@book{fm,
  title={Commutator Theory for Congruence Modular Varieties},
  author={Freese, Ralph and McKenzie, Ralph},
  year={1987},
  publisher={London Mathematical Society Lecture Notes, No. 125. Cambridge University Press}
}

@inproceedings{valiant1979completeness,
  title={Completeness classes in algebra},
  author={Valiant, Leslie G},
  booktitle={Proceedings of the eleventh annual ACM symposium on Theory of computing},
  pages={249--261},
  year={1979}
}

@article{malod2008characterizing,
  title={Characterizing Valiant's algebraic complexity classes},
  author={Malod, Guillaume and Portier, Natacha},
  journal={Journal of complexity},
  volume={24},
  number={1},
  pages={16--38},
  year={2008},
  publisher={Elsevier}
}

@article{berkowitz1984computing,
  title={On computing the determinant in small parallel time using a small number of processors},
  author={Berkowitz, Stuart J},
  journal={Information processing letters},
  volume={18},
  number={3},
  pages={147--150},
  year={1984},
  publisher={Elsevier}
}

@article{samuelson1942method,
  title={A method of determining explicitly the coefficients of the characteristic equation},
  author={Samuelson, Paul A},
  journal={The Annals of Mathematical Statistics},
  volume={13},
  number={4},
  pages={424--429},
  year={1942},
  publisher={JSTOR}
}

@article{mahajan1997determinant,
  title={Determinant: Combinatorics, algorithms, and complexity},
  author={Mahajan, Meena and others},
  journal={Chicago Journal of Theoretical Computer Science},
  volume={1997},
  number={5},
  year={1997},
  publisher={AI2}
}

@article{lund1992algebraic,
  title={Algebraic methods for interactive proof systems},
  author={Lund, Carsten and Fortnow, Lance and Karloff, Howard and Nisan, Noam},
  journal={Journal of the ACM (JACM)},
  volume={39},
  number={4},
  pages={859--868},
  year={1992},
  publisher={ACM New York, NY, USA}
}

@article{shamir1992ip,
  title={{IP}= {PSPACE}},
  author={Shamir, Adi},
  journal={Journal of the ACM (JACM)},
  volume={39},
  number={4},
  pages={869--877},
  year={1992},
  publisher={ACM New York, NY, USA}
}

@inproceedings{sankaranarayanan2004non,
  title={Non-linear loop invariant generation using Gr{\"o}bner bases},
  author={Sankaranarayanan, Sriram and Sipma, Henny B and Manna, Zohar},
  booktitle={Proceedings of the 31st ACM SIGPLAN-SIGACT symposium on Principles of programming languages},
  pages={318--329},
  year={2004}
}

@inproceedings{rodriguez2004automatic,
  title={Automatic generation of polynomial loop invariants: Algebraic foundations},
  author={Rodr{\'\i}guez-Carbonell, Enric and Kapur, Deepak},
  booktitle={Proceedings of the 2004 international symposium on Symbolic and algebraic computation},
  pages={266--273},
  year={2004}
}

@article{lecun2015deep,
  title={Deep learning},
  author={LeCun, Yann and Bengio, Yoshua and Hinton, Geoffrey},
  journal={nature},
  volume={521},
  number={7553},
  pages={436--444},
  year={2015},
  publisher={Nature Publishing Group UK London}
}

@article{vaswani2017attention,
  title={Attention is all you need},
  author={Vaswani, Ashish and Shazeer, Noam and Parmar, Niki and Uszkoreit, Jakob and Jones, Llion and Gomez, Aidan N and Kaiser, {\L}ukasz and Polosukhin, Illia},
  journal={Advances in neural information processing systems},
  volume={30},
  year={2017}
}

@inproceedings{bulatov2017dichotomy,
  title={A dichotomy theorem for nonuniform {CSP}s},
  author={Bulatov, Andrei A},
  booktitle={2017 IEEE 58th Annual Symposium on Foundations of Computer Science (FOCS)},
  pages={319--330},
  year={2017},
  organization={IEEE}
}

@article{zhuk2020proof,
  title={A proof of the CSP dichotomy conjecture},
  author={Zhuk, Dmitriy},
  journal={Journal of the ACM (JACM)},
  volume={67},
  number={5},
  pages={1--78},
  year={2020},
  publisher={ACM New York, NY, USA}
}

@article{barrington1990non,
  title={Non-uniform automata over groups},
  author={Barrington, David A Mix and Straubing, Howard and Th{\'e}rien, Denis},
  journal={Information and Computation},
  volume={89},
  number={2},
  pages={109--132},
  year={1990},
  publisher={Elsevier}
}

@article{barrington1988finite,
  title={Finite monoids and the fine structure of {NC}},
  author={Barrington, David A Mix and Therien, Denis},
  journal={Journal of the ACM (JACM)},
  volume={35},
  number={4},
  pages={941--952},
  year={1988},
  publisher={ACM New York, NY, USA}
}

@inproceedings{barrington2000equation,
  title={Equation satisfiability and program satisfiability for finite monoids},
  author={Barrington, David Mix and McKenzie, Pierre and Moore, Cris and Tesson, Pascal and Th{\'e}rien, Denis},
  booktitle={International Symposium on Mathematical Foundations of Computer Science},
  pages={172--181},
  year={2000},
  organization={Springer}
}

@inproceedings{razborov1994natural,
  title={Natural proofs},
  author={Razborov, Alexander A and Rudich, Steven},
  booktitle={Proceedings of the twenty-sixth annual ACM symposium on Theory of computing},
  pages={204--213},
  year={1994}
}

@article{pudlak1997lower,
  title={Lower bounds for resolution and cutting plane proofs and monotone computations},
  author={Pudl{\'a}k, Pavel},
  journal={The Journal of Symbolic Logic},
  volume={62},
  number={3},
  pages={981--998},
  year={1997},
  publisher={Cambridge University Press}
}

@article{krajivcek1994lower,
  title={Lower bounds to the size of constant-depth propositional proofs},
  author={Kraj{\'\i}{\v{c}}ek, Jan},
  journal={The Journal of Symbolic Logic},
  volume={59},
  number={1},
  pages={73--86},
  year={1994},
  publisher={Cambridge University Press}
}

@article{razborov1995unprovability,
  title={Unprovability of lower bounds on circuit size in certain fragments of bounded arithmetic},
  author={Razborov, Alexander A},
  journal={Izvestiya: mathematics},
  volume={59},
  number={1},
  pages={205},
  year={1995},
  publisher={IOP Publishing}
}

@inproceedings{bonet1995lower,
  title={Lower bounds for cutting planes proofs with small coefficients},
  author={Bonet, Maria and Pitassi, Toniann and Raz, Ran},
  booktitle={Proceedings of the twenty-seventh annual ACM symposium on Theory of computing},
  pages={575--584},
  year={1995}
}

@article{Razborov1985,
  author  = {A. A. Razborov},
  title   = {Lower bounds on the monotone complexity of some boolean functions},
  journal = {Dokl. Akad. Nauk SSSR},
  volume  = {281},
  number  = {4},
  pages   = {798--801},
  year    = {1985},
  note    = {Translation in Soviet Math. Dokl. 31:354--357}
}

@article{alon1987monotone,
  title={The monotone circuit complexity of Boolean functions},
  author={Alon, Noga and Boppana, Ravi B},
  journal={Combinatorica},
  volume={7},
  number={1},
  pages={1--22},
  year={1987},
  publisher={Springer}
}

@book{Post1941,
  author    = {Post, E. L.},
  title     = {The two-valued iterative systems of mathematical logic},
  series    = {Annals of Mathematics studies},
  number    = {5},
  publisher = {Princeton University Press},
  address   = {Princeton},
  year      = {1941},
  note      = {122 pp.}
}

@article{tardos1988gap,
  title={The gap between monotone and non-monotone circuit complexity is exponential},
  author={Tardos, {\'E}va},
  journal={Combinatorica},
  volume={8},
  number={1},
  pages={141--142},
  year={1988},
  publisher={Springer}
}

@article{cavalar2025monotone,
  title={Monotone circuit complexity of matching},
  author={Cavalar, Bruno and G{\"o}{\"o}s, Mika and Riazanov, Artur and Sofronova, Anastasia and Sokolov, Dmitry},
  journal={arXiv preprint arXiv:2507.16105},
  year={2025}
}

@article{amano2005superpolynomial,
  title={A superpolynomial lower bound for a circuit computing the clique function with at most (1/6) log log n negation gates},
  author={Amano, Kazuyuki and Maruoka, Akira},
  journal={SIAM Journal on Computing},
  volume={35},
  number={1},
  pages={201--216},
  year={2005},
  publisher={SIAM}
}

@phdthesis{valeriote1986decidable,
  author       = {Matthew A. Valeriote},
  title        = {On Decidable Locally Finite Varieties},
  school       = {University of California, Berkeley},
  year         = {1986},
  type         = {Ph.D. thesis},
  address      = {Berkeley, California, USA},
  note         = {Advisor information not confirmed},
}

@incollection{AgostonDH86,
title = {ON THE NUMBER OF CLONES CONTAINING ALL CONSTANTS (A PROBLEM OF {R}. {M}C{K}ENZIE)},
booktitle = {Lectures in Universal Algebra},
publisher = {North-Holland},
address = {Amsterdam},
pages = {21-25},
year = {1986},
series = {Colloquia Mathematica Societatis Janos Bolyai},
isbn = {978-0-444-87759-8},
doi = {https://doi.org/10.1016/B978-0-444-87759-8.50006-7},
author = {Ágoston, I. and Demetrovics, J. and Hannák, L.}
,
}

%%
%% If your work has an appendix, this is the place to put it.
%\appendix

\newpage
\appendix

\section{Proof of \cref{struct-alp-cong}}

\begin{proof}
We will prove the lemma in two steps. First we will show that $\m A$ is a subreduct of $\m A|_N^{[k_1]}\boxtimes \m A/\alpha$, where $N$ is a trace of the minimal set $U$. Then we will show that $N$ is a subreduct of matrix power $\mathbb{Z}_p^{[k_2]}$.

Let $N_1,\ldots,N_l$ be the traces of $U$ and  $a_i\in N_i$ for each $i$. By \cite[Lemma~4.20]{hm} there exists a pseudo-Malcev term $\po d$ that behaves like a Malcev term on $U$ and such that $\po d(x,a,b)$ is a permutation of $U$ for all $a,b\in U$. Hence the polynomial $\po f_i(x)=\po d(x,a_i,a_1)$ is a permutation of $U$ mapping $N_i$ onto $N_1$, and $\po f_i^{-1}$ is also a polynomial of $\m A|_U$.

Let $\po f_1,\ldots,\po f_{k_1}$ be all unary polynomials of $\m A$ with range contained in $U$. By \cite[Theorem~2.8(4)]{hm}, for $(a,b)\in\alpha$ we have
\[
a=b \iff (\po f_1(a),\ldots,\po f_{k_1}(a))=(\po f_1(b),\ldots,\po f_{k_1}(b)).
\]
Fix $a\in A$ and define $c_a:\{1,\ldots,k_1\}\to\{1,\ldots,l\}$ such that  $c_a(i)=j$ whenever $e_i(a)\in N_j$. Then the mapping
\[
b\mapsto (\po f_{c_a(1)}(e_1(b)),\ldots,\po f_{c_a(k_1)}(e_{k_1}(b)))
\]
is an injection from $a/\alpha$ into $N_1^{k_1}$. Repeating the arguments from the proof of \cref{lm-decomposition} for the abelian case, we obtain
\[
\m A \cong \m A|_{N_1}^{[k_1]}\boxtimes \m A/\alpha.
\]

Finally, $\m A|_{N_1}$ is polynomially equivalent to a one-dimensional vector space over $GF(p^m)$. Hence every polynomial function is of the form $\sum_i s_i(x_i)+c$, where each $s_i$ is an endomorphism of $(\mathbb{Z}_p,+)^m$. Assume without loss of generality that $(N_1,+)$ is just $(\mathbb{Z}_p,+)^m$ and $\pi_i$ is the projection on the $i$-th coordinate. Then, $\pi_i(s_j(x_j))$, for every $i$, is a linear function over $\mathbb{Z}_p$ in variables $\pi_1(x_j), \ldots, \pi_1(x_j)$.Hence, $\m A|_{N_1}$ is a reduct of $\mathbb{Z}_p^{[m]}$.
\end{proof}

\section{Proof of \cref{coll:simaff}}

We need the following
\begin{observation}
Let $\m A$ be an affine simple algebra, pollynomialy equivalent to a module $\m M$ over an abelian group $(\mathbb{Z}_p)^k$ for $p$ being a prime. Then for any $c \in M$ and any polynomial $\po p(\overline{x}) \in \pol{A}$ the equation $\po p(x_1, \ldots, x_n) = c$ is equivalent to system of $k$ equation
$L_i\Big((x_{1})_1, \ldots, (x_{n})_k\Big) = c_i$ for $1 \leq i \leq n$, where each $L_i$ is a linear function in $\mathbb{Z}_p$ and $c_i \in \mathbb{Z}_p$.
\end{observation}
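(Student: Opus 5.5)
The observation concerns an affine simple algebra $\m A$ polynomially equivalent to a module $\m M$ whose additive group is $(\mathbb{Z}_p)^k$. Every polynomial of $\m A$ is then a module polynomial, so the plan is to write $\po p$ in module normal form, check that each scalar acts $\mathbb{Z}_p$-linearly on coordinates, and read the equation $\po p=c$ off coordinate by coordinate.

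First, normal form. Polynomial equivalence gives $\pol{A}=\pol{M}$. A routine induction over the circuit defining $\po p$ (base cases: variables and constants; inductive steps: addition, negation, and application of a ring scalar) shows that every $n$-ary polynomial of $\m M$ can be written as
\[
\po p(x_1,\ldots,x_n)=\sum_{j=1}^n r_j x_j + a,
\]
where each $r_j$ is an element of the ring $R$ acting on $M$ and $a\in M$. Here each $r_j$ acts as an endomorphism of the abelian group $(M,+)\cong(\mathbb{Z}_p)^k$.

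Second, linearity in coordinates. Every group endomorphism of $(\mathbb{Z}_p)^k$ is $\mathbb{Z}_p$-linear, because integer multiples commute with additive maps. So each $r_j$ is given by a $k\times k$ matrix $R_j$ over $\mathbb{Z}_p$. Writing $(x_j)_t$ for the $t$-th coordinate of $x_j$, the $i$-th coordinate of $\po p(\overline x)$ is
\[
\sum_{j=1}^n\sum_{t=1}^k (R_j)_{i,t}\,(x_j)_t + a_i .
\]

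Third, splitting the equation. Two elements of $(\mathbb{Z}_p)^k$ are equal exactly when all $k$ coordinates agree. Hence $\po p(\overline x)=c$ holds iff for every $i\in\{1,\ldots,k\}$ we have $L_i(\overline x)=c_i$, where
\[
L_i(\overline x)=\sum_{j,t}(R_j)_{i,t}(x_j)_t \quad\text{and}\quad c_i=c_i'-a_i ,
\]
with $c_i'$ the $i$-th coordinate of $c$. This gives the required $k$ equations, indexed $1\le i\le k$; the range ``$1\le i\le n$'' in the statement should read $1\le i\le k$.

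The only step needing any care is the first one: justifying that polynomials of $\m A$ are affine combinations with ring coefficients. This follows directly from polynomial equivalence with a module, so I expect no real obstacle anywhere in the argument.
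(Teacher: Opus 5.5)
Your proposal is correct and follows essentially the same route as the paper: write $\po p$ in module normal form $\sum_j \alpha_j x_j + d$ with each $\alpha_j$ an endomorphism of $(\mathbb{Z}_p)^k$, use that such endomorphisms (and coordinate projections) are $\mathbb{Z}_p$-linear, and split the equation coordinatewise. You are also right that the index range in the statement should be $1 \leq i \leq k$ rather than $1 \leq i \leq n$.
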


Indeed, Every polynomial $\po p(x_1, \ldots, x_n)$ of a module $\m M$ is of the form $\alpha_1 x_1 + \ldots + \alpha_n x_n + d$ where each $\alpha_i$ is an endomorphisms of an abelian group $(\mathbb{Z}_p)^k$ and $d \in (Z_p)^k$. So after projecting $\alpha_i x_i$ to i'th coordinate ($1 \leq i \leq k$) we get a linear function in variables $(x_i)_1, \ldots, (x_i)_k$. Hence $\alpha_1 x_1 + \ldots + \alpha_n x_n + d = c$ after projecting to $i$'th coordinate can be rewritten to a form $L_i\Big((x_{1})_1, \ldots, (x_{n})_k\Big) = c_i$, as projection itself is a linear map.

\begin{proof}[Proof of \cref{coll:simaff}]
    The condition $\progb{\po p}{\iota}{S}(\overline{b}) = 1$ is equivalent to the bounded arity alternative of conditions $\prog{\po p}{\iota}{(\overline{b})} = s$ over $s\in S$. Each such condition is equivalent to conjunction of conditions 
    \[
    L_i\Big((\iota_1(b_1)_1, \ldots, (\iota_n(b_n)_k\Big) = c^{(s)}_i
    \]
    for some constant $c^{(s)}_i \in \mathbb{Z}_p$.  As $\iota$ is the function from $\{0,1\} \mapsto \mathbb{Z}_p$ it can be replaced (interpolated) with an affine formula over $\mathbb{Z}_p$.  So $L_i\Big((\iota_1(b_1)_1, \ldots, (\iota_n(b_n)_k\Big) = c^{(s)}_i$ is equivalent to 
    \[
    L_i'\Big((b_1)_1, \ldots, ((b_n)_k\Big) = d^{(s)}_i
    \]  for some other linear expression $L_i'$ and $d^{(s)}_i \in \mathbb{Z}_p$. Such a linear condition can be asserted with a single $\MODG{p}^{d^{(s)}_i}$ gate. Hence function computed by a program $\progb{\po p}{\iota}{S}$ can be computed by $\ORG{|S|} \circ \ANDG{k}\circ{\MODG{p}}$ circuit with  $O(l)$ wires.
\end{proof}

\section{Proof of \cref{nudet-exact}}
\begin{proof}
Let $\po a$ be an $n$-variate polynomial over $GF(p)$ with ABP of size $l$. We will create an $n$-ary polynomial $\po f_a$ of $\m A$ which will have the property that $\po f((x_1,y_1), \ldots, (x_n, y_n)) = (\po a(y_1, \ldots, y_n), 0)$ and the size of $\po f$ will be $O(\text{poly}(l))$. We do it by induction on the length of the process defining $\po a$. If $\po a$ is a constant 1, then a constant polynomial returning $(1,0)$ does the job. Now when we already have two ABPs $\po a,\po b$ which correspond to polynomials $\po f_a, \po f_b$ we create a polynomial corresponding to $\po a+\po b$ simply by adding them  with abelian addition $\po f_a+\po f_b$. When we have some ABP $\po b$ represented by $\po f_b$ and want to represent $y_i \cdot \po b$, we can achieve it by using a formula $\po v(\po f_b((x_1, y_1), \ldots, (x_n,y_n)), (x_i, y_i))$. We can use $\po v$ in the same way to represent multiplication by a constant. It follows that any ABP a of size $l$ can be simulated in this way by a polynomial/circuit $\po f_a$ of $\m A$ of size $O(l)$. In effect, $\nudet{p} \subseteq \palg{A}$, as for any polynomial $a$ with a short $ABP$ with accepting set $S$  we can instead use polynomial $\po f_{a}$ over $\m A$ with accepting set $S \times 0$ and $\iota(b) = (0,b)$ which accepts the same words.

For the inclusion $\palg{A} \subseteq \nudet{p}$, assume we have an $n$-ary circuit $\po f$ over $\m A$ of length $O(n^c)$. We will prove that $\po f$ is of the form $\po f((x_1, y_1), \ldots, (x_n, y_n)) = (a_{\po f}(x_1,x_2, \ldots, x_n, y_1, \ldots, y_n),  A_{\po f}(y_1, \ldots, y_n))$, where $a_f$ is $2n$-ary polynomial with $ABP_p$ of size $O(n^{c+1})$ and $A$ is an affine combination of variables $y_1, \ldots, y_n$.  Indeed, we proceed by induction on the structure of $\po f$. If $\po f$ is variable or a constant the statement is trivially satisfied. If $\po f$ is created by composing basic operations $+, \pi_1, \pi_2$ or $u$. with some shorter polynomials, the statement is also straightforward. So assume $\po f = \po v(\po f_1, \po f_2)$. By induction we have $ABP$s $a_{\po f_1}, a_{\po f_2}$ and affine functions $A_{\po f_1}, A_{\po f_2}$. We can see that $0$ is an affine function so the only thing we need to check is that we can represent $a_{\po f_1} \cdot A_{\po f_2}$ with a short ABP. But since $A_{\po f_2}$ is affine we can write it in a form $\alpha_1 y_1 + \ldots \alpha_n y_n + C$. Hence $a_{\po f_1} \cdot A_{\po f_2} = C\cdot a_{\po f_1} + \sum_{i=1}^n \alpha_i \cdot y_i \cdot a_{\po f_1}$. We can clearly see that now to create $a_{\po f_1} \cdot A_{\po f_2}$ we first multiply $a_{\po f_1}$ by variables $y_i$ then by constants $\alpha_i$ and $C$ and then sum the result which takes $O(n)$ steps (this explains why we need $O(n^{c+1})$ instead of just $O(n^c)$ size for $a_{\po f}$). 

Now  take arbitrary instruction $\iota $, an accepting set $S\subseteq A$ and an $n$-ary polynomial $\po f$ over $\m A$ of size $O(n^c)$ and we will create a polynomial-size $ABP_p$ $b_{\po f}$ with the property that $\po f(\iota(b_1), \ldots, \iota(b_n)) \in S$ iff $b_{\po f}(\overline b) = 1$. 
First note that $\iota(b) = (\iota_1(b), \iota_2(b))$ and $\iota_1, \iota_2$ can be written in a form $\iota_i(b) = \alpha_i b + \beta_i$ for $\alpha_i, \beta_i \in GF(p)$. Indeed every function from $\{0,1\}$ to $GF(p)$ can be interpolated by an affine function. 
Note that the statement $\po f(\iota(b_1), \ldots, \iota(b_n)) \in S$ is the disjunction of statement $\po f(\iota(b_1), \ldots, \iota(b_n)) = s$ for $s \in S$, which itself is a conjunction of two statements $a_{\po f}(\iota_1(b_1), \ldots, \iota_1(b_n), \iota_2(b_1), \ldots, \iota_2(b_n)) = s_1$ and $A_{\po f}(\iota_2(b_1), \ldots, \iota_2(b_n)) = s_2$ for $s = (s_1, s_2)$. As a result there exist a binary function $\po d$ from $GF(p)$ to $GF(p)$ with the image $\{0,1\}$ such that 
\[
\po d(a_{\po f}(\iota_1(b_1), \ldots, \iota_1(b_n), \iota_2(b_1), \ldots, \iota_2(b_n)), A_{\po f}(\iota_2(b_1), \ldots, \iota_2(b_n))) = 1
\]
iff $\po f(\iota(b_1), \ldots, \iota(b_n)) \in S$. Since both $a_{\po f}$ and $A_{\po f}$ are two ABPs of polynomial size from \cref{affine-subst} (applied to affine substitutions corresponding to $\iota_1, \iota_2$) and \cref{compose_poly_const} (applied to eliminate $d$) there exists an ABP $b_{\po f}$ which computes the desired function and is of polynomial size. 

So now, when we have a NuDFA given by $\iota$, accepting set $S \subseteq A$ and sequence of circuits $\po f_1, \po f_2, \ldots$ recognizing some language $L$, we can clearly recognize the same language with the sequence of $ABP$s $b_{\po f_1}, b_{\po f_2}, \ldots$ and accepting set $\{1\}$.
\end{proof}

\section{Proof of \cref{lem-nilp}}\label{apend-lem29}

We will need the following folklore lemma.

\begin{lemma}\label{lem:anddmodp}
Let $f$ be a function computed by a $\circuit{\BOUNDG{d'}}{\MODG{p}}{\ANDG{d}}$-circuit of length $l$. Then $f$ can be also computed by $\circuittwo{\MODG{p}}{\ANDG{d\cdot d'\cdot p}}$ circuit of size $O(\text{poly}(l))$.
\end{lemma}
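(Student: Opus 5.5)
Each $\ANDG{d}$ gate is a degree-$d$ monomial over $GF(p)$ in its Boolean inputs, and each $\MODG{p}$ gate with accepting set $S$ applies the indicator of $S$ to a $GF(p)$-linear combination of its inputs. The plan is therefore to convert the whole circuit into a single polynomial over $GF(p)$ of bounded degree with polynomially many monomials. From such a polynomial a $\circuittwo{\MODG{p}}{\ANDG{D}}$ circuit is read off directly: every monomial becomes an $\ANDG{}$ gate, repeated according to its coefficient (at most $p-1$ times), and all of them are fed into a single $\MODG{p}$ gate with accepting set $\set{1}$.

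\textbf{Step 1: bottom two layers.} Let $M_1,\ldots,M_r$ be the $\MODG{p}$ gates of the circuit, $r\le l$. The gate $M_i$ with accepting set $S_i$ computes $\chi_{S_i}(L_i)$, where $L_i=\sum_j c_{ij} m_{ij}$ and the $m_{ij}$ are the $\ANDG{d}$ monomials feeding into $M_i$. Over $GF(p)$ the indicator of $S_i$ is a univariate polynomial of degree at most $p-1$; for example $\chi_{\set{s}}(y)=1-(y-s)^{p-1}$.

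\textbf{Step 2: expand.} Substituting $L_i$ gives $\chi_{S_i}(L_i)=\sum_{t\le p-1} a_t L_i^t$. Expanding $L_i^t$ multinomially produces products of at most $p-1$ monomials, each of degree at most $d$, so every resulting monomial has degree at most $d(p-1)$. The number of such products is at most $(\#\text{inputs of } M_i)^{p-1}\le l^{p-1}$. Finally, the top $\BOUNDG{d'}$ gate is a Boolean function of $d'$ bits taking values in $\set{0,1}$. Interpolated over $GF(p)$ on $\set{0,1}^{d'}$, it is a multilinear polynomial of degree at most $d'$ with at most $2^{d'}$ terms. Plugging in the polynomials from the previous stage gives a polynomial of degree at most $d\,d'(p-1)\le d\,d'\,p$ with at most $2^{d'}\,(l^{p-1})^{d'}=O(\text{poly}(l))$ monomials, since $d,d',p$ are constants.

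\textbf{Step 3: read off the circuit.} On Boolean inputs, multilinearize using $x^2=x$; this does not increase the degree or the number of monomials. Each monomial $c\prod_{j\in T}x_j$ becomes $c$ copies of an $\ANDG{|T|}$ gate, padded to fan-in $d\,d'\,p$ by repeating inputs if needed. The resulting Boolean value equals the polynomial's value in $\set{0,1}\subseteq GF(p)$, so a $\MODG{p}$ gate with accepting set $\set{1}$ computes $f$.

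The only delicate step is the size accounting. It must ensure that the exponents introduced by the indicator polynomials ($p-1$) and by the outer gate ($d'$) stay constant, so the blow-up remains polynomial in $l$. Everything else is routine interpolation.
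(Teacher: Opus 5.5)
Your proof is correct and follows the same route as the paper. Both arguments turn the circuit into a bounded-degree polynomial over $GF(p)$ with $O(\text{poly}(l))$ monomials, then read each monomial off as an $\ANDG{}$ gate feeding a single $\MODG{p}$ gate. The only difference is cosmetic: the paper absorbs the accepting sets and the top $\BOUNDG{d'}$ gate into one bounded-arity function $GF(p)^{d'}\to GF(p)$ applied to the $d'$ linear sums, whereas you interpolate the indicators and the Boolean top gate separately. One small detail: the constant monomial has no inputs to pad with, so it is best handled by shifting the accepting set.
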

\begin{proof}
Note that from every  $\MODG{p} \circ \ANDG{d}$ subcircuit we can create a polynomial over $GF(p)$ by replacing ${\ANDG{d}}$ by multiplication of corresponding inputs/variables and then by summing such created monomials. Let $\po w_1, \ldots, \po w_{d'}$ be all such created polynomials. Now notice that the value of our initial circuit $C$ evaluated on input $\ov b$ can be reconstructed from values $\po w_1(\overline{b}), \ldots \po w_{d'}(\overline{b})$. So there is a constant arity function $\po d$ such that $C(\overline{b}) = \po d(\po w_1(\overline{b}), \ldots, \po w_{d'}(\overline{b})))$. Note that $\po d$ can be represented by a polynomial over $GF(p)$ of degree $d'p$. So overall if we distribute multiplications in \\$\po d(\po w_1(\overline{b}), \ldots, \po w_{d'}(\overline{b})))$ we get a polynomial in a sparse form which is a sum of at most $l^{d'p}$ monomials of degree at most $dd'p$ over variables $\overline{b}$. So now if we replace multiplications by $\ANDG{dd'p}$ gate and addtion modulo $p$ by $\MODG{p}$ gate we get the desired circuit.
\end{proof}

\begin{proof}[Proof of \cref{lem-nilp}]
By \cite{fm} (Theorem 7.1) we know that $\m A$ has (up to isomorphism) the following form:
\begin{itemize}
    \item The universe $A$ is a direct product $M \times B$, where $M$ is an underlying set of a simple module $\m M$ and $B = A/\alpha$. For each $a \in A$ we write $a^{M}$ to the projection of $a$ to first coordinate and $a^{B}$ to the projection of $a$ to the second one. Similarly if $S \subseteq A$ we write $S^{M}$ or $S^{B}$ for the set of the projection of elements of $S$ to the respective coordinate.
    \item Every $k$-ary polynomial operation $\po p$ of $\m A$ has the form
    \begin{equation}\label{form-corr}
        \po p(\overline{a}) = (\po p^{\m M}(\overline{a}^M) +  \mathbf{\hat{p}}(\overline{a}^B), \po p^{\m B}(\overline{a}^B) )
    \end{equation}
    where $\po p^{\m M}$ is a polynomial of a module $\m M$, $\po p^{\m B}$ is a polynomial of an algebra $\m A/\alpha$ and $\hat{\po p}$ is a function from $B^k \mapsto M$ which we call a distortion function of $\po p$.
\end{itemize}

We can write even more detailed formula for how the polynomial $\po p$ can be expressed on its coordinates in terms of the basic operations its built of. Let $G$ be an output gate of $\po p$. Now by unwinding the formula \ref{form-corr} alongside the circuit definition of $\po p$ we can see that $\po p(\overline{a})^{M}$ is equal to
\[ \po p^{\po M}(\overline{a}^M) + \sum_{H \in U(G)} \beta_H \cdot \hat{\po f}_H((\po p^1_H(\overline{a}))^{M}, \ldots, (\po p^{\ar{f_H}}_H(\overline{a}))^{M}) 
\]
where $\beta_H$ are endomorphisms of $\po M$ and $\po p^1_H, \ldots, \po p^{\ar{f_H}}_H$ are subcircuits that are fed as inputs to a gate $H$, $U(G)$ refers to all inputs of the circuit $\po p$, and $\hat{\po f}_H$ is the basic operation of corresponding to gate $H$.

Note that to prove the Lemma it is enough to assume that set $S$ of the program $\progb{\po p}{\iota}{S}$ is a one element set. Indeed if $\progb{\po p}{\iota}{\{s\}}$ has the desired ${\MODG{p}}\circ{\ANDG{d}}\circ{\PROGG{\m A/\alpha}}$ circuit for $s \in S$, then by \cref{lem:anddmodp} also $\progb{\po p}{\iota}{S}$ (with a little bigger $d$), because alternative over $s \in S$ is a bounded arity function.

Now notice that $\progb{\po p}{\iota}{S}(\overline{b}) = 1$ iff its quotient program satisfies $\progb{\po p}{\iota/\alpha}{S/\alpha}(\overline{b}) = 1 $ and the value $\prog{ 
\po p}{\iota}(\overline{b})^{M} = \po p^{\m M}(\iota(\overline{b})^M) +  \mathbf{\hat{p}}(\iota(\overline{b})^B)$ belongs to the set $S^{M}$ (since $S$ is a one element set). Here $\iota/\alpha$ is just $\iota$ projected in the image to values in $\m A/\alpha$. Obviously the function 
$\progb{\po p}{\iota/\alpha}{S/\alpha}$ is computed (by definition) by $\PROGG{\m A/\alpha}$ circuit so if we prove that the Boolean evaluation of the statement  $\prog{ 
\po p}{\iota}(\overline{b})^{M} \in S^{M}$ can be computed by  $\circuitfour{\ANDG{d'}}{\MODG{p}}{\ANDG{d}}{\PROGG{\m A/\alpha}}$ circuit we will be done, because then we can express the conjunction of two conditions with $\circuitfour{\ANDG{d'+1}}{\MODG{p}}{\ANDG{d}}{\PROGG{\m A/\alpha}}$ circuit and use \cref{lem:anddmodp} to get read of $\ANDG{d'+1}$ layer. Thus now we focus on the expression $\po p^{\m M}(\overline{a}^M) +  \mathbf{\hat{p}}(\overline{a}^B)$.

Note that $M$ is an underlying set of a finite simple module. Hence it can be identified with an underlying set of an Abelian group $\mathbb{Z}_{p}^{\nu}$, where $p$ is a prime and $\nu$ is a natural number. Moreover every polynomial operation $\po p^M(x_1, \ldots, x_n)$ can be written in a form $\beta_1 x_1 + \ldots + \beta_n x_n + c$, where each $\beta_i$ is an endomorphism of an Abelian group $\mathbb{Z}_{p}^{\nu}$ and $c\in \mathbb{Z}_{p}^{\nu}$.

Now let $\pi_1, \ldots, \pi_\nu$ be projections to the respective coordinates of $\mathbb{Z}_{p}^\nu$. Note that the condition $\prog{ 
\po p}{\iota}(\overline{b})^{M} \in S^{M}$ is a conjunction of the projection conditions $\pi_i \prog{ 
\po p}{\iota}(\overline{b})^{M} \in \pi_i S^{M}$. So if we prove that the Boolean evaluation of the formula $\pi_i \prog{ 
\po p}{\iota}(\overline{b})^{M} \in \pi_i S^{M}$ can be computed by (polynomial size) ${\MODG{p}}\circ{\ANDG{d}}\circ{\PROGG{\m A/\alpha}}$ circuit it follows that $\prog{ 
\po p}{\iota}(\overline{b})^{M} \in S^{M}$ itself is computed by (polynomial size) $\circuitfour{\ANDG{\nu}}{\MODG{p}}{\ANDG{d}}{\PROGG{\m A/\alpha}}$ circuit so by \cref{lem:anddmodp}  also by $\circuit{\MODG{p}}{\ANDG{d}}{\PROGG{\m A/\alpha}}$ circuit (with possible bigger $d$). 

Now notice that the projection $\pi_i$ is linear map from the abelian group $(Z_p^{\nu}, +)$ to $Z_p$. What follows is that in the formula  \[
\pi_i\po p(\overline{a})^{M} = \pi_i\po p^{\po M}(\overline{a}^M) + \sum_{H \in U(G)} \pi_i\beta_H \cdot \hat{\po f}_H((\po p^1_H(\overline{a}))^{B}, \ldots, (\po p^{\ar{f_H}}_H(\overline{a}))^{B}) 
\]
the $\pi_i\po p^{\po M}(\overline{a}^M)$ term is just an affine $Z_p$-combination of the variables $\pi_i a_j^{M}$ (for all $1 \leq i \leq \nu$ and $1 \leq j \leq n$). It is because $\po p(\overline{a})^{M}$ is an operation of a module $M$ with the underlying abelian group being $Z_p^{\nu}$. Secondly the $\pi_i \beta_H$ is also a linear map from $(Z_p^{\nu}, +)$ to $Z_p$. So overall 
\[
\pi_i\po p(\overline{a})^{M} = \sum_{i=1}^l \gamma_i y_i
\]
where $\gamma_i$ are coefficients from $Z_p$ and each $y_i$ is either $\pi_i a_j^{M}$ or\\ $\pi_i \hat{\po f}_H((\po p^1_H(\overline{a}))^{B}, \ldots, (\po p^{\ar{f_H}}_H(\overline{a}))^{B})$.  So after applying the instruction $\iota$ we have

\[
\pi_i\prog{ 
\po p}{\iota}(\overline{b})^{M} = \sum_{i=1}^l \gamma_i y_i'
\]
where each  $y_i'$ is either $\pi_i \iota_j(b_j)^{M}$ or\\ $\pi_i \hat{\po f}_H(\po p^1_H(\iota(\overline{b}))^{B}, \ldots, (\po p^{\ar{f_H}}_H(\iota(\overline{b})))^{B})$

Now notice that in the former case the value 
\[
\pi_i \hat{\po f}_H(\po p^1_H(\iota(\overline{b}))^{B}, \ldots, (\po p^{\ar{f_H}}_H(\iota(\overline{b})))^{B})
\]
is fully determined by the value of the quotient programs $\progb{\po p^j_H}{\iota/\alpha}{c}$, for all $c\in B$, where $\po p^1_H$ is treated as a polynomial of $\m A/\alpha$. In other words there is a function $\po q^j_H(y_{1,1}, \ldots, y_{\ar{f_h}, |B|})$ of type\break $\{0,1\}^{|B|\cdot\ar{f_h}} \longrightarrow Z_p$ such that
\begin{align*}
\pi_i \hat{\po f}_H(&\po p^1_H(\iota(\overline{b}))^{B}, \ldots, (\po p^{\ar{f_H}}_H(\iota(\overline{b})))^{B}) = \\ = &\po q^j_H(\progb{\po p^1_H}{\iota/\alpha}{c_1}(\overline{b}), \ldots, \progb{\po p^{\ar{f_h}}_H}{\iota/\alpha}{c_{|B|}}(\overline{b}))
\end{align*}
where $c_1, \ldots c_{|H|}$ traverse all values of $B$. Now $\po q^j_H$ has arity $\ar{f_h} \cdot |B|$ and operates on Boolean values $\{0,1\}$ so it can be represented by a bounded degree polynomial over $Z_p$ of degree $\ar{f_h} \cdot |B|$. Moreover the function given by the instruction $\iota_j(b_j)$  maps Boolean values $\{0,1\}$ into $Z_p$, so it can be exactly interpolated by an affine transformation. So $\pi_i\prog{ 
\po p}{\iota}(\overline{b})^{M}$ can be expressed as a polynomial of degree $\ar{f_h} \cdot |B|$ which is evaluated on both variables $b_j$ and quotient programs $\progb{\po p^j_H}{\iota/\alpha}{c}$. Moreover the condition $\pi_i\prog{ 
\po p}{\iota}(\overline{b})^{M} \in S_m = \{s_m\}$ is satisfied iff.\! 
\[
0 = \Big(\pi_i\prog{ 
\po p}{\iota}(\overline{b})^{M} - s_m\Big).
\]
Hence the condition $\pi_i\prog{ 
\po p}{\iota}(\overline{b})^{M} \in S_m$  might be represented with a polynomial of degree $\ar{f_h} \cdot |B| \leq |A| \cdot \ar{f_h}$ evaluated on variables $b_j$ and programs $\progb{\po p^j_H}{\iota/\alpha}{c}$. Now we can see that
\begin{enumerate}
    \item $\progb{\po p^j_H}{\iota/\alpha}{c}$ is in $\PROGG{A/\alpha}$
    \item multiplication in every monomial can be represented with $\ANDG{|A| \cdot \ar{f_h}}$ gate
    \item addition in the polynomial can be represented with one $\MODG{p}$ gate.
\end{enumerate}
So overall the condition $\pi_i\prog{ 
\po p}{\iota}(\overline{b})^{(M)} \in S_m$ can be evaluated by $\circuit{\MODG{p}}{\ANDG{|A| \cdot \ar{f_h}}}{\PROGG{A/\alpha}}$ circuit. Since $\ar{f_H}$ is arity of a basic operation of $\m A$, then it is just a constant dependent on $\m A$. This finishes the proof.
\end{proof}

\section{Proof of \cref{ex-type-5}}

Note that since $\po e_1$, $\po e_2$, $\po e_3$ are basic operations of algebras $\m A_1$, $\m A_2$, $\m A_3$, $\m A_4$ these algebras are simple. Moreover, the only mappings $\iota$ for which NuDFA's over $\m A_i$'s are able to recognize interesting language are this for which $\bot\not\in\iota(\set{0,1})$. It is because circuits over above algebras has values $\bot$ whenever even one of their input gate has this value. $\palg{\m A_1|_{\set{0,1}}}\subseteq\palg{\m A_i}$ give us that  $\MON\subseteq\palg{\m A_2}$ and $\palg{\m A_3}=\ppoly$. On the other hand basic operations of $\m A_1$ and $\m A_2$ preserve total order $\bot<0<1$ and hence $\palg{\m A_1}$ and $\palg{\m A_2}$ are contained in $\MMON$. To see that $\MON\nsubseteq\palg{\m A_1}$, note that $\po e_i(x\wedge y)=\po e_i(x)\wedge\po e_i(y)$ and $\wedge$ is just a minim operation on the set $\bot<0<1$. Therefore, every circuit over $\m A_1$ can be expressed as a minimum of unary  monotone functions applied to the input gates in which value of each input gate is used at most once. Now, it is not hard to see that language recognized by the sequence of monotone CNF formulas $c_n=\bigwedge_{i=1,\ldots, n}(x_{2\cdot i}\vee x_{2\cdot i +1})$ is not in $\palg{\m A_1}$. Finally, observe that $\po e_2(x)=x+x+1$, $\po e_3(x)=x+x$, $\po e_1(1)=0$ and $\po e_1(x_1+x_2+\ldots+x3)\not=\bot$ iff $x_1+x_2+\ldots+x3=1$. These facts give us that $\po e_1(x+\po e_1(y))=\po e_1(x)+\po e_1(y)$ and hence that every circuit over $\m A_4$ can be expressed in the form $\sum_{i\in I}x_i+\sum_{j\in J}f_j(\sum_{i\in I_j}x_i)$. Now it is obvious that  $P_{\m A_4}\subseteq \mathsf{AC}_k[2]$ for some small constant $k$.
\end{document}